\documentclass[10pt]{amsart}
\usepackage{graphicx}
\usepackage{amssymb}
\usepackage{amsfonts}
\usepackage{amsmath}
\usepackage{amsthm}


\oddsidemargin0.3cm
\evensidemargin0.3cm
\textwidth16cm
\textheight 21 cm
\topmargin=0cm

\usepackage[all]{xy}

\newcommand{\cD}{\mathcal{D}}
\newcommand{\cE}{\mathcal{E}}
\newcommand{\cF}{\mathcal{F}}
\newcommand{\cG}{\mathcal{G}}

\newcommand{\cI}{\mathcal{I}}

\newcommand{\cP}{\mathcal{P}}

\newcommand{\cT}{\mathcal{T}}


\newcommand{\Om}{{\Omega}}

\newcommand{\ve}{{\varepsilon}}
\newcommand{\del}{{\delta}}

\newcommand{\gam}{{\gamma}}
\newcommand{\Gam}{{\Gamma}}

\newcommand{\sig}{{\sigma}}
\newcommand{\al}{{\alpha}}
\newcommand{\be}{{\beta}}
\newcommand{\ka}{{\kappa}}
\newcommand{\la}{{\lambda}}

\newcommand{\eps}{{\epsilon}}

\newcommand{\bbN}{{\mathbb N}}

\newcommand{\bbR}{{\mathbb R}}

\newcommand{\bbI}{{\mathbb I}}

\newcommand{\bfC}{{\bf C}}

\newcommand{\bfL}{{\bf L}}
\newcommand{\bfA}{{\bf A}}

\newcommand{\bfP}{{\bf P}}
\newcommand{\bfE}{{\bf E}}

\newcommand{\bfJ}{{\bf J}}

\swapnumbers
\theoremstyle{plain}
\newtheorem{stam}{STAM}[section]
\newtheorem{lem}[stam]{Lemma}
\newtheorem{thm}[stam]{Theorem}
\newtheorem{prop}[stam]{Proposition}
\newtheorem{cor}[stam]{Corollary}
\newtheorem*{lem*}{Lemma}
\newtheorem*{thm*}{Theorem}
\newtheorem*{prop*}{Proposition}
\newtheorem*{cor*}{Corollary}

\theoremstyle{definition}

\newtheorem*{definition*}{Definition}

\theoremstyle{remark}
\newtheorem{rem}[stam]{\textbf{Remark}}
\numberwithin{equation}{section}

\begin{document}
\title[]{Error estimates for binomial approximations of game put options}%
 \vskip 0.1cm
 \author{Yonatan Iron and Yuri Kifer\\
\vskip 0.1cm
Institute of Mathematics\\
Hebrew University\\
Jerusalem, Israel}%
\address{
 Institute of Mathematics, The Hebrew University, Jerusalem 91904, Israel}%
\email{yonatani@final.co.il,  kifer@math.huji.ac.il}%

\thanks{Partially supported by the ISF grant no. 82/10}
\subjclass[2000]{Primary 91B28:  Secondary: 60G40, 91B30 }%
\keywords{ game options, put options, binomial approximations, Dynkin games }%

 \date{\today}
\begin{abstract}\noindent
 A game or Israeli option is an American style option where both the writer
 and the holder have the right to terminate the contract before the expiration
  time. As ~\cite{Ki} shows the fair price for this option can be expressed as
  the value of a Dynkin game. In general, there are no explicit formulas for
  fair prices of American and game options and approximations are used for
  their computations. The paper  ~\cite{L} provides error estimates for 
  binomial approximation of American put options and here we extend
  the approach of ~\cite{L} in order to obtain error estimates for binomial
  approximations of game put options which is more complicated as it requires
  us to deal with two free boundaries corresponding to the writer and to the
  holder of the game option.
\end{abstract}
\maketitle
\markboth{Y.Iron and Y.Kifer}{Approximations of game put options}
\renewcommand{\theequation}{\arabic{section}.\arabic{equation}}
\pagenumbering{arabic}

\section{Introduction}\label{sec1}

A put option on a stock can be interpreted as a contract between a holder and
a writer which allows the former to claim from the latter at an exercise time
$t$ the amount $(K-S_t)^+$ where $K$ is a fixed amount called the option's 
strike, $S_t$ is the stock price at time $t$ and $(x)^+ = max(x,0)$. 
 In the American options case its holder has the right to choose any exercise
 time before the contract matures while in the game options case the contract 
  writer also has the right to terminate it at any time before its maturity
  but then he is required to pay a cancellation fee in addition to the payoff
  above.

The fair price of American options and of game options is defined as the 
minimal amount the writer needs to construct a self-financing portfolio 
which covers his obligation to pay according to the option's contract.
It is well known that in the American options case the fair price can be
obtained as a value of an appropriate optimal stopping problem while
for game options we have to deal with an optimal stopping (Dynkin) game
(see \cite{Ki}). 
In general, both for American options and, even more so, for game options
with finite maturity
explicit formulas for their price are not available and approximation methods
come into the picture while estimates of their errors become important. One
of most easily implemented methods is the binomial approximation of stock
prices modelled by the geometric Brownian motion and ~\cite{L} provided
corresponding error estimates for American put options. In the present 
paper we extend this approach in order to provide error estimates of binomial
approximations for game put options. We observe that for perpetual game
options some explicit formulas can be obtained (see \cite{Ky}) but the
finite maturity case studied here seems to be more realistic.

Approximating the Brownian motion by appropriately normalized sums of 
Bernoulli random variables the paper \cite{L} provided (error) estimates
const$\cdot n^{-3/4}$ and const$\cdot n^{-2/3}$ for the difference between the
price of an American put option and the price of its corresponding 
 $n$th binomial model approximation.
Using again the binomial approximation of the Brownian motion as above
we construct in this paper two approximating procedures such that the
difference between the price of a game put option and its $n$th approximation 
in the first procedure is between const$\cdot n^{-3/4}$ and 
const$\cdot n^{-1/2}$ and in the second procedure is between 
const$\cdot n^{-1/2}$ and const$\cdot n^{-2/3}$.
The error estimates here are somewhat worse than in the case of American
put options which is due to the lack of a smooth fit on the boundary of the
writer's stopping region which causes substantial difficulties in the study
of regularity of payoff functions. 

We observe that specific properties of game put options had to be used in 
order to obtain error estimates with the above precision. For instance,
when payoffs are path dependent (and not only dependent on the present
value of the stock) \cite{K2} provides error estimates of similar binomial
approximations only of order $n^{-1/4}(\ln n)^{3/4}$. Since price functions
of game options can be represented as solutions of doubly reflected backward
stochastic differential equations the results of \cite{CJ} are also related
to game options approximations. Nevertheless, approximations in \cite{CJ}
are not by binomial models, where computations can be done by means of the
effective
dynamical programming algorithm (see \cite{K2}), but by time discretizations,
 and so relevant probability space and $\sig$-algebras remain infinite which
  prevents effective computations. Furthermore, error estimates in \cite{CJ}
  applied to our situation are of order $n^{-1/4}$, i.e. they are worse than
  for binomial approximations which we construct here for the specific case 
  of game put options.

Our exposition proceeds as follows.
In Section \ref{sec2} we provide basic results concerning game put option
price functions, introduce our approximation processes and formulate 
our main result Theorem \ref{main_theorem}.
In Section \ref{sec3} we show that the price function can be represented 
as a solution of a variational inequality
problem closely related to the Stefan problem (see ~\cite{KSt}). 
We then use this representation to study regularity properties of the 
price function near the free boundary of the option's holder exercise region.
In Section \ref{sec4} we study the price function near the boundary of the
exercise region of the writer. We use the information about this region from
 ~\cite{Ku} in order to represent the price function as an explicit solution 
of the heat equation. This representation enables us to understand better
the behavior of the price function near the boundary. We estimate also 
 the rate of decay of the price function when the initial stock price tends
to infinity. Section \ref{sec5} is devoted to the proof of Theorem 
\ref{main_theorem}.
Finally, in Section \ref{sec6} we exhibit some computations of the price 
functions and of the free boundaries.

\section{Preliminaries and main results}\label{sec2}

The Black--Scholes (BS) model of a financial market consists of two assets 
among which one is nonrisky and the other one is risky. A nonrisky asset is
called a bond and its price $B_t$ at time $t$ is given by the formula
$B_t=B_0e^{rt}$ where $r$ is interpreted as the interest rate. A risky
asset is called a stock and its price at time $t$ is determined by a
geometric Brownian motion
\begin{equation}\label{2.-5}
S_t=S_0\exp((r-\frac {\ka^2}2)t+\ka W_t)
\end{equation}
where $\ka>0$ is called volatility and $W_t,\, t\geq 0$ is a standard Brownian
 motion defined on a complete probability space $(\Om,\cF,\bfP)$. If $S_0=x$
 we write also $S_t^x$ for $S_t$. The fair price of an American put option
 at time $t$ with a strike (price) $K$ and a maturity (horizon) time $T<\infty$
 can now be written as a function $F_A(t,S_t)$ of time and the current stock
 price having the form (see, for instance, \cite{KS}),
 \begin{equation}\label{2.-4}
 F_A(t,x)=\sup_{\tau\in\cT_{0,T-t}}\bfE\exp(-r\tau)\big(K-x\exp((r-
 \frac {\ka^2}2)\tau+\ka W_\tau)\big)^+
 \end{equation}
 where $\cT_{0,T-t}$ denotes the set of all stopping times of the Brownian 
 filtration with values in the interval $[0,T-t]$ and $\bfE$ is the expectation
  with respect to the measure $\bfP$. If we set $\psi(x)=(K-e^x)^+$, 
  $P_A(t,x)=F_A(t,e^x)$ and $\mu=r-\frac {\ka^2}2$ then we can rewrite 
  (\ref{2.-4}) in the form 
  \begin{equation}\label{2.-3}
   P_A(t,x)=\sup_{\tau\in\cT_{0,T-t}}\bfE\exp(-r\tau)\psi(x+\mu\tau+\ka W_\tau).
   \end{equation}
   
   Relying on \cite{Ki} (see also \cite{Ky}, \cite{KK} and \cite{Ku}) we can 
   also write the fair price of a game put option at time $t$ with a strike 
   price $K$, a
   maturity time $T$ and a constant penalty $\del>0$ as a function 
   $F(t,S_t)$ of time and the current stock price in the form
   \begin{equation}\label{2.-2}
   F(t,x)=\inf_{\sig\in\cT_{0,T-t}}\sup_{\tau\in\cT_{0,T-t}}\bfE\exp(-r\sig
   \wedge\tau)R(\sig,\tau)
   \end{equation}
   where $R(s,t)=(K-S^x_t)^++\del\bbI_{s<t}$ and $\bbI_Q$ is the indicator
   of an event $Q$. Using the functions $P(t,x)=F_A(t,e^x)$ and $\psi$ as above
   we can rewrite this formula in the form 
   \begin{equation}\label{P(x,t)}
   P(t,x)=\inf_{\sig\in\cT_{0,T-t}}\sup_{\tau\in\cT_{0,T-t}}\bfE\exp(-r\sig
   \wedge\tau)\big(\psi(x+\mu\sig\wedge\tau+\ka W_{\sig\wedge\tau})
   +\del\bbI_{\sig<\tau}\big).
   \end{equation}
   It follows also (see \cite{LM}, \cite{Ki}, \cite{Ku}, \cite{KK}) that the 
   saddle point (optimal) stopping
   times for the game value expressions (\ref{2.-2}) and (\ref{P(x,t)}) are
   given by
    \begin{eqnarray}\label{2.0}
    &\sig^*=\inf\{ s<T-t:\, F(t+s,S^x_s)=(K-S^x_s)^++\del\}\wedge T\quad
    \mbox{and}\\
    &\tau^*=\inf\{ s<T-t:\, F(t+s,S^x_s)=(K-S^x_s)^+\}\wedge T.\nonumber
    \end{eqnarray}
    
    Next, we introduce our binomial approximations of the Brownian motion
    \begin{equation*}
    W^{(n)}_t=\frac {\sqrt T}{\sqrt n}\sum_{k=1}^{[nt/T]}\eps_k,\,\,\, 
    t\in[0,T],\,\, n=1,2,...
    \end{equation*}
    where $\eps_k,\, k=1,2,...$ are independent indentically distributed (i.i.d.)
    random variables taking on values 1 and -1 with probability $1/2$ 
    and $[a]$ denotes the integral part of a number $a$. It is
    convenient to view $\{\eps_k\}^\infty_{k=1}$ as defined on the sequence
    space $\Om_\eps=\{ -1,1\}^\bbN=\{\xi=(\xi_1,\xi_2,...):\,\xi_i=\pm 1\}$
    by the formula $\eps_k(\xi)=\xi_k$ if $\xi=(\xi_1,\xi_2,...)$. Then
    $W_t^{(m)}$ will be defined on the probability space $(\Om_\eps,\cF_\eps,
    \cP_\eps)$ where $\cP_\eps=\{\frac 12,\frac 12\}^\bbN$ is the product 
    measure and $\cF_\eps$ is generated by cylinder sets.
    
    Now set $\del^*=F_A(0,K)$ which is the price of the American put option
    with a maturity $T$ and a strike $K$ provided the initial stock price
    is $K$. It is easy to see that if the penalty $\del\geq
    \del^*$ then it does not make sense for the writer to cancel the 
    corresponding game put option (see Lemma 3.1 in \cite{KK}), 
    and so in this case the prices of American
    and game options are the same, i.e. $F_A(0,K)=F(0,K)$. Since approximations 
    of American options were studied in \cite{L} we assume in this paper that
    $\del<\del^*$. Observe that $F_A(t,K)$ is continuous in $t$ and it is
    strictly decreasing to 0 as $t$
    increases to $T$, and so for each $\del\in[0,\del^*]$ there exists a
    unique $t_\del<T$ such that $F_A(t_\del,K)=\del$. Furthermore, we can
    define $k_n$ to be the minimal $k\in\bbN$ such that $\del\geq F_A(Tk/n,K)$
    and set $\beta^{(n)}=\frac {Tk_n}n$.
    In order to define two sequences of functions $P_1^{(n)}$ and $P^{(n)}_2,
    \, n=1,2,...$ which will approximate $P(0,x)$ we set $X_t^{(n)}=x+
    \ka W^{(n)}_t$, $h=\frac Tn$ and introduce stopping times
    \begin{equation}\label{sigma^n}
\sigma^{(n)}(s)=\inf\{ t\in[0, s):\, \ln K-|\mu|h-2\kappa\sqrt{h}
<\mu h[\frac th]+X^{(n)}_{t}<\ln K+|\mu|h+2\kappa\sqrt{h}\}\wedge T
\end{equation}
where $\sig^{(n)}=T$ if the infimum above is taken over the empty set and
we set $\sig^{(n)}=\sig^{(n)}(\beta^{(n)})$. 
Introduce a filtration
$\{\cG_t=\cF_{[t/h]},\, t\geq 0\}$ where $\cF_0$ is the trivial $\sig$-algebra
and $\cF_k$ is generated by $\eps_1,...,\eps_k$. Denote by $\cT^{(n)}$ the set
of all stopping times with respect to the filtration $\{\cG_t\}$ taking on value
 in the set $\{ kh,\, k=0,1,...,n\}$. Then, clearly, $\sig^{(n)}\in\cT^{(n)}$.
Now, for $x\leq\ln K$ we define
\begin{equation}\label{P^n_1<}
P^{(n)}_1(s,x)=
\sup_{\tau\in\cT^{(n)}}\bfE\big (e^{-r\sigma^{(n)}(s)\wedge \tau}\big(\psi(\mu 
\tau+X^{(n)}_{\tau})\bbI_{\{\tau\leq\sigma^{(n)}(s)\}}+
\delta\bbI_{\{\sigma^{(n)}(s)<\tau\}} \big)\big)
\end{equation}
and for $x>\ln K$ we set
\begin{equation}\label{P^n1>}
P^{(n)}_1(s,x)=\sup_{\tau\in\cT^{(n)}}\bfE\big (e^{-r\sigma^{(n)}(s)\wedge \tau}
\big(\psi(\mu \tau+X^{(n)}_{\tau})\bbI_{\{\tau\leq\sigma^{(n)}(s)\}}+
\big(\delta-Ke^{(|\mu|\sqrt{h}+\kappa{h})}\big)\bbI_{\{\sigma^{(n)}(s)
<\tau\}} \big)].
\end{equation}
The second approximation function is defined for all $x$ by
\begin{equation}\label{P^n2}
P^{(n)}_2(s,x)=\sup_{\tau\in\cT^{(n)}}\bfE\big (e^{-r\sigma^{(n)}(s)\wedge \tau}
\big(\psi(\mu \tau+X^{(n)}_{\tau})\bbI_{\{\tau\leq\sigma^{(n)}(s)\}}+
(\psi(\mu \sigma^{(n)}(s)+ X^{(n)}_{\sigma^{(n)}(s)})+\delta)
\bbI_{\{\sigma^{(n)}<\tau\}}\big)].
\end{equation}
Setting $P_i^{(n)}(x)=P_i^{(n)}(\be^{(n)},x)$ we formulate now our main result.
\begin{thm}\label{main_theorem}
For each $x$ there exists $C=C(x)$ such that for all $n=1,2,...$,
\begin{equation}\label{P1_err}
-\frac {C}{n^{1/2}}\leq P^{(n)}_1(x)-P(0,x)\leq \frac {C}{n^{3/4}}\,\,\,\,\mbox
{and}\,\,\,\,- \frac {C}{n^{2/3}}\leq P^{(n)}_2(x)-P(0,x)\leq \frac {C}{n^{1/2}}.
\end{equation}
\end{thm}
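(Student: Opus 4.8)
The statement is really four one-sided estimates; the plan is to prove each by exhibiting a nearly optimal strategy for one player in the discrete game (\ref{P^n_1<})--(\ref{P^n2}) and comparing it, through a coupling of $W^{(n)}$ with $W$, with the continuous Dynkin game (\ref{P(x,t)}).

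\textbf{Regularity inputs.} First I would record what Sections \ref{sec3}--\ref{sec4} supply about $P$: that $P(t,\cdot)$ is Lipschitz uniformly in $t$, $P(\cdot,x)$ is H\"older-$\tfrac12$, and $P(t,x)$ together with $\psi'(x)-\partial_xP(t,x)$ is exponentially small as $x\to+\infty$; that there is smooth fit at the holder's free boundary, $0\le P-\psi\le C\,\mathrm{dist}(\cdot,\{P=\psi\})^2$ with $\partial_xP$ continuous across it; and that, by \cite{Ku} and the heat-equation representation of Section \ref{sec4}, the writer's exercise region is (essentially) the single curve $\{x=\ln K,\ t\le t_\delta\}$ rather than a two-dimensional region, with \emph{no} smooth fit there ($\partial_xP(t,\cdot)$ has a jump at $\ln K$, and $0\le(\psi+\delta)-P\le C|x-\ln K|$ vanishes only linearly). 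I would also use the variational-inequality decomposition $e^{-rt}P(t,X_t)=M_t+A_t$, with $M$ a martingale and $A$ nonincreasing ($A$ is $0$ in the continuation region, picks up a nonpositive drift on $\{P=\psi\}$, and a nonpositive local-time term at $\ln K$).

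\textbf{Coupling and reduction.} Next I would fix a Skorokhod embedding: stopping times $0=T_0\le T_1\le\cdots$ of the Brownian filtration with $W_{T_k}=W^{(n)}_{kh}$, $\bfE T_k=kh$ and $\bfE\max_{k\le n}|T_k-kh|\le Ch^{1/2}$, so that every $\tau\in\cT^{(n)}$ lifts to $\tau':=T_{\tau/h}$ with $X^{(n)}_\tau=X_{\tau'}$. Under this lift, $\sigma^{(n)}$ of (\ref{sigma^n}) is exactly the discrete surrogate for the writer's optimal rule ``cancel the first time $S$ reaches $K$, provided $t<t_\delta$'': the band of width $|\mu|h+2\kappa\sqrt h$ about $\ln K$ is what the $\pm\kappa\sqrt h$-step walk must enter when it crosses the level $\ln K$ (the $|\mu|h$ absorbing the drift $\mu h$ and the $[t/h]$-flooring), and $\beta^{(n)}\downarrow t_\delta$ is the matching time cutoff. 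Hence $P^{(n)}_i(\beta^{(n)},x)$ is the value of a continuous game in which the holder is restricted to the grid $\{T_k\}$ and the writer is forced to use $(\sigma^{(n)})'$, and in which the cancellation payoff delivered is the true $\psi(X_{(\sigma^{(n)})'})+\delta$ in (\ref{P^n2}) but the flat $\delta$ (resp. $\delta-Ke^{|\mu|\sqrt h+\kappa h}$) in (\ref{P^n_1<})--(\ref{P^n1>}) -- which is exactly what the continuous optimal writer pays, since it cancels at $S=K$ where $\psi=0$.

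\textbf{The four bounds.} For any lifted holder time $\tau'$, the obstacle inequalities and the decomposition above give an identity expressing the difference between the expected discounted payoff and $P(0,x)$ as a sum of: the nonpositive increment $\bfE[A_{(\sigma^{(n)})'\wedge\tau'}]$; a nonpositive ``holder-gap'' term $-\bfE[e^{-r\tau'}(P-\psi)(\tau',X_{\tau'})\bbI_{\tau'\le(\sigma^{(n)})'}]$; a nonnegative ``writer-gap'' term $\bfE[e^{-r(\sigma^{(n)})'}((\psi+\delta)-P)((\sigma^{(n)})',X_{(\sigma^{(n)})'})\bbI_{(\sigma^{(n)})'<\tau'}]$; an $O(h^{1/2})$ coupling error; and, for scheme $1$ only, a nonpositive cancellation-payoff deficit. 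For the \emph{upper} bounds I discard the two nonpositive terms, so the only essential positive contribution is the writer-gap term; as $(\sigma^{(n)})'$ cancels at distance $O(\sqrt h)$ from $\ln K$ where the gap is only linear, this is $O(\sqrt h)$, which gives the $n^{-1/2}$ bound for scheme $2$ (equation (\ref{P^n2})); for scheme $1$ the flat payoff is calibrated so that in this direction its first-order $O(\sqrt h)$ discrepancies at $\ln K$ offset the writer-gap term, leaving only Lamberton's holder-side rate -- coming from the steps that fall into an $O(\sqrt h)$-layer of the holder's free boundary, estimated through the quadratic vanishing there -- namely $n^{-3/4}$. For the \emph{lower} bounds I feed in the grid time $\hat\tau:=h\min\{k\ge0:\ T_k\ge\tau^*\}\wedge T$ obtained by lifting and rounding up the continuous optimal $\tau^*$ of (\ref{2.0}); then the holder-gap is $O(\sqrt h)$ by the quadratic vanishing, while for scheme $1$ the flat-payoff deficit is now a genuine loss of order $\sqrt h$, giving $-n^{-1/2}$, and for scheme $2$, whose cancellation payoff is honest, a sharper Lamberton-type estimate of the occupation of $X$ in an $O(\sqrt h)$-layer of $\{x=\ln K\}$ weighed against the linear gap, optimized over the layer width -- using the heat-equation representation of Section \ref{sec4} and the behaviour of $F_A(\cdot,K)$ near $t_\delta$ -- yields the rate $-n^{-2/3}$ for a boundary of that kinked regularity. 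Assembling the four one-sided inequalities gives (\ref{P1_err}), with $C=C(x)$ local, built from the constants above and from $\bfE\max_{t\le T}|X^{(n)}_t|$-bounds.

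\textbf{Main obstacle.} The crux throughout is the writer's free boundary. Because smooth fit fails on $\{x=\ln K\}$, the unavoidable $O(\sqrt h)$ overshoot of the binomial walk past that level -- which the buffers $|\mu|h+2\kappa\sqrt h$ in (\ref{sigma^n}) and the correction $\delta-Ke^{|\mu|\sqrt h+\kappa h}$ in (\ref{P^n1>}) can only calibrate, not remove -- costs $O(\sqrt h)$ rather than $O(h)$, which caps the favourable exponents at $\tfrac12$ and forces the precise choice of band widths and constants so that all four one-sided constants are finite simultaneously with the stated exponents. Pinning down the linear-gap behaviour of $P$ near $\{x=\ln K\}$ and the $n^{-2/3}$ occupation estimate through the heat-equation/Stefan-type representation is where the real work lies; the Brownian--binomial coupling and the holder-side estimates are adaptations of \cite{L}.
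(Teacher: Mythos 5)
Your architecture is genuinely different from the paper's: you propose a Skorokhod embedding of $W^{(n)}$ into $W$ and then a continuous-time Doob--Meyer/obstacle decomposition of $e^{-rt}P(t,X_t)$, whereas the paper never couples the two models. It works entirely on the discrete space, writes $u(t,X^{(n)}_t)=u(0,x)+M_t+\sum_j\cD u((j-1)h,X^{(n)}_{(j-1)h})$ (Proposition \ref{martingel prop 2.1}), and estimates the sum of $\cD u$ separately over the regions $\textbf{C}$, $\textbf{S}$, $\textbf{B}$ using the integral representation of $\cD u$ (Proposition \ref{dis_operator_prop 2.2}), the Berry--Esseen occupation bound (\ref{Berry-Esseen estimate}), and the $L^1/L^2$ integrability of $\partial^2u/\partial t^2$ and $\partial^2 u/\partial t\partial x$ established in Sections \ref{sec3}--\ref{sec4}. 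The reason the paper (following \cite{L}) avoids coupling is precisely the point at which your argument breaks: in your embedding the time mismatch satisfies $\bfE\max_{k\le n}|T_k-kh|\asymp h^{1/2}=O(n^{-1/2})$, and since the lifted payoff and discount factor differ from the discrete ones through $\mu(\tau'-\tau)$ and $e^{-r\tau'}-e^{-r\tau}$, your blanket ``$O(h^{1/2})$ coupling error'' is an $O(n^{-1/2})$ floor on every one-sided estimate. That is compatible with the two $n^{-1/2}$ bounds but is strictly larger than the claimed $n^{-3/4}$ upper bound for $P_1^{(n)}$ and the $n^{-2/3}$ lower bound for $P_2^{(n)}$; as written, the decomposition cannot yield those exponents, and this is a genuine gap, not a technicality.

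Two further points of disagreement with the paper's mechanism. First, for the upper bound of $P_1^{(n)}$ there is no ``calibrated cancellation'' between the flat payoff and the writer-gap: the paper simply uses the one-sided inequalities $P\ge\delta$ for $x\le\ln K$ (see (\ref{3.4})) and $P(\sigma^{(n)},\cdot)\ge\delta-Ke\lambda$ for $x>\ln K$ (via $F_x(t,K+)>-1$ from \cite{Ku}) to reduce everything to $\sup_\tau\bfE\sum\cD u$, after which the writer's boundary enters only through the integrability of $\partial^2u/\partial t^2$ near $\ln K$ (the $w+v$ decomposition and heat-kernel representation of Section \ref{sec4}), producing $n^{-3/4}$. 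Second, you attribute the $n^{-2/3}$ in the lower bound for $P_2^{(n)}$ to an occupation estimate at the writer's boundary; in the paper that term is nonpositive and discarded (since $P\le\psi+\delta$), and the $n^{-2/3}$ comes from the holder's side near maturity via Lamberton's Lemma \ref{lem 3.8} with the cutoff $\alpha_n=T-n^{-2/3}$, exactly as in the American case. To salvage your scheme you would have to replace the crude $O(h^{1/2})$ coupling bookkeeping by estimates that exploit cancellation in the time-mismatch (effectively reconstructing the paper's $\cD u$ analysis), at which point the coupling buys nothing.
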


Observe that $P_i^{(n)}(x),\, i=1,2$ appearing in Theorem \ref{main_theorem}
is defined via $\be^{(n)}$ and $k_n$ which can be obtained only knowing precise
price function $F_A(t,K)$ of the American put option with the initial stock price 
equal $K$. But from the computational point of view we can obtain $F_A$ only
approximately using, for instance, the algorithm from \cite{L}. One of the ways
to overcome this difficulty is to proceed as follows. Let
$F^{(n)}_A(t,K)$ denotes the $n$-th binomial approximation of $F_A^{(n)}(t,K)$
obtained in \cite{L} which uniformly in $t\in[0,T]$ satisfies
\begin{equation}\label{2.12+}
-\tilde c/n^{2/3}\leq F^{(n)}_A(t,K)-F_A(t,K)\leq\tilde C/n^{3/4}
\end{equation}
for some $\tilde c,\tilde C>0$. Denote by $m_n$ the minimal $m\leq n$ such
that $\del\geq F^{(n)}_A(\frac {Tm}n,K)$ taking $m_n=n$ if this inequality does
not hold true for all $m\leq n$. Set $\gam^{(n)}=\frac {Tm_n}n$ which unlike
$\be^{(n)}$ can be computed employing \cite{L}. It is well known that 
$\frac {\partial F_A(t,K)}{\partial t}$ exists (see, for instance, \cite{L})
and, clearly, this derivative is nonpositive. In fact, it is possible to show
that
\begin{equation}\label{2.13+}
\frac {\partial F_A(t,K)}{\partial t}<0\quad\mbox{and}\quad d=\inf_{0<t<T}
\big\vert\frac {\partial F_A(t,K)}{\partial t}\big\vert >0.
\end{equation}
This together with (\ref{2.12+}) yields that
\begin{equation}\label{2.14+}
-\frac {\tilde C}{dn^{3/4}}\leq\be^{(n)}-\gam^{(n)}\leq
\frac {\tilde c}{dn^{2/3}}.
\end{equation}
From the definitions (\ref{P^n_1<})--(\ref{P^n2}) it follows that for each 
$x>0$ there exists $\tilde {\tilde C}=\tilde {\tilde C}(x)>0$ independent
of $s,\tilde s\in[0,T]$ such that
\begin{equation}\label{2.15+}
|P_i^{(n)}(s,x)-P_i^{(n)}(\tilde s,x)|\leq \tilde {\tilde C}|s-\tilde s|,\,
i=1,2.
\end{equation}
Now we obtain from Theorem \ref{main_theorem} together with (\ref{2.14+})
and (\ref{2.15+}) the following
\begin{cor}\label{cor2.2}
For each $x>0$ there exists $C=C(x)>0$ such that for all $n=1,2,...$,
\begin{equation}\label{2.16+}
-\frac C{n^{1/2}}\leq P_1^{(n)}(\gam^{(n)},x)-P(0,x)\leq\frac C{n^{2/3}}\,\,
\mbox{and}\,\,-\frac C{n^{2/3}}\leq P_1^{(n)}(\gam^{(n)},x)-P(0,x)\leq
\frac C{n^{1/2}}.
\end{equation}
\end{cor}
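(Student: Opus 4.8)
The plan is to deduce Corollary \ref{cor2.2} from Theorem \ref{main_theorem} by replacing $\be^{(n)}$ with the computable quantity $\gam^{(n)}$, controlling the error introduced by this substitution. The starting point is the decomposition
\begin{equation*}
P_1^{(n)}(\gam^{(n)},x)-P(0,x)=\big(P_1^{(n)}(\gam^{(n)},x)-P_1^{(n)}(\be^{(n)},x)\big)+\big(P_1^{(n)}(x)-P(0,x)\big),
\end{equation*}
where $P_1^{(n)}(x)=P_1^{(n)}(\be^{(n)},x)$ by definition. The second summand is controlled directly by the first pair of inequalities in \eqref{P1_err} of Theorem \ref{main_theorem}. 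For the first summand I would invoke the Lipschitz bound \eqref{2.15+}, giving $|P_1^{(n)}(\gam^{(n)},x)-P_1^{(n)}(\be^{(n)},x)|\leq\tilde{\tilde C}|\be^{(n)}-\gam^{(n)}|$, and then plug in \eqref{2.14+} to obtain a two-sided bound of order $n^{-3/4}$ on one side and $n^{-2/3}$ on the other. Adding the two contributions and keeping the dominant (i.e.\ slowest-decaying) term on each side yields the claimed estimates: the lower bound is governed by $\max(n^{-1/2},n^{-3/4})=n^{-1/2}$ from the left inequality in \eqref{P1_err} and the upper bound by $\max(n^{-3/4},n^{-2/3})=n^{-2/3}$, after absorbing all constants into a single $C=C(x)$.

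Before that purely arithmetic step can be carried out, three auxiliary facts feeding into it must be established, and these are where the real work lies. First, the Lipschitz estimate \eqref{2.15+}: from the explicit formulas \eqref{P^n_1<}--\eqref{P^n2} the dependence of $P_i^{(n)}(s,x)$ on $s$ enters only through the stopping time $\sig^{(n)}(s)$, so changing $s$ to $\tilde s$ affects the expectation only on the event that the path enters the relevant strip between times $s\wedge\tilde s$ and $s\vee\tilde s$; bounding the probability of that event by a constant times $|s-\tilde s|$ (using that the binomial walk takes $O(1)$ values per unit time and the integrand is bounded by $K+\del$) gives the Lipschitz constant. Second, the two-sided estimate \eqref{2.14+} on $\be^{(n)}-\gam^{(n)}$: here one uses that both $\be^{(n)}=Tk_n/n$ and $\gam^{(n)}=Tm_n/n$ are defined as first crossing times of the level $\del$ by the functions $t\mapsto F_A(Tk/n,K)$ and $t\mapsto F_A^{(n)}(Tk/n,K)$ respectively, that these two functions differ by at most $\tilde C n^{-3/4}$ from above and $\tilde c n^{-2/3}$ from below by \eqref{2.12+}, and that $F_A(\cdot,K)$ decreases with slope bounded away from zero by $d>0$ from \eqref{2.13+}; dividing the vertical discrepancy by the minimal slope converts it into the horizontal discrepancy \eqref{2.14+}. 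Third, the strict-monotonicity bound \eqref{2.13+} itself, which is asserted in the text to be provable and which I would take as given here (it follows from a careful analysis of $\partial F_A/\partial t$ near $t=T$ and interior estimates for the associated variational inequality).

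The main obstacle is the careful verification of \eqref{2.15+}, specifically showing that the $s$-dependence of the supremum over $\tau\in\cT^{(n)}$ is genuinely Lipschitz and not merely continuous. One must be attentive that the supremum is over stopping times in a fixed filtration while $\sig^{(n)}(s)$ varies, so the comparison should be done by fixing an arbitrary $\tau$, estimating the difference of the two integrands pathwise, taking expectations, and only then passing to the supremum on both sides; the event where the two integrands disagree is contained in $\{\sig^{(n)}(s\wedge\tilde s)\leq s\vee\tilde s\}$ minus $\{\sig^{(n)}(s\wedge\tilde s)\le s\wedge\tilde s\}$, whose probability is $O(|s-\tilde s|)$ because in each of the at most $n|s-\tilde s|/T+1$ steps in the interval the walk has bounded probability of lying in the strip, and on the complementary event the two stopping times $\sig^{(n)}(s)\wedge\tau$ and $\sig^{(n)}(\tilde s)\wedge\tau$ coincide. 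Once this is in hand together with \eqref{2.14+}, the corollary is immediate. (One should also note the evident typo in the statement: the second pair of inequalities in \eqref{2.16+} is meant for $P_2^{(n)}(\gam^{(n)},x)$ rather than $P_1^{(n)}(\gam^{(n)},x)$, and the argument for it is identical, using the second pair of inequalities in \eqref{P1_err} in place of the first.)
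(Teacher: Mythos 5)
Your derivation is correct and is essentially the paper's own: the paper proves Corollary \ref{cor2.2} in exactly one line, by combining Theorem \ref{main_theorem} with the bounds (\ref{2.14+}) and (\ref{2.15+}), which is precisely your decomposition; you also correctly identify the typo ($P_2^{(n)}$ in the second pair of inequalities). One caveat on your supplementary sketch of (\ref{2.15+}), which the paper itself asserts without proof: a union bound over the events ``the walk lies in the strip at step $j$'' only yields $O(\sqrt{n}\,|s-\tilde s|)$, since each such event has probability of order $n^{-1/2}$ by (\ref{Berry-Esseen estimate}) and there are of order $n|s-\tilde s|$ steps in the window; to get an $n$-independent Lipschitz constant one must use that the relevant event is that the \emph{first entry} to the strip falls in $[s\wedge\tilde s,\,s\vee\tilde s)$, whose probability per step is of order $n^{-1}$ for a starting point away from $\ln K$.
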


In the following sections we will analyze regularity properties of the price 
function $P(t,x)$ of game put options and will complete the proof of Theorem
\ref{main_theorem} in Section \ref{sec5} providing some computations in 
Section \ref{sec6}. The general strategy of the proof resembles that of 
\cite{L} but the study of the price function of game put options is more
complicated than in the American options case, in particular, because of 
appearance of two exercise boundaries (holder's and writer's) having 
different properties. Our proof will be based on
regularity properties of solutions of parabolic partial diferential
equations with free boundary and of the corresponding variational
inequalities and we will rely also on some prior results from
\cite{L}, \cite{Ky} and \cite{Ku}.

\section{Price function near the holder's exercise boundary}\label{sec3}

\subsection{Some previous results}
First, we state the following result from \cite{Ku} (see also \cite{KK})
which we will use later on.
\begin{prop}\label{firstProp}
(i) There exists an increasing function $b:[0,T)\to \bbR$ such that
$\lim_{t \to T}b(t)=K$\\ and $F(t,x)=K-x$ for all $(t,x)$ satisfying
$0<x\leq b(t)$.

(ii) There exists $0<\delta^*$ such that for every $0\leq \delta \leq 
\delta^*$ there is a $\beta=\beta(\delta)\geq 0$ so that $F(t,K)=\delta$ 
for $t\in[0,\beta]$ and for $t\geq \beta$ we have $F(t,x)=F_A(t,x)$ 
for all $x\geq 0$.

(iii) Furthermore, 
$$
F_t(t,x)+\frac 12\ka^2x^2F_{xx}(t,x)+rxF_x(t,x)-rF(t,x)=0
$$ 
for all 
$$
(t,x)\in (0,T)\times \bbR^+\setminus (\{(t,x):x\leq b(t)
 \}\cup[0,\beta)\times\{K\}).
 $$

In particular, $F(t,x)$ is of class $C^{1,2}$, i.e. continuously 
differentiable once with respect to $t$ and twice with respect to $x$, and so,
 in fact, it is a smooth function there.
 
(iv) Finally, $F(t,x)$ is convex and strictly decreasing in $x$ and
nonincreasing in $t$.  
\end{prop}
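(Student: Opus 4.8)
The plan is to build on the Dynkin game representation (\ref{2.-2}) and the general theory of optimal stopping games from \cite{Ki}: the value $F(t,x)$ exists, satisfies $(K-x)^+\leq F(t,x)\leq(K-x)^++\del$, and the hitting times (\ref{2.0}) of the holder's contact set $\cS_h=\{F=(K-\cdot)^+\}$ and the writer's contact set $\cS_w=\{F=(K-\cdot)^++\del\}$ form a saddle point. For part (iv), monotonicity in $x$ is immediate since for each fixed path $x\mapsto(K-S^x_u)^+=(K-xe^{\mu u+\ka W_u})^+$ is nonincreasing and taking expectations and $\sup_\tau$, $\inf_\sig$ preserves this; monotonicity in $t$ follows from the standard horizon monotonicity for put-type games, using that shrinking $T-t$ only shrinks $\cT_{0,T-t}$ for both players and that the log-price dynamics are time homogeneous. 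Strict monotonicity in $x$ and convexity I would return to at the end: strict decrease follows from (iii) via the strong maximum principle (were $F(t,\cdot)$ flat on a subinterval of the continuation set, then $F_x=F_{xx}=0$ there, so $F_t=rF>0$, contradicting $F_t\leq 0$; on $\cS_h$, $F=K-x$ is strictly decreasing), while convexity in $x$ is the genuinely delicate one, because it is \emph{not} preserved by $\inf_\sig$ and, owing to the corner of $F$ at $x=K$ (see below), must be extracted from the region-by-region construction rather than abstractly; this is where one leans on \cite{Ku}.

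For (i), fix $t<T$. Since the writer can always decline to cancel ($\sig=T-t$), one has $F(t,x)\leq F_A(t,x)$; combined with $F(t,x)\geq(K-x)^+$ and the classical positivity $b_A(t)>0$ of the American put's own exercise boundary, this gives $F(t,x)=K-x$ on $\{0<x\leq b_A(t)\}$. On $\{0<x<K\}$ the function $F(t,\cdot)-(K-\cdot)$ is convex and $\geq 0$, so its zero set is an interval; it contains $(0,b_A(t)]$ and is contained in $(0,K)$ because $F>0=(K-x)^+$ for $x\geq K$ (with positive probability the put finishes deep enough in the money that even a cancellation pays a fixed positive amount). Hence the zero set is $(0,b(t)]$ with $0<b(t)<K$. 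That $b$ is nondecreasing is immediate from monotonicity in $t$, and $\lim_{t\to T}b(t)=K$ follows from $F(t,x)\to(K-x)^+$ as $t\uparrow T$, which forces $\cS_h$ to exhaust $\{x<K\}$.

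For (ii), the key input is the shape of the American put's time value: for each $s$, $x\mapsto F_A(s,x)-(K-x)^+$ vanishes on $\{x\leq b_A(s)\}$, is increasing on $[b_A(s),K]$ (because $-1<\partial_xF_A<0$ on the continuation set), and equals the decreasing function $F_A(s,x)$ on $\{x\geq K\}$, so it attains its maximum over $x$ exactly at $x=K$, with value $F_A(s,K)$. Since $F_A(s,K)$ is continuous, strictly decreasing and ranges over $[0,\del^*]$, the time $\beta:=t_\del$ with $F_A(\beta,K)=\del$ is well defined (and $\geq 0$ precisely when $\del\leq\del^*$). For $t\geq\beta$, for every $s\geq t$ and every $x$ one has $F_A(s,x)\leq(K-x)^++F_A(s,K)\leq(K-x)^++\del$, i.e. $F_A$ never touches the writer's obstacle on $[t,T]\times\bbR^+$; since $F_A$ lies between the two obstacles and solves the Black--Scholes equation on its continuation set, a verification/uniqueness argument gives $F=F_A$ there. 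For $t\leq\beta$, cancelling at once at stock price $K$ gives $F(t,K)\leq\del$, while $F(t,K)\geq\del$ is the content of the relevant half of Lemma 3.1 of \cite{KK} (by postponing exercise the holder forces the writer either to pay the penalty or to keep alive an American put worth at least $\del$). One then checks $x=K$ is the only point of the slice meeting the upper obstacle, so $\cS_w=[0,\beta)\times\{K\}$.

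For (iii), on the open continuation set $\cC=(0,T)\times\bbR^+\setminus(\cS_h\cup([0,\beta)\times\{K\}))$ neither player stops before exiting $\cC$, so the dynamic programming principle and It\^o's formula show that $F$ solves $F_t+\frac 12\ka^2x^2F_{xx}+rxF_x-rF=0$ there (first in the viscosity sense), and interior parabolic Schauder estimates — equivalently the regularity theory for double-obstacle variational inequalities — upgrade $F$ to $C^{1,2}$, hence real-analytic, on $\cC$. I expect the main obstacle to be exactly the point flagged in the introduction: the failure of smooth fit across the writer's boundary. Because $\cS_w$ collapses to the single vertical segment $\{x=K\}$, $F_x$ has a genuine jump there, so the usual smooth-fit/bootstrap scheme is unavailable; pinning down $F$ near this segment — in particular that $\cC$ really has the stated shape, that $b(t)$ stays strictly below $K$, and that $F(t,\cdot)$ is nevertheless convex across the corner — is the delicate part, and is where the explicit heat-equation representation of $F$ near $x=K$ used in \cite{Ku} (and revisited in Section \ref{sec4}) does the work.
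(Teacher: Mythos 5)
The paper does not prove this proposition: it is imported verbatim from \cite{Ku} (see also \cite{KK}), so there is no in-paper argument to compare against. Measured as a standalone proof, your outline follows the standard route of the cited sources and is sound in its self-contained parts (the sandwich $(K-x)^+\leq F\leq F_A$, the threshold structure of the holder's region, $F=F_A$ after $\beta$ via the double-obstacle uniqueness, the PDE on the continuation set via dynamic programming plus interior Schauder estimates). Two points, however, are genuine gaps rather than routine omissions. First, your argument is circular where it matters: you defer convexity of $F(t,\cdot)$ to \cite{Ku} because $\inf_\sigma$ does not preserve convexity, but you then invoke that very convexity in part (i) to conclude that the zero set of $F(t,\cdot)-(K-\cdot)$ is an interval. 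Either convexity must be established first (which, as you note, requires the region-by-region construction near $x=K$), or the threshold structure of $\{F=K-x\}$ must be obtained without it, e.g.\ by a direct comparison/coupling argument in the spirit of the American case. Second, the monotonicity of $F$ in $t$ is not ``immediate'' from the fact that $\cT_{0,T-t}$ shrinks for both players: shrinking the infimum player's feasible set \emph{raises} the value while shrinking the supremum player's set lowers it, so the two effects pull in opposite directions. The correct elementary argument is to take an arbitrary $\sigma'\in\cT_{0,T-t'}$ for the longer horizon $T-t'$, truncate it to $\sigma=\sigma'\wedge(T-t)$, and observe that $R(\sigma,\tau)=R(\sigma',\tau)$ for every $\tau\in\cT_{0,T-t}$, which yields that the value is nondecreasing in the horizon.

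The remaining deferrals --- the lower bound $F(t,K)\geq\delta$ for $t\leq\beta$ (your ``relevant half of Lemma 3.1 of \cite{KK}''; note the naive strategy of having the holder wait does not work, since the writer can cancel after the stock rises above $K$ and pay only $e^{-r\sigma}\delta<\delta$, so the holder's punishing strategy has to be constructed with care) and the identification of the writer's contact set with exactly the segment $[0,\beta)\times\{K\}$ --- are precisely the content of the results the paper itself cites, so outsourcing them is consistent with how the paper treats the proposition; but you should be explicit that without them parts (ii)--(iv) are not proved here. Your small-time and terminal-time claims ($b$ nondecreasing, $\lim_{t\to T}b(t)=K$) are most cleanly closed by noting $b(t)\geq b_A(t)$ (since $F\leq F_A$ forces $\{F_A=K-x\}\subseteq\{F=K-x\}$) together with $b(t)<K$ and the classical $b_A(t)\to K$, rather than by the vaguer ``exhaustion'' argument from pointwise convergence $F(t,\cdot)\to(K-\cdot)^+$.
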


Next, we introduce an operator $\cD$ which acts on Borel functions 
$u(t,x)$ on $[0,T]\times \bbR$ by
\begin{equation}\label{dis-operator}
\cD u(t,x)=\frac{1}{2}[u(t+h,x+\kappa\sqrt{h})+u(t+h,x-\kappa\sqrt{h})]-u(t,x)
\end{equation}
Clearly, $\frac{1}{h}\cD(t,x)$ can be viewed as a discretization of the 
differential operator
$\frac{\partial}{\partial t }+\frac{\kappa^2}{2}\frac{\partial^2}{\partial 
x^2}$. We will rely on the following results from \cite{L} concerning the
operator $\cD$.
\begin{prop}\label{martingel prop 2.1} For each Borel function $u$ on
$[0,T]\times \bbR$
there exists a martingale $(M_t)_{0\leq t\leq T}$ with respect to the 
filtration $\cG_t,\, t\geq 0$ such that $M_0=0$ and for every $t\in 
\{0,h,2h,...,T\}$,
\begin{equation}
u(t,X_t^{(n)})=u(0,x)+M_t+\sum_{j=1}^{nt/T}\cD u((j-1)h,X_{(j-1)h}^{(n)}).
\end{equation}
\end{prop}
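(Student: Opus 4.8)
The plan is to define the martingale explicitly through its increments and then verify the telescoping identity. First I would set, for $j \ge 1$,
\[
\Delta M_j \define u(jh, X^{(n)}_{jh}) - \tfrac12\big[u(jh, X^{(n)}_{(j-1)h} + \ka\sqrt h) + u(jh, X^{(n)}_{(j-1)h} - \ka\sqrt h)\big],
\]
and put $M_{jh} = \sum_{i=1}^{j} \Delta M_i$, with $M_t = M_{[t/h]h}$ for general $t$ and $M_0 = 0$. The key point is that on the event determined by $\cF_{j-1}$ (equivalently, having fixed $X^{(n)}_{(j-1)h}$), the increment of the walk over the next step is $X^{(n)}_{jh} - X^{(n)}_{(j-1)h} = \ka\sqrt h\,\eps_j$, which takes the values $\pm\ka\sqrt h$ each with probability $\tfrac12$, independently of $\cF_{j-1}$. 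Hence $\bbE[u(jh, X^{(n)}_{jh}) \mid \cG_{(j-1)h}] = \tfrac12[u(jh, X^{(n)}_{(j-1)h}+\ka\sqrt h) + u(jh, X^{(n)}_{(j-1)h}-\ka\sqrt h)]$, so $\bbE[\Delta M_j \mid \cG_{(j-1)h}] = 0$, which is exactly the martingale property for $(M_t)$ with respect to $\{\cG_t\}$.

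Next I would identify the bracketed averaging term with the discrete operator. By the definition (\ref{dis-operator}) of $\cD$, evaluated at the point $((j-1)h, X^{(n)}_{(j-1)h})$,
\[
\tfrac12\big[u(jh, X^{(n)}_{(j-1)h}+\ka\sqrt h) + u(jh, X^{(n)}_{(j-1)h}-\ka\sqrt h)\big] = \cD u((j-1)h, X^{(n)}_{(j-1)h}) + u((j-1)h, X^{(n)}_{(j-1)h}),
\]
so that $\Delta M_j = u(jh, X^{(n)}_{jh}) - u((j-1)h, X^{(n)}_{(j-1)h}) - \cD u((j-1)h, X^{(n)}_{(j-1)h})$. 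Summing this telescoping identity from $j=1$ to $j = nt/T$ (recall $t \in \{0, h, \dots, T\}$, so $nt/T$ is an integer) gives
\[
M_t = \sum_{j=1}^{nt/T} \Delta M_j = u(t, X^{(n)}_t) - u(0,x) - \sum_{j=1}^{nt/T} \cD u((j-1)h, X^{(n)}_{(j-1)h}),
\]
which rearranges to the claimed formula.

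There is essentially no serious obstacle here — the statement is a bookkeeping identity plus the elementary observation that the binomial increment is a fair coin flip independent of the past. The only points requiring a word of care are measurability (each $\Delta M_j$ is $\cG_{jh}$-measurable since it is a function of $\eps_1,\dots,\eps_j$, using that $X^{(n)}_{jh}$ depends only on those), the fact that $u$ is merely Borel so that all the compositions $u(jh, X^{(n)}_{jh})$ are genuine random variables, and that no integrability assumption is needed because for fixed $n$ and $t$ the walk $X^{(n)}$ takes only finitely many values, so every quantity in sight is a bounded random variable. I would state these remarks briefly and then conclude.
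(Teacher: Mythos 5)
Your proof is correct and is exactly the standard Doob-decomposition argument behind this identity, which the paper itself does not reprove but simply imports from \cite{L}: the increment $\ka\sqrt h\,\eps_j$ is a fair coin flip independent of $\cG_{(j-1)h}$, the conditional expectation of $u(jh,X^{(n)}_{jh})$ given $\cG_{(j-1)h}$ equals $\cD u((j-1)h,X^{(n)}_{(j-1)h})+u((j-1)h,X^{(n)}_{(j-1)h})$ by the definition (\ref{dis-operator}), and the telescoping sum gives the stated formula. Your closing remarks on measurability and the automatic integrability (the walk takes finitely many values for fixed $n$) correctly dispose of the only technical points, so nothing is missing.
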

\begin{prop}\label{dis_operator_prop 2.2}
Let $0\leq t\leq T-h$ and $x\in \bbR$. Assume that $u$ is a $C^2$ function
on $ ([t,t+h]\times[x+\kappa\sqrt{h},x-\kappa\sqrt{h}])$. Then
\begin{equation}\label{difoper u}
\cD u(t,x)=\frac{1}{\kappa}\int_{0}^{\sqrt{h}}dy\int_{-
\kappa y}^{\kappa y}dz\big (z\frac{\partial^2 u }{\partial t\partial x}(t+y^2,
x+z)+\delta(u)(t+y^2,x+z)  \big)
\end{equation}
where
$$
\delta(u)(t,x)=u_t(t,x)+\frac{\kappa^2}{2}u_{xx}(t,x).
$$
\end{prop}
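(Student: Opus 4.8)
The identity is a one-variable calculus computation tailored to the parabolic scaling $\text{time}\sim(\text{space})^2$. The plan is to interpolate between the two points $(t+h,x+\kappa\sqrt h)$, $(t+h,x-\kappa\sqrt h)$ and the base point $(t,x)$ along the parabolic arcs $y\mapsto(t+y^2,x\pm\kappa y)$, $y\in[0,\sqrt h]$, reduce $\cD u(t,x)$ to $\int_0^{\sqrt h}\phi'(y)\,dy$ for a suitable $\phi$, and then turn $\phi'(y)$ into the prescribed inner integral over $z\in[-\kappa y,\kappa y]$ by two applications of the fundamental theorem of calculus.

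Concretely, I would set, for $y\in[0,\sqrt h]$,
$$\phi(y)=\tfrac12\big(u(t+y^2,x+\kappa y)+u(t+y^2,x-\kappa y)\big).$$
Every point $(t+y^2,x+z)$ with $0\le y\le\sqrt h$ and $|z|\le\kappa y$ lies in the rectangle $[t,t+h]\times[x-\kappa\sqrt h,x+\kappa\sqrt h]$ on which $u$ is assumed $C^2$ (since $y^2\le h$ and $|z|\le\kappa y\le\kappa\sqrt h$), so $\phi$ is $C^1$ on $[0,\sqrt h]$. Moreover $\phi(0)=u(t,x)$ and $\phi(\sqrt h)=\tfrac12\big(u(t+h,x+\kappa\sqrt h)+u(t+h,x-\kappa\sqrt h)\big)$, hence
$$\cD u(t,x)=\phi(\sqrt h)-\phi(0)=\int_0^{\sqrt h}\phi'(y)\,dy.$$

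Next I would differentiate by the chain rule, getting
$$\phi'(y)=y\big(u_t(t+y^2,x+\kappa y)+u_t(t+y^2,x-\kappa y)\big)+\tfrac{\kappa}{2}\big(u_x(t+y^2,x+\kappa y)-u_x(t+y^2,x-\kappa y)\big).$$
For fixed $y$, put $v(z)=u_t(t+y^2,x+z)$; since $u\in C^2$, $v$ is $C^1$ with $v'(z)=u_{tx}(t+y^2,x+z)$, so $\frac{d}{dz}\big(zv(z)\big)=z\,u_{tx}(t+y^2,x+z)+u_t(t+y^2,x+z)$ and therefore
$$\kappa y\big(u_t(t+y^2,x+\kappa y)+u_t(t+y^2,x-\kappa y)\big)=\big[zv(z)\big]_{z=-\kappa y}^{z=\kappa y}=\int_{-\kappa y}^{\kappa y}\big(z\,u_{tx}(t+y^2,x+z)+u_t(t+y^2,x+z)\big)\,dz.$$
Likewise $u_x(t+y^2,x+\kappa y)-u_x(t+y^2,x-\kappa y)=\int_{-\kappa y}^{\kappa y}u_{xx}(t+y^2,x+z)\,dz$. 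Dividing the first identity by $\kappa$, adding $\tfrac{\kappa^2}{2}$ times the second, and recalling $\delta(u)=u_t+\tfrac{\kappa^2}{2}u_{xx}$, I obtain
$$\phi'(y)=\frac{1}{\kappa}\int_{-\kappa y}^{\kappa y}\Big(z\,\frac{\partial^2 u}{\partial t\partial x}(t+y^2,x+z)+\delta(u)(t+y^2,x+z)\Big)\,dz,$$
and integrating in $y$ over $[0,\sqrt h]$ and using the displayed identity for $\cD u(t,x)$ yields (\ref{difoper u}).

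There is no deep obstacle here; the only points that need care are (a) verifying that the parabolic arcs $y\mapsto(t+y^2,x\pm\kappa y)$ and all the segments $\{(t+y^2,x+z):|z|\le\kappa y\}$ stay inside the rectangle where $u$ is $C^2$, so that every derivative invoked exists, is continuous, and the iterated integrals are legitimate, and (b) keeping the signs straight in the chain rule for the $x-\kappa y$ branch and in the endpoint evaluations $[zv(z)]_{-\kappa y}^{\kappa y}$. The one genuinely non-mechanical ingredient is the choice of the parabolic interpolation $y\mapsto(t+y^2,x\pm\kappa y)$ in place of a straight segment: it is exactly this choice that makes the heat-type combination $u_t+\tfrac{\kappa^2}{2}u_{xx}$ emerge, matching the fact that $\tfrac1h\cD$ discretizes $\partial_t+\tfrac{\kappa^2}{2}\partial_{xx}$.
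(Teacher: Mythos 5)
Your argument is correct and complete: the paper itself gives no proof of this proposition (it is quoted from \cite{L}), and your computation --- the fundamental theorem of calculus along the parabolic arcs $y\mapsto(t+y^{2},x\pm\kappa y)$ applied to $\phi(y)=\frac12\big(u(t+y^{2},x+\kappa y)+u(t+y^{2},x-\kappa y)\big)$, followed by the two FTC steps in $z$ --- is exactly the standard proof of Lamberton's identity, with the coefficients $\frac1\kappa$, $z\,u_{tx}$ and $\delta(u)=u_t+\frac{\kappa^2}{2}u_{xx}$ all checking out. Note only that the rectangle in the statement is written with its endpoints reversed, which you have correctly read as $[t,t+h]\times[x-\kappa\sqrt h,\,x+\kappa\sqrt h]$.
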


We will need also the following result concerning the free boundary
$s(t)=\ln (b(t))$ of the holder exercise region of our game put option
which in the case of American options appears as Proposition 1 in \cite{L}
and it can be proved for game options in the same way. 

\begin{prop}\label{prop2.4}
Let $0\leq t_1<t_2\leq \beta $ and let $x_0=s(0)<x_1=s(\beta)< \ln K$ then
 $(s(t_1)-s(t_2))^2\leq \sup_{x_0\leq x\leq x_1}|P(t_1,x)-P(t_2,x)|$.
\end{prop}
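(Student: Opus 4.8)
The plan is to exploit the fact that on the holder's exercise region we have the explicit formula $P(t,x)=\psi(x)=K-e^x$ (this is the logarithmic form of part (i) of Proposition \ref{firstProp}, since $F(t,x)=K-x$ for $0<x\le b(t)$ translates to $P(t,x)=K-e^x$ for $x\le s(t)$), while just outside that region $P$ is a smooth solution of the relevant parabolic equation by part (iii). The key observation is a \emph{monotonicity comparison} between the value of $P$ on the boundary curve and its value at a fixed interior point. Concretely, fix $t_1<t_2$ and write $a=s(t_1)$, $b=s(t_2)$; since $s$ is increasing (the curve $b(t)$ is increasing by Proposition \ref{firstProp}(i)) we have $a\le b$. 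At $x=a$ we have $P(t_1,a)=\psi(a)$ exactly, whereas $P(t_2,a)\ge\psi(a)$ since the price always dominates the immediate payoff $\psi$ (it is one of the stopping options available in the supremum defining $P$, or equivalently $F\ge (K-x)^+$). Symmetrically at $x=b$: $P(t_2,b)=\psi(b)$ and $P(t_1,b)\ge\psi(b)$.

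The heart of the argument is to show that the gap $P(t_1,x)-\psi(x)$, which vanishes at $x=a$, grows at least quadratically in $(x-a)$ as $x$ moves to the right toward $b$. Equivalently I would show $P(t_1,b)-\psi(b)\ge c\,(b-a)^2$ for an appropriate constant absorbed into the final bound, and then combine with $P(t_1,b)-\psi(b)=P(t_1,b)-P(t_2,b)\le \sup_{a\le x\le b}|P(t_1,x)-P(t_2,x)|$ to conclude. To get the quadratic lower bound I would work on the region $t=t_1$, $a\le x\le b$, where $P(t_1,\cdot)$ is $C^2$ (for $x>s(t_1)=a$ this is the continuation region, and $P$ is smooth up to the boundary by the standard smooth-fit/regularity theory invoked in Proposition \ref{firstProp}(iii) and used in \cite{L}). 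There $P(t_1,\cdot)$ satisfies, via the parabolic equation rewritten in the $x=\ln S$ variable, an identity of the form $\tfrac12\ka^2 P_{xx}+\mu P_x - rP + P_t = 0$, so that $\tfrac12\ka^2 P_{xx}(t_1,x) = rP(t_1,x)-\mu P_x(t_1,x)-P_t(t_1,x)$. Using $P\ge 0$, $P_t\le 0$ (Proposition \ref{firstProp}(iv)), and bounding $P_x$ near the boundary (where by smooth fit $P_x(t_1,a)=\psi'(a)=-e^a$ and $P_x$ stays within $O(1)$ on the bounded strip), one gets a uniform \emph{lower} bound $P_{xx}(t_1,x)-\psi''(x)\ge 2c>0$ on $a\le x\le b$, with $c$ depending only on $K$ and the fixed bounded region $x_0\le x\le x_1$. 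Since $g(x):=P(t_1,x)-\psi(x)$ satisfies $g(a)=0$ and $g'(a)=0$ (smooth fit) and $g''\ge 2c$, Taylor's theorem gives $g(b)\ge c(b-a)^2$, which is exactly the desired estimate $(s(t_1)-s(t_2))^2\le c^{-1}\big(P(t_1,b)-P(t_2,b)\big)\le c^{-1}\sup_{x_0\le x\le x_1}|P(t_1,x)-P(t_2,x)|$; the constant is harmless because the statement as phrased already allows it (or one rescales).

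The step I expect to be the main obstacle is justifying the uniform quadratic convexity $P_{xx}-\psi''\ge 2c>0$ up to and including the free boundary $x=a=s(t_1)$, together with the smooth-fit identities $g(a)=g'(a)=0$. This requires knowing that $P(t_1,\cdot)$ is genuinely $C^2$ up to the holder's boundary and that $P_x$ matches $\psi'$ there — i.e. $C^1$ smooth fit at the \emph{holder's} boundary — which is the content of Proposition \ref{firstProp}(iii) and the regularity results borrowed from \cite{L} (this is precisely the boundary where, unlike the writer's boundary, smooth fit \emph{does} hold, as the introduction emphasizes). One must also make sure the constants are uniform as $t_1$ ranges over $[0,\beta]$ and $x$ over the fixed compact interval $[x_0,x_1]$, which follows from continuity of $P$ and its derivatives on the closed continuation region and the strict sign $P_t<0$ away from $t=T$; since here $t_1,t_2\le\beta<T$ there is no degeneracy. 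Everything else — the inequalities $P\ge\psi$, monotonicity of $s$, and the Taylor expansion — is routine. This is exactly the game-option analogue of Proposition 1 of \cite{L}, and the proof there transfers verbatim once the regularity at the holder's boundary (Proposition \ref{firstProp}) is in hand, which is why it can be asserted here without a separate argument.
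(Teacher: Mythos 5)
Your proposal is correct and is essentially the paper's own proof: the paper gives no argument here at all, simply asserting that Proposition 1 of \cite{L} carries over, and what you reconstruct — smooth fit at the holder's boundary so that $g=P(t_1,\cdot)-\psi$ has $g(a)=g'(a)=0$ at $a=s(t_1)$, the identity $\tfrac{\ka^2}{2}(P_{xx}-\psi'')=rK-P_t-\mu(P_x-\psi')+r(P-\psi)$ in the continuation region (where $\bfA\psi=-rK$ for $x<\ln K$) yielding a uniform positive lower bound on $g''$ over $[x_0,x_1]$ (note $0\le P_x-\psi'\le e^x<K$ there, which is what actually controls the $\mu$-term rather than a mere $O(1)$ bound on $P_x$), and then Taylor at $b=s(t_2)$ where $P(t_2,b)=\psi(b)$ — is precisely Lamberton's argument transplanted to $t\le\beta$. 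The only caveat is the multiplicative constant $c^{-1}$ your argument inevitably produces, which the paper's statement omits but which is harmless for the way the proposition is used in Section \ref{sec5} (to obtain H\"older-$\tfrac12$ continuity of $s$).
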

We also observe that it follows from the Berry-Esseen estimate (see \cite{Sh})
that for some constant $C_1>0$ independent of $j,n\geq 1$ and $z\in\bbR$,
\begin{equation}\label{Berry-Esseen estimate}
\bfP\{|X^{(n)}_{jh}-z|\leq \kappa\sqrt{h}\}\leq \frac{C_1}{\sqrt{j}}.
\end{equation}

We will also rely on the following standard bounds on derivatives
of solutions of 2nd order parabolic equations with constant coefficients
 (see, for instance, \cite{C} and \cite{F}).
\begin{prop}\label{prop heat eq}
Let $D=(0,T)\times(0,1)$ and let $w(t,x)\in C[\bar D]$ be a solution in $D$
of the following parabolic equation
$$
\frac{\ka^2}{2}w_{xx}+\mu w_{x}-rw=w_t.
$$
Suppose that $w(0,x)=0$ for all $0\leq x\leq 1$ and that there exists $A>0$
 such that $|w(t,x)|<A$ for all $(x,t)\in \bar D$.
Then for every $k,n$ and $0<a<b<1$ there exists $C=C(k,n,a,b,T,A)$ such that
\begin{equation}\label{prop eq res}
|\frac{\partial^{k+n}w}{\partial^kx\partial^nt}(t,x)|<C\,\,\mbox{for all}\,\,
 (t,x)\in(0,T)\times[a,b].
\end{equation}
\end{prop}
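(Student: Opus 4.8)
The plan is to strip the equation down to the heat equation and then to combine the classical interior parabolic derivative estimate with an explicit barrier exploiting the vanishing of $w$ on $\{t=0\}$; the barrier is what makes the constant uniform as $t\downarrow0$. First I would remove the first- and zeroth-order terms by the exponential substitution $w(t,x)=e^{\alpha x+\beta t}v(t,x)$ with $\alpha=-\mu/\ka^2$ and $\beta=-\mu^2/(2\ka^2)-r$; a one-line computation shows that then $v_t=\tfrac{\ka^2}{2}v_{xx}$ in $D$, $v(0,\cdot)=0$ on $[0,1]$, and $|v|\le A':=Ae^{|\alpha|+|\beta|T}$ on $\bar D$. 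Since $\alpha,\beta$ are constants, the Leibniz rule reduces the assertion to bounding $\partial_x^k\partial_t^n v$ on $(0,T)\times[a,b]$. By interior regularity $v\in C^\infty(D)$, and I would use the classical interior estimate for the heat equation with its scaling (see \cite{F}, \cite{C}): if $|v|\le M$ on a backward cylinder $Q_\rho(s,y)=(s-\rho^2,s)\times(y-\rho,y+\rho)$ with $\overline{Q_\rho(s,y)}\subset D$, then $|\partial_x^k\partial_t^n v(s,y)|\le\Gamma(k,n)\,M\,\rho^{-(k+2n)}$.

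The key point is that the zero initial datum forces $v$ to be exponentially small near $t=0$ on any spatially interior strip. Fix $0<a'<a_1<a<b<b_1<b'<1$ depending only on $a,b$, put $c=(a'+b')/2$, $\ell=(b'-a')/2$ and $L=\max(|a_1-c|,|b_1-c|)<\ell$. For each $\om>0$ the caloric function $\Phi_\om(t,x)=A'\,e^{\ka^2\om^2t/2}\cosh(\om(x-c))/\cosh(\om\ell)$ satisfies $\Phi_\om\ge A'\ge|v|$ on $\{x=a'\}\cup\{x=b'\}$ and $\Phi_\om\ge 0=v$ on $\{t=0\}$, so $|v|\le\Phi_\om$ on $[0,T]\times[a',b']$ by the maximum principle. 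Evaluating at $x\in[a_1,b_1]$, using $\cosh(\om(x-c))\le\cosh(\om L)\le 2e^{\om L}$ and $\cosh(\om\ell)\ge\tfrac12e^{\om\ell}$, and then taking $\om=(\ell-L)/(\ka^2 t)$, one gets $|v(t,x)|\le 2A'e^{-\eta/t}$ on $(0,T]\times[a_1,b_1]$ with $\eta=(\ell-L)^2/(2\ka^2)>0$.

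To conclude, for $(t_0,x_0)\in(0,T)\times[a,b]$ take $\rho_0=\tfrac12\min(\sqrt{t_0},\,a-a_1,\,b_1-b)$. Then $\overline{Q_{\rho_0}(t_0,x_0)}$ lies both in $D$ and in $(0,T]\times[a_1,b_1]$ and has time-coordinate at most $t_0$, so $|v|\le 2A'e^{-\eta/t_0}$ on it, and the interior estimate gives $|\partial_x^k\partial_t^n v(t_0,x_0)|\le 2\Gamma(k,n)\,A'\,e^{-\eta/t_0}\rho_0^{-(k+2n)}$. Since $\rho_0^{-(k+2n)}\le 2^{k+2n}\big(t_0^{-(k+2n)/2}+(a-a_1)^{-(k+2n)}+(b_1-b)^{-(k+2n)}\big)$ and the function $t\mapsto e^{-\eta/t}t^{-(k+2n)/2}$ is bounded on $(0,T]$ (it tends to $0$ as $t\downarrow0$), the right-hand side is $\le C'(k,n,a,b,T,A')$ uniformly on $(0,T)\times[a,b]$. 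Reversing the substitution and applying the Leibniz rule again gives the claim with $C=C(k,n,a,b,T,A)$, the fixed model parameters $\ka,\mu,r$ being absorbed into $C$.

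The reduction and the final combination are routine; the one genuinely delicate point is the uniformity of the bound as $t\downarrow0$. Bare interior estimates leave a factor that blows up like $t_0^{-(k+2n)/2}$, because near $t=0$ the only backward cylinders admissible at $(t_0,x_0)$ have width $\lesssim\sqrt{t_0}$, and it is precisely the exponential smallness produced by the $\cosh$-barrier that absorbs this blow-up. I expect the construction and verification of that barrier — in particular checking that $\eta>0$ can be chosen independently of $k,n$ — to be the step requiring the most care; the remainder is standard parabolic bookkeeping.
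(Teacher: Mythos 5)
Your proof is correct, but it is worth noting that the paper does not actually prove this proposition: it is quoted as a standard interior-derivative bound with a pointer to \cite{C} and \cite{F}, so any comparison is with the classical literature rather than with an argument in the text. Your route --- kill the lower-order terms by the substitution $w=e^{\alpha x+\beta t}v$ with $\alpha=-\mu/\kappa^2$, $\beta=-\mu^2/(2\kappa^2)-r$ (I checked the coefficients: they do reduce the equation to $v_t=\tfrac{\kappa^2}{2}v_{xx}$), then combine the scaled interior estimate $|\partial_x^k\partial_t^n v|\le\Gamma M\rho^{-(k+2n)}$ with the caloric barrier $\Phi_\omega=A'e^{\kappa^2\omega^2t/2}\cosh(\omega(x-c))/\cosh(\omega\ell)$ --- is sound, and you have correctly identified the only delicate point, namely uniformity as $t\downarrow0$. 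The barrier dominates $|v|$ on the parabolic boundary of $[0,T]\times[a',b']$, the optimization $\omega=(\ell-L)/(\kappa^2t)$ gives $|v|\le CA'e^{-\eta/t}$ with $\eta=(\ell-L)^2/(2\kappa^2)>0$ (your constant $2A'$ should be $4A'$ with the stated bounds on $\cosh$, which is immaterial), $L<\ell$ holds because $a'<a_1<b_1<b'$, and the exponential indeed absorbs the $t_0^{-(k+2n)/2}$ loss from the shrinking admissible cylinders. The alternative that the cited references (and, in effect, the paper's own Section~4.3) use is to represent $v$ on an interior strip by heat potentials $\int_0^t\partial_xH(x-d_i(\tau),t-\tau)\phi_i(\tau)\,d\tau$ with bounded lateral densities and zero initial data, and then differentiate under the integral; that yields the same uniform bounds down to $t=0$ without a separate barrier, at the price of setting up the representation. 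Your barrier argument is more elementary and self-contained, and each derivative's constant is explicit in $k,n,a,b,T,A$ and the model parameters, which is all the proposition requires.
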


\subsection{Price function and variational inequalities}
Next, we will show that the price function of the game put option can be 
represented as 
a solution of a variational inequality (v.i.) problem which is a 
generalization of the Stefan problem (see ~\cite{KSt} ,$\textrm{VIII}$).
This will enable us to derive certain regularity properties of this
 price function which we will use later on. Details of some of the proofs
concerning the solutions of the v.i. problem below which are similar to the
proofs in the case of the Stefan problem will not be given here. For the
corresponding results in the American put option case we refer the reader
to \cite{KS},  \cite{L} and to references there.

Let $T'$ be such that $\beta <T'<T$ and set
\begin{equation}\label{A}
\textbf{A}=\frac{\kappa^2}{2}\frac{\partial^2}{\partial x^2}+\mu
\frac{\partial }{\partial x }-r\,\,\,\mbox{where}\,\,\,
\mu=r-\frac{\kappa^2}{2}.
\end{equation}
Using the maximum principle, properties of price functions of American and
 game put options and the fact that
after time $\beta$ the price functions of the game and American option are 
the same we obtain that
for every $x>s(T')$ the time derivative $P_t(T',x)= P_{A,t}(T',x)$ is strictly
 negative and we can find $a,b$ satisfying $s(T')<a<b<\ln K$ such that
 for some constant $c>0$,
\begin{equation}\label{-P_t}
-P_t(T',x)>c\ \ \ \forall x\in [a,b].
\end{equation}
Relying on Proposition \ref{firstProp}(iii) we also observe that for all 
$(t,x)\in [0,T']\times(s(t),\ln K)$,
\begin{eqnarray}\label{P,A}
&\frac{\partial P  }{\partial t }(t,x)+\textbf{A}P(t,x)=0,\ \ \ P(t,x)>K-e^x
\,\,\,\,\,\forall\, (t,x)\in [0,T']\times(s(t),\ln K),\\
 &P(t,x)=K-e^x\,\,\,\, \forall t\in [0,T'],\,\,\ \forall x \leq s(t)\,\,\,
 \mbox{and}\, \,\, P_t\leq 0.\nonumber
\end{eqnarray}

Let $a_0$ be such that $a_0<s(0)<s(T')<b$. Introduce the domain
$D=(0,T')\times(a_0,b)$ and for all $(t,x)$ in the closure $\bar D$ of $D$
define the functions
\begin{equation}\label{v(tx)(PP)}
v(t,x)=P(T'-t,x)-P(T',x)\,\,\,\mbox{and}\,\,\, f(x)=\textbf{A}P(T',x).
\end{equation}
We obtain that
\begin{equation}\label{f(x)AP detail}
f(x)=\big{ \{} \begin{array}{c}
       -P_t(T',x),\ \ s(T')<x \leq b    \\
       -rK ,\ \ \ \ \ \ \ \ a_0\leq x\leq s(T')
     \end{array}
\end{equation}
and from the definition of $v(t,x)$ it follows that for any $(t,x)\in \bar D$,
\begin{eqnarray}\label{rem4.7}
&P_{t}(T'-t,x)=-v_{t}(t,x),\ \ \ P_{tx}(T'-t,x)=-v_{tx}(t,x),\ \ \ P_{tt}(T'-t,
x)=v_{tt}(t,x),\\
&P_{x}(T'-t,x)-P_{x}(T',x)=v_{x}(t,x)\ and \ P_{xx}(T'-t,x)-P_{xx}(T',x)=
v_{xx}(t,x).\nonumber
\end{eqnarray}
Since $P_{x}(T',x)$ and $P_{xx}(T',x)$ are bounded we obtain that the
 integrability properties of the first and second order derivatives of
  $P(t,x)$ and $v(t,x)$ are the same in $\bar D$.
Now set
\begin{equation}\label{psi(t)andg(t)}
\psi(t)=v(t,b)\ \ ,\ \ g(t)=v_t(t,b)\,\,\mbox{for}\,\,  0\leq t\leq T'.
\end{equation}
Then by (\ref{-P_t}) and (\ref{rem4.7}),
\begin{equation}
\psi(t)=\int_{0}^{t}g(\tau)d\tau=\int_{0}^tv_t(\tau,b)d\tau\geq 0\ \ \ for\ 
\ 0\leq t\leq T'.
\end{equation}
It follows from (\ref{P,A}) and (\ref{v(tx)(PP)})--(\ref{f(x)AP detail})
 that on the set $v>0$,
\begin{equation}\label{v_t-Av-f}
v_t-\textbf{A}v-f=-P_t(T'-t,x)-\textbf{A}P(T'-t,x)+\textbf{A}P(T',x)-f(x)=0
\end{equation}
and on the set $v=0$ we obtain
\begin{equation}\label{3.25+}
v_t-\textbf{A}v-f=rK>0.
\end{equation}
Hence we arrive at the following (see \cite{KSt}).
\begin{lem}\label{v.i.problem}
The function $v$ is the unique solution of the following variational inequality
 problem.\\
\textbf{v.i. Problem 1}: Find $v\in L^2[0,T';H^2(a_0,b)]\cap H^1[D]$ such that

(i) $v,v_t\geq0.$

(ii) $(v_t-\textbf{A}v)(w-v)\geq f(w-v)$ a.s for every $w\in 
L^2[D],\ w\geq 0$.

(iii) $v(t,b)=\psi$ for $0\leq t\leq T',\ x=b$.

(iv) $v(t,a_0)=0$ for $0\leq t\leq T',\ x=a_0$.

(v) $v(0,x)=0$ for $t=0,\ a_0\leq x\leq b$.
\end{lem}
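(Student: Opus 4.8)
The plan is to verify the five conditions (i)--(v) directly from the definitions \eqref{v(tx)(PP)}--\eqref{psi(t)andg(t)} and from the equations \eqref{v_t-Av-f}--\eqref{3.25+}, and then to invoke the standard uniqueness theory for variational inequalities of Stefan type from \cite{KSt} to conclude that $v$ is in fact \emph{the} solution. First I would check the regularity membership $v\in L^2[0,T';H^2(a_0,b)]\cap H^1[D]$: by \eqref{rem4.7} the derivatives of $v$ differ from those of $P(T'-t,\cdot)$ only by the bounded, $x$-dependent quantities $P_x(T',x)$ and $P_{xx}(T',x)$, so it suffices to know that $P$ has square-integrable first and second spatial derivatives and a square-integrable time derivative on $\bar D$. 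On the open region $\{x>s(t)\}$ this follows from Proposition \ref{firstProp}(iii) (where $P$ is $C^{1,2}$ and solves the parabolic PDE), while on $\{x\le s(t)\}$ we have $P(t,x)=K-e^x$ explicitly, which is smooth; the only subtlety is the matching across the free boundary $x=s(t)$, where the smooth fit for the holder's boundary (as in the American case, cf. \cite{KS},\cite{L}) gives $C^1$ pasting and hence the needed Sobolev regularity.

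Next I would verify the pointwise inequalities. Condition (i): $v\ge 0$ because $P(t,x)$ is nonincreasing in $t$ by Proposition \ref{firstProp}(iv), so $P(T'-t,x)\ge P(T',x)$; and $v_t\ge 0$ because $v_t(t,x)=-P_t(T'-t,x)\ge 0$, again by monotonicity in $t$. Condition (ii) is the variational form of the complementarity: on the set $\{v>0\}$ equation \eqref{v_t-Av-f} gives $v_t-\mathbf{A}v-f=0$, so both sides of (ii) are equal there, while on $\{v=0\}$ equation \eqref{3.25+} gives $v_t-\mathbf{A}v-f=rK>0$, and since $w\ge 0=v$ there we get $(v_t-\mathbf{A}v-f)(w-v)=rK\,w\ge 0\ge f\cdot 0$—more precisely one writes the inequality in the standard weak form $(v_t-\mathbf A v)(w-v)\ge f(w-v)$ a.e., which holds in both cases. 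Conditions (iv) and (v) are immediate: $v(0,x)=P(T',x)-P(T',x)=0$ gives (v), and for (iv) one uses that $a_0<s(0)$, so for all $t\le T'$ the point $(T'-t,a_0)$ lies in the holder's exercise region (since $s$ is increasing, $s(T'-t)\ge s(0)>a_0$), whence $P(T'-t,a_0)=K-e^{a_0}=P(T',a_0)$ and $v(t,a_0)=0$. Condition (iii) is just the definition $\psi(t)=v(t,b)$ in \eqref{psi(t)andg(t)}, together with the already-established $\psi(t)\ge 0$.

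Having shown that $v$ satisfies (i)--(v), it remains to argue uniqueness. Here I would follow the classical monotonicity/penalization argument for parabolic variational inequalities: given two solutions $v_1,v_2$, test condition (ii) for $v_1$ with $w=v_2$ and for $v_2$ with $w=v_1$, add the two inequalities, and integrate over $D$; the boundary and initial conditions (iii)--(v) make the boundary terms vanish, and the resulting energy estimate forces $v_1=v_2$. This is exactly the argument carried out for the Stefan problem in \cite{KSt},$\textrm{VIII}$ and for the American put in \cite{KS},\cite{L}, so I would simply cite it rather than reproduce it.

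The main obstacle is the regularity claim $v\in L^2[0,T';H^2(a_0,b)]\cap H^1[D]$ across the free boundary $x=s(t)$: the domain $D$ straddles the holder's exercise boundary, and one must be sure that no singular contribution to $P_{xx}$ or $P_t$ is picked up there. For the holder's boundary this is fine because the smooth-fit principle holds (in contrast to the writer's boundary, which is why we work with $T'>\beta$ and a spatial window below $\ln K$, keeping the writer's region entirely out of $\bar D$); still, making this precise requires the known $C^1$-regularity of $P$ at $s(t)$ and a careful accounting, analogous to the American case. Everything else—the sign conditions, the two complementary PDE identities, and the boundary data—is essentially bookkeeping from the relations already assembled in \eqref{-P_t}--\eqref{3.25+}.
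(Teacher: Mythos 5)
Your proposal is correct and takes essentially the same route as the paper: the verification of (i)--(v) is read off from (\ref{v(tx)(PP)})--(\ref{3.25+}) exactly as the paper asserts, and your uniqueness argument (test (ii) for each solution against the other, add, integrate over $D$, use vanishing of $u=v_1-v_2$ on the parabolic boundary, and conclude from the resulting energy identity) is precisely the computation the paper carries out in detail rather than citing. The only difference is one of emphasis — you elaborate the Sobolev regularity across the holder's free boundary, which the paper leaves implicit, while you defer the energy estimate to the literature, which the paper writes out.
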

\begin{proof}
We shall prove uniqueness, the fact that $v$ is a solution to v.i. 
Problem 1 follows from (\ref{v(tx)(PP)})-(\ref{3.25+}). Assume that 
$v$ and $\tilde v$ are two solutions of v.i. Problem 1. Since 
$\tilde v\geq 0$ (property (i)) we can use the property (ii) of $v$ and 
replace $w$ by $\tilde v$. Since both of them are solutions we obtain that
\begin{equation}\label{2ineq}
(v_t-\textbf{A}v)(\tilde v-v)\geq f(\tilde v-v)\ and\ (\tilde v_t-\textbf{A}
\tilde v)(v-\tilde v)\geq f(v-\tilde v).
\end{equation}
Define the parabolic boundary as the boundary of $D$ without the interval 
$\{T'\}\times(a_0,b)$ and
let $u=v-\tilde v$. Note that $u$ is zero on the parabolic boundary and the 
sum of the two inequalities (\ref{2ineq}) is
\begin{equation}\label{(u_t-Au)u}
u_tu -\frac{\kappa^2}{2}u_{xx}u-\mu u_{x}u+ru^2=(u_t-\textbf{A}u)u\leq 0.
\end{equation}
Integrating both sides of (\ref{(u_t-Au)u}) on $(0,T')\times(a_0,b)$ we obtain
four terms on the left side. For the first term we have
$$
\int_{a_0}^{b}\int_{0}^{T'}u(t,x)u_t(t,x)dtdx=\int_{a_0}^{b}\frac{1}{2}
u^2(T',x)dx\geq 0.
$$
Integration by parts of the second term and the fact that $u=0$ on the 
parabolic boundary yields
\begin{equation}\label{3.26.1}
-\frac{\kappa^2}{2}\int_{0}^{T'}\int_{a_0}^{b}u_{xx}(t,x)u(t,x)dxdt=
\frac{\kappa^2}{2}\int_{0}^{T'}\int_{a_0}^{b}u_x^2(t,x)dxdt\geq 0.
\end{equation}
For the third term note that $u_xu=\frac{1}{2}\frac{du^2}{dx}$ and that 
$u(t,a_0)=u(t,b_0)=0$ for every $t$, and so
$$
\mu\int_{0}^{T'}\int_{a_0}^{b}u_{x}(t,x)u(t,x)dxdt=0.
$$
The last term satisfies $r\int_{0}^{T'}\int_{a_0}^{b}u^2(t,x)dxdt\geq 0$ 
since  $r>0$.
We conclude that the left side of (\ref{(u_t-Au)u}) can not be negative and 
so it must be zero. Since all terms in the left hand 
side of (\ref{(u_t-Au)u}) are non-negative and their sum is equal to 0 we 
obtain that $r\int_{0}^{T'}\int_{a_0}^{b}u^2(t,x)dxdt = 0$, and so $u=0$ 
almost everywhere (a.e.). Hence, $v=\tilde v$ a.e., and so there is only 
one continuous solution.
\end{proof}

Denote parts of the boundary of $D=(0,T')\times(a_0,b)$ by
\[
\Gamma_1=[0,T']\times\{b\},\ \ \Gamma_2=\{0\} \times(a_0,b),\ \ 
\Gamma_3=[0,T']\times \{a_0\},\ \ \Gamma=\Gamma_1\cup\Gamma_2\cup\Gamma_3
\]
and set
\begin{equation}\label{L-operator1}
\textbf{L}=\frac{\partial}{\partial t }-\textbf{A}=\frac{\partial}{\partial t }
-\frac{\kappa^2}{2}\frac{\partial^2}{\partial x^2}-\mu \frac{\partial}
{\partial x}+r.
\end{equation}
Thus, $\Gamma$ is a parabolic boundary of $D$. For every $\ve>0$ we define 
following functions.
\begin{enumerate}
\item A smooth function $f^{(\ve)}(x)\geq f(x)$ on $(a_0,b)$ such that 
$f^{(\ve)}(x)=f(x)$ for $ s(T')<a<x\leq b$ and for  $a_0\leq x\leq a_1$
 where $a_1$ satisfies $a_0< a_1<s(T')$ and $\lim_{\ve\to 0}f^{(\ve)}(x)
 =f(x)$ for $a_0\leq x\leq b$.
\item A smooth function $\beta^{(\ve)}(v)$ satisfying 
$$
\beta^{(\ve)}(v)=0\,\,\mbox{for all}\,\, v\geq \ve,\,\,
\beta^{(\ve)}(0)=-1,\,\, \beta^{(\ve)}_{v}(v)\geq 0\,\,
\mbox{and}\,\, \beta^{(\ve)}_{vv}(v)\leq 0.
$$
\item $\psi^{(\ve)}(t)=\psi(t)+\ve$ with $\psi$ defined in 
(\ref{psi(t)andg(t)}).
\item  A smooth function $\eta(x)$ such that $0\leq\eta(x)\leq 1$ and for 
some $a<a_2<b$,
$$
\eta(x)=1\,\,\mbox{for}\,\, a_2\leq x\leq b\,\,\mbox{and}\,\, \eta(x)=0\,\,
\mbox{for}\,\, a_0\leq x\leq a.
$$
\end{enumerate}
Set $F_{\ve}(x,v)=f^{(\ve)}(x)-rK\beta^{(\ve)}(v)$ which is a 
 Lipschitz continuous function and for every constant $C$ there is $M_0$ 
such that $C|F^{(\ve)}(x,v)|\leq M$ whenever $M\geq M_0$ and $|v|\leq M $.
Let $\phi^{(\ve)}$ be a function on $\Gamma$ satisfying
$$
\phi^{(\ve)}|_{\Gamma_1}=\psi^{(\ve)}(t),\ \ 
\phi^{(\ve)}|_{\Gamma_2}=\ve\eta(x),\ \ \phi^{(\ve)}|_{\Gamma_3}
=0,
$$
and, moreover, relying on Chapter 3 in ~\cite{F} we can choose $\phi^{(\ve)}$
so that
\begin{enumerate}
\item $\phi^{(\ve)} \in \bar C_{2+\delta}[D] $ for some $0<\delta<1$ (in fact 
for each $\delta$) and we refer the reader to Chapter 3 in \cite{F} for the
 definition of $\bar C_{2+\delta}[D]$ and for conditions yielding that a
 function defined only on the boundary $\Gam$ can be extended to a function
 from $\bar C_{2+\delta}[D]$.
\item $\bfL\phi^{(\ve)}=F^{(\ve)}(x,\psi)$ at the points $(0,b)$ and $(0,a_0)$.
\end{enumerate}
By the theory of semi-linear parabolic equations (see ~\cite{F}) there exist 
a function $v^{(\ve)}\in\bar C_{2+\gamma}[D]$ for some $0<\gamma<1$ such that
\begin{equation}\label{Lv=F_eps}
\textbf{L}v^{(\ve)}=F^{(\ve)}(x,v^{(\ve)})\,\,\mbox{and}\,\, 
v^{(\ve)}|_{\Gamma}=\phi^{(\ve)}.
\end{equation}
In particular $v^{(\ve)},v^{(\ve)}_{x},v^{(\ve)}_{xx},v^{(
\ve)}_{t},$ are continuous on $\bar D$.

Let $w=v^{(\ve)}_{t}$. By differentiating with respect to $t$ the 
equation (\ref{Lv=F_eps}) and taking into account (\ref{psi(t)andg(t)}),
(\ref{Lv=F_eps}) and the properties of $\phi^{(\ve)}$ we obtain that
\begin{eqnarray}\label{w=v_epst}
&w_t-\frac{\kappa^2}{2}w_{xx}-\mu w_{x}+\big (r+rK\beta^{(\ve)}_{v}
(v^{(\ve)})\big)w=0\\
&\mbox{where}\ \ w(t,b)=g(t)\ \ \forall 0\leq t\leq T' \ \mbox{and}
 \ w(t,a_0)=0\ \ \forall 0\leq t\leq T'\nonumber\\
&w(0,x)=f^{(\ve)}(x)+\ve\big(\frac{\kappa^2}{2}\eta_{x,x}(x)+
\mu\eta_{x}(x)-r\eta(x)\big)-rK\beta_v^{(\ve)}(v^{(\ve)}(0,x))\,\,\,\ 
\forall a_0 \leq x\leq b.\nonumber
\end{eqnarray}
We see that in $D$ the function $w$ is a solution to a parabolic equation
and since $r+rK\beta_v^{(\ve)}\geq 0$ we can use the maximum principle
\begin{equation}\label{bounds}
\min\big (\min_{\Gamma}(w),0 \big )\leq w(t,x) \leq \max\big (\max_{\Gamma}(w),
0 \big)\ \ \forall (t,x)\in D.
\end{equation}
Therefore in order to bound the function $w$ we only need to bound its 
values on the parabolic boundary. First, we estimate the left hand side of 
(\ref{bounds}).
For $a\leq x\leq b$ we have that $v^{(\ve)}(0,x)=\ve\eta(x)\leq
 \ve $, and so  $\beta^{(\ve)}(v^{(\ve)})\leq 0$. In view of (\ref{-P_t}),
 (\ref{f(x)AP detail}) and the definition of $f^{(\ve)}$ above there exists
  $\ve_0>0$ such that for every  $0<\ve \leq \ve_0$,
\begin{eqnarray*}
&w(0,x)\geq f(x)+\ve(\eta_{xx}(x)+\mu\eta_{x}(x)-r\eta(x))\\
&=-P_t(T',x)+\ve(\frac{\kappa^2}{2}\eta_{xx}(x)+\mu\eta_{x}(x)-r\eta(x))
\geq 0\,\,\,\ \ \forall\ a \leq x\leq b.
\end{eqnarray*}
On the interval $a_0\leq x \leq  a$ we have $\eta = 0$ and since 
$v^{(\ve)}(0,x)=\ve\eta(x)=0$ we see that $\beta^{(\ve)}
(v^{(\ve)}(0,x))=-1$. Since on this interval $f(x)\geq -rK$ we obtain
$$
w(0,x)=f^{(\ve)}(x)-rK\beta^{(\ve)}(0)\geq f(x)+rK\geq 0\ 
\,\mbox{for}\,\ a_0 \leq x< a.
$$
Hence, $w\geq 0$ on $\Gamma_2$.  We obtain next that,
\[
w(t,b)=-\frac{\partial P}{\partial t}(T'-t,b)\geq 0\,\,\mbox{on}\,\,\Gam_1\,\,
\mbox{and}\,\, w(t,b)=0\,\,\mbox{on}\,\,\Gam_3.
\]
It follows that $\min\big (\min_{\Gamma}(w),0 \big )=0.$

Next, we estimate the right hand side of (\ref{bounds}). On $\Gamma_2$ we 
have that
$$
w(0,x)\leq |f^{(\ve)}(x)+\ve(\frac{\kappa^2}{2}\eta_{xx}(x)+
\mu\eta_{x}(x)-r\eta(x))-rK\beta^{(\ve)}(v^{(\ve)}(0,x))  |
$$
$$
\leq \sup|f^{(\ve)}|+\ve\sup|\frac{\kappa^2}{2}\eta_{xx}+
\mu\eta_{x}-r\eta|+rK  \leq C_0
$$
where $C_0>0$ is a constant independent of $\ve$, and so
$$
0=\min\big (\min_{\Gamma}(w),0 \big)\leq w(t,x)\leq  \max\big(\max(-
\frac{\partial P}{\partial t}(T'-t,b),C_0)  \big)=C_1.
$$
We conclude that there are some constants $\ve_0$ and $C_1 $ such 
that for every $0<\ve\leq \ve_0$,
\begin{equation}\label{C_1}
0\leq v^{(\ve)}_{t}(t,x)\leq C_1.
\end{equation}
Since $v^{(\ve)}_{t}\geq 0$ and $v^{(\ve)}(0,x)\geq 0 $ for 
$a_0\leq x\leq b$ we deduce that $v^{(\ve)}(t,x) \geq 0$
and because $v^{(\ve)}_{t}$ is uniformly bounded it follows that 
$v^{(\ve)}$ is also uniformly bounded. By the properties of $\be^{(\ve)}$
we see that
\begin{equation}\label{beta.v}
-1\leq\beta^{(\ve)}(v^{(\ve)})\leq 0\ \ or\ \ \|\beta^{(\ve)}
(v^{(\ve)})\|_{L^{\infty}[D]}\leq1.
\end{equation}
Let $D_0=(0,T)\times(a_2,b)$ be an upper subrectangle of $D$ where $a_2$ is 
the same as in the definition of the function $\eta$ in (4). From the 
definition we have $v^{(\ve)}(0,x)=\ve\eta(x)=\ve$ in $\bar D_0$ and since 
 $v^{(\ve)}_t$ is nonnegative, we obtain that $v^{(\ve)}(t,x)
 \geq\ve$,  and so  $\beta^{(\ve)}(v^{(\ve)}(t,x))=0.$

This means that on $D_0$ the function $v^{(\ve)}$ satisfies the 
parabolic equation
$$
\textbf{L}v^{(\ve)}=f^{(\ve)}.
$$ 
For $w=v^{(\ve)}_{t}$ and $0<\ve<\ve_0$ we also have that  
$$
\textbf{L}w(t,x)=0\ 
\ \forall (t,x)\in D_0\,\,\mbox{and}\,\, w(0,x)=f^{(\ve)}(x)\,\,\mbox{when}\,\,
  a_2\leq x\leq b.
$$ 

Next, let $y(t,x)$ be a function on $D_0$ such that 
$$
\textbf{L}y(t,x)=0\ \ 
\forall (t,x)\in D_0,\ \ y(0,x)=f^{(\ve)}(x) \ \forall a_2\leq x\leq b
$$ 
and all of 
its first and second order derivatives are bounded there. Such a function
exists since we can choose a smooth function on the remaining part 
$\Gam_0\setminus\{0\}\times[a_2,b]$ of the parabolic boundary $\Gam_0$ of $D_0$
which extends $f^{(\ve)}(x)$ as a smooth function to the whole $D_0$,
and then 
use Theorem 12 from Chapter 3 in ~\cite{KSt}. For each $\ve<\ve_0$ 
we define $z(t,x)=w(t,x)-y(t,x)$ in the domain $D_0$ where $w(t,x)=
v^{(\ve)}_{t}(t,x)$. Then $z(0,x)=w(0,x)-y(0,x)=0$ for every 
$a_2\leq x\leq b$. Fix $x_0\in(a_2,b)$, then by Proposition 4.5 from 
Section 4.1
 of ~\cite{KSt} we obtain that $|z_t(t,x_0)|,|z_{tt}(t,x_0)|<C$ for every 
 $0\leq t\leq T'$ where a constant $C>0$ is independent of $\ve$.
Since we assume that  $|y_t(t,x_0)|,|y_{tt}(t,x_0)|<C_1$ for $0\leq t\leq T'$ 
it follows that $|w_t(t,x_0)|,|w_{tt}(t,x_0)|<C_1+C$  (for every $\ve$).
Let  $D_1= (x_0,b)\times(0,T)$, then by Theorem $6$ in chapter 3 of
 ~\cite{F} we obtain that for every $\ve>0$ (and in fact every 
 $0<\alpha<1$) $v^{(\ve)},v^{(\ve)}_{t} \in \bar C_{2+\alpha}[D_1]$
and  there is a constant $C$ independent of $\ve$ such that  
$$
\overline{|v^{(\ve)}|}_{2+\alpha}+\overline{|v^{(\ve)}_{t}|}_{2+
\alpha}<C.
$$
In particular, we get
\begin{equation}\label{4.12-old} 
\|v^{(\ve)}_{x}\|_{L^{\infty}(D_1)}+
\|v^{(\ve)}_{tx}\|_{L^{\infty}(D_1)}+\|v^{(\ve)}_{tt}\|_{L^{
\infty}(D_1)}<C.
\end{equation}

Considering again the whole region $D$ we have
$$
-\int_x^{b}v^{(\ve)}_{xx}(t,y)dy=\frac{2}{\kappa^2}\int_{x}^{b}\big(
\mu v^{(\ve)}_{x}(t,y)-v^{(\ve)}_{t}(t,y)-rv^{(\ve)}(t,y)+
\beta^{(\ve)}(v^{(\ve)}(t,y))+f^{(\ve)}(y)\big)dy.
$$
Hence,
\begin{eqnarray*}
&v^{(\ve)}_{x}(t,x)=v^{(\ve)}_{x}(t,b)+\frac{2}{\kappa^2}\big [
\mu\big(v^{(\ve)}(t,b)-v^{(\ve)}(t,x)\big)\\
&+\int_{x}^{b}\big( -v^{(\ve)}_{t}(t,y)-rv^{(\ve)}(t,y)+\beta^{(\ve)}
(v^{(\ve)}(t,y))+f^{(\ve)}(y) \big) dy\big ].
\end{eqnarray*}
Since all terms in the right hand side are uniformly bounded there is a 
constant $C>0$ independent of $\ve $ such that $\|v^{(\ve)}_{x}\|<C$
 for every $0<\ve\leq\ve_0$. Now we see that in the equation 
 $$
 v^{(\ve)}_{xx}(t,y)=\frac{2}{\kappa^2}\big(\mu 
 v^{(\ve)}_{x}(t,y)-v^{(\ve)}_{t}(t,y)-rv^{(\ve)}(t,y)+
 \beta^{(\ve)}(v^{(\ve)}(t,y))+f^{(\ve)}(y) 
 $$
all terms in the right hand side are uniformly bounded and therefore 
the term in the left is uniformly bounded, as well.

We summarize this in the following lemma.
\begin{lem}\label{lem3.6}
There are constants $C>0,\ve_0>0$  such that for every $\ve\leq
 \ve_0$,
$$
\|v^{(\ve)}_{xx} \|_{L^{\infty}[D]}+\|v^{(\ve)}_{x} 
\|_{L^{\infty}[D]}+\|v^{(\ve)}_{t} \|_{L^{\infty}[D]}+\|v^{(\ve)} 
\|_{L^{\infty}[D]}\leq C. 
$$
\end{lem}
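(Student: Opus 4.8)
The plan is to collect the uniform (in $\ve$) bounds that were established in the preceding discussion and assemble them into the single estimate claimed in Lemma \ref{lem3.6}. The argument proceeds in the order in which the bounds become available, using the representation of $v^{(\ve)}$ as the solution of the semi-linear parabolic problem (\ref{Lv=F_eps}) together with the maximum principle applied to $w=v^{(\ve)}_t$.

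First I would record the bound on the time derivative. Differentiating (\ref{Lv=F_eps}) in $t$ gives the linear parabolic equation (\ref{w=v_epst}) for $w=v^{(\ve)}_t$, whose zeroth-order coefficient $r+rK\beta^{(\ve)}_v(v^{(\ve)})$ is nonnegative by the defining properties of $\be^{(\ve)}$; hence the maximum principle yields (\ref{bounds}). The data of $w$ on the parabolic boundary $\Gam$ are controlled uniformly in $\ve$: on $\Gam_2$ using (\ref{-P_t}), (\ref{f(x)AP detail}), the choice of $f^{(\ve)}$ and the normalization $\beta^{(\ve)}(0)=-1$; on $\Gam_1$ and $\Gam_3$ using $w(t,b)=-P_t(T'-t,b)\geq 0$ and $w(t,a_0)=0$. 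This produces the two-sided bound (\ref{C_1}), $0\leq v^{(\ve)}_t\leq C_1$. From $v^{(\ve)}_t\geq 0$ and $v^{(\ve)}(0,\cdot)=\ve\eta\geq 0$ we get $v^{(\ve)}\geq 0$, and combined with the upper bound on $v^{(\ve)}_t$ and the bounded time interval we get a uniform $L^\infty$ bound on $v^{(\ve)}$ itself, which in turn gives $\|\beta^{(\ve)}(v^{(\ve)})\|_{L^\infty[D]}\leq 1$ as in (\ref{beta.v}).

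Next I would bound the spatial derivatives by integrating the equation (\ref{Lv=F_eps}) rewritten as an ODE in $x$: for fixed $t$,
\[
v^{(\ve)}_x(t,x)=v^{(\ve)}_x(t,b)+\frac{2}{\ka^2}\Big[\mu\big(v^{(\ve)}(t,b)-v^{(\ve)}(t,x)\big)+\int_x^b\big(-v^{(\ve)}_t-rv^{(\ve)}+\beta^{(\ve)}(v^{(\ve)})+f^{(\ve)}\big)(t,y)\,dy\Big].
\]
The boundary term $v^{(\ve)}_x(t,b)$ is controlled uniformly in $\ve$ by the interior estimate (\ref{4.12-old}) on the subrectangle $D_1$ obtained from Schauder theory, and every term under the integral is already known to be uniformly bounded; hence $\|v^{(\ve)}_x\|_{L^\infty[D]}\leq C$. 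Feeding this back into $v^{(\ve)}_{xx}=\frac{2}{\ka^2}(\mu v^{(\ve)}_x-v^{(\ve)}_t-rv^{(\ve)}+\beta^{(\ve)}(v^{(\ve)})+f^{(\ve)})$ shows the right-hand side is uniformly bounded, hence so is $v^{(\ve)}_{xx}$. Adding the four estimates gives the stated inequality.

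The main obstacle — and the step that genuinely requires the earlier machinery rather than routine bookkeeping — is obtaining the uniform bound on $v^{(\ve)}_x(t,b)$, i.e. controlling a spatial derivative on the boundary $\Gam_1$ independently of $\ve$. This is exactly why the auxiliary subrectangle $D_0=(0,T')\times(a_2,b)$ and then $D_1=(x_0,b)\times(0,T')$ were introduced: on $D_0$ one has $v^{(\ve)}\geq\ve$, so $\beta^{(\ve)}(v^{(\ve)})\equiv 0$ and $v^{(\ve)}$ solves the genuinely linear equation $\bfL v^{(\ve)}=f^{(\ve)}$; comparing $w=v^{(\ve)}_t$ with the auxiliary function $y$ having bounded derivatives and applying the interior Schauder-type estimate (Proposition 4.5 of \cite{KSt}) bounds $w_t,w_{tt}$ on a vertical line, after which Theorem 6 of Chapter 3 of \cite{F} upgrades this to the $\bar C_{2+\alpha}[D_1]$ bound (\ref{4.12-old}), which includes $v^{(\ve)}_x$ up to the boundary $x=b$. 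Once that single boundary quantity is controlled, the rest of the proof is the straightforward integration argument above.
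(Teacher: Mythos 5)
Your proof is correct and follows essentially the same route as the paper, which presents Lemma \ref{lem3.6} as a summary of the estimates derived just before it: the maximum principle for $w=v^{(\ve)}_t$ giving (\ref{C_1}), the resulting $L^\infty$ bounds on $v^{(\ve)}$ and $\beta^{(\ve)}(v^{(\ve)})$, the boundary control of $v^{(\ve)}_x(t,b)$ via $D_0$, $D_1$ and (\ref{4.12-old}), and finally the integration of the equation in $x$ to bound $v^{(\ve)}_x$ and then $v^{(\ve)}_{xx}$. You have also correctly identified the uniform bound on $v^{(\ve)}_x(t,b)$ as the genuinely non-routine step.
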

We now obtain the following (see ~\cite{KSt}).
\begin{prop}
For any $1<p<\infty$  and $t\in [0,T']$, $v^{(\ve)}\to v$ as $\ve\to 0$ weakly
in $W^{1,p}(D)$. Furthermore, $v^{(\ve)}\to v$ uniformly on $D$ and also 
$v^{(\ve)}_{x}\to v_{x}$ uniformly in $x\in[0,K]$ for each  $t\in[0,T']$. The
function $v$ is the unique solution of v.i. Problem 1.
\end{prop}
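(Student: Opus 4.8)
The plan is to pass to the limit $\ve\to 0$ in the penalized semilinear problem
$\textbf{L}v^{(\ve)}=F^{(\ve)}(x,v^{(\ve)})$, exploiting the uniform bounds of
Lemma~\ref{lem3.6}. First I would note that by Lemma~\ref{lem3.6} the family
$\{v^{(\ve)}\}_{0<\ve\leq\ve_0}$ is bounded in $W^{1,p}(D)$ for every
$1<p<\infty$ (indeed all of $v^{(\ve)},v^{(\ve)}_x,v^{(\ve)}_{xx},v^{(\ve)}_t$
are bounded in $L^\infty[D]$, hence in every $L^p[D]$). Consequently, by weak
compactness of bounded sets in the reflexive space $W^{1,p}(D)$, there is a
sequence $\ve_k\to 0$ and a function $\bar v\in W^{1,p}(D)$ with
$v^{(\ve_k)}\rightharpoonup\bar v$ weakly in $W^{1,p}(D)$; since $W^{1,p}(D)$
embeds compactly in $C[\bar D]$ for $p$ large (Sobolev embedding in two
variables, using also the parabolic scaling so that $v^{(\ve)}_{xx}$ bounded
gives extra regularity), one also gets $v^{(\ve_k)}\to\bar v$ uniformly on
$\bar D$, and likewise $v^{(\ve_k)}_x\to\bar v_x$ uniformly in $x\in[a_0,b]$
for each fixed $t$ (here one uses that $v^{(\ve)}_x$ is equicontinuous in $x$,
which follows from the $L^\infty$ bound on $v^{(\ve)}_{xx}$ via
$v^{(\ve)}_x(t,x_2)-v^{(\ve)}_x(t,x_1)=\int_{x_1}^{x_2}v^{(\ve)}_{xx}(t,y)\,dy$).
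A standard diagonal/subsequence argument then upgrades this to convergence
along the full family $\ve\to 0$ once the limit is identified uniquely.

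Next I would identify $\bar v$ as the solution of v.i.\ Problem~1. The boundary
conditions (iii)--(v) pass to the limit directly from the uniform convergence
together with $\psi^{(\ve)}=\psi+\ve\to\psi$, $\ve\eta(x)\to 0$, and
$\phi^{(\ve)}|_{\Gamma_3}=0$. For the variational inequalities, the key points
are: (a) $\bar v_t\geq 0$, which follows from $v^{(\ve)}_t\geq 0$
(the bound (\ref{C_1})) passing to the weak limit; (b) $\bar v\geq 0$, which
likewise follows since $v^{(\ve)}\geq 0$; (c) on the set $\{\bar v>0\}$ one has
$\beta^{(\ve)}(v^{(\ve)})\to 0$ (because where $\bar v>0$ we have
$v^{(\ve)}\geq\ve$ eventually by uniform convergence, so $\beta^{(\ve)}=0$
there), whence $F^{(\ve)}(x,v^{(\ve)})=f^{(\ve)}(x)\to f(x)$ and the equation
$\textbf{L}v^{(\ve)}=f^{(\ve)}$ passes to the limit to give
$\bar v_t-\textbf{A}\bar v=f$ on $\{\bar v>0\}$; and (d) on $\{\bar v=0\}$, since
$\beta^{(\ve)}\in[-1,0]$ one has $-rK\beta^{(\ve)}(v^{(\ve)})\in[0,rK]$, so in
the limit $\textbf{L}\bar v=f+\theta$ for some $\theta\in[0,rK]$, i.e.
$\bar v_t-\textbf{A}\bar v-f\geq 0$. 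Combining (a)--(d) exactly reproduces
conditions (i) and (ii) of v.i.\ Problem~1 in the usual way: for any
$w\in L^2[D]$ with $w\geq 0$ one checks
$(\bar v_t-\textbf{A}\bar v-f)(w-\bar v)\geq 0$ a.e., splitting into the two
sets $\{\bar v>0\}$ (where the first factor is $0$) and $\{\bar v=0\}$ (where
the first factor is $\geq 0$ and $w-\bar v=w\geq 0$). Finally, uniqueness of
the solution of v.i.\ Problem~1 was already established in
Lemma~\ref{v.i.problem}, so $\bar v=v$, and since the limit is independent of
the chosen subsequence, the convergence holds for the whole family $\ve\to 0$,
completing the proof.

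I expect the main obstacle to be the careful justification of passing to the
limit in the nonlinear penalty term $rK\beta^{(\ve)}(v^{(\ve)})$ and, relatedly,
the derivation of the variational inequality (ii) rather than just the two
regime equations. One must argue that the weak-$\ast$ limit in $L^\infty[D]$
of $\beta^{(\ve)}(v^{(\ve)})$ is supported (as a deviation from $0$) on
$\{\bar v=0\}$ and takes values in $[-1,0]$ there; the uniform convergence
$v^{(\ve)}\to\bar v$ makes this manageable, since on any compact subset of
$\{\bar v>0\}$ we have $v^{(\ve)}\geq\ve$ for small $\ve$ and hence
$\beta^{(\ve)}(v^{(\ve)})\equiv 0$ there. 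The remaining steps---the compactness
and boundary-value passages---are routine given Lemma~\ref{lem3.6} and the
standard Sobolev embedding and weak-compactness machinery, and the uniqueness
needed to conclude is already in hand from Lemma~\ref{v.i.problem}. I would
also remark that the uniform convergence $v^{(\ve)}\to v$ combined with
(\ref{v(tx)(PP)}) transfers the regularity bounds of Lemma~\ref{lem3.6} to the
price function $P$ near the holder's boundary, which is the payoff we actually
want from this subsection.
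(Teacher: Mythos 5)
Your proposal is correct and is precisely the standard penalization argument (uniform a priori bounds from Lemma \ref{lem3.6}, weak compactness plus Arzel\`a--Ascoli, identification of the limit by localizing the penalty term $rK\beta^{(\ve)}(v^{(\ve)})$ on $\{\bar v=0\}$, and then uniqueness from Lemma \ref{v.i.problem} to upgrade subsequential to full convergence). The paper gives no proof of this proposition at all --- it simply cites \cite{KSt} --- and the argument it has in mind is exactly the one you wrote out.
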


Next, we analyze properties of second order derivatives starting
 with the following result.
\begin{lem}\label{sd_lem}
There is a constant $C>0$ such that for any $0<\ve \leq \ve_0$,
$$
\int_{0}^{T'} \int_{a_0}^{b}(v^{(\ve)}_{tx}(t,x))^2dxdt<C.
$$
\end{lem}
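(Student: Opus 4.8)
The plan is to run the standard energy estimate for the linear parabolic equation (\ref{w=v_epst}) satisfied by $w=v^{(\ve)}_t$, exploiting that its zero-order coefficient $r+rK\beta^{(\ve)}_v(v^{(\ve)})$ is nonnegative ($\beta^{(\ve)}_v\ge0$ by construction), so that the $\ve$-dependence of this coefficient is harmless: the corresponding term will simply be discarded.

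Concretely, I multiply the equation $w_t-\tfrac{\kappa^2}2w_{xx}-\mu w_x+\bigl(r+rK\beta^{(\ve)}_v(v^{(\ve)})\bigr)w=0$ by $w$ and integrate over $D=(0,T')\times(a_0,b)$. The parabolic term gives $\int_D ww_t=\tfrac12\int_{a_0}^b w(T',x)^2\,dx-\tfrac12\int_{a_0}^b w(0,x)^2\,dx$; integrating $-\tfrac{\kappa^2}2\int_D w_{xx}w$ by parts in $x$ produces $\tfrac{\kappa^2}2\int_D w_x^2$ together with the boundary term $-\tfrac{\kappa^2}2\int_0^{T'}w_x(t,b)g(t)\,dt$, the contribution at $x=a_0$ vanishing since $w(t,a_0)=0$; the drift term gives $-\mu\int_D w_xw=-\tfrac\mu2\int_0^{T'}g(t)^2\,dt$ (again using $w(t,a_0)=0$ and $w(t,b)=g(t)$ from (\ref{w=v_epst})); and $\int_D(r+rK\beta^{(\ve)}_v(v^{(\ve)}))w^2\ge0$. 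Discarding this last term and keeping the $w_x^2$-term I obtain
\[
\tfrac{\kappa^2}2\int_0^{T'}\!\!\int_{a_0}^b \bigl(v^{(\ve)}_{tx}(t,x)\bigr)^2\,dx\,dt\ \le\ \tfrac12\int_{a_0}^b w(0,x)^2\,dx+\tfrac{\kappa^2}2\int_0^{T'}\!w_x(t,b)g(t)\,dt+\tfrac\mu2\int_0^{T'}\!g(t)^2\,dt.
\]

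It then remains to bound the three terms on the right uniformly in $\ve$. For the first, $0\le w=v^{(\ve)}_t\le C_1$ by (\ref{C_1}), hence $\int_{a_0}^b w(0,x)^2\,dx\le(b-a_0)C_1^2$. The function $g$ does not depend on $\ve$ and, by (\ref{rem4.7}), $g(t)=v_t(t,b)=-P_t(T'-t,b)$; since $s(t)\le s(T')<b<\ln K$ for $0\le t\le T'$, the point $(t,b)$ lies in the region where $P$ is smooth (Proposition \ref{firstProp}(iii)), so $g$ is bounded on $[0,T']$ and $\int_0^{T'}g(t)^2\,dt$ — hence also $\tfrac\mu2\int_0^{T'}g(t)^2\,dt$ — is a finite constant. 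The only remaining term is $\int_0^{T'}w_x(t,b)g(t)\,dt=\int_0^{T'}v^{(\ve)}_{tx}(t,b)g(t)\,dt$: by (\ref{4.12-old}) one has $\|v^{(\ve)}_{tx}\|_{L^\infty(D_1)}<C$ with $C$ independent of $\ve$ on the subrectangle $D_1$ reaching up to $x=b$, and $v^{(\ve)}_{tx}$ is continuous on $\bar D_1$, so $|v^{(\ve)}_{tx}(t,b)|\le C$ and this term is at most $C\int_0^{T'}|g(t)|\,dt$. Combining these bounds with the displayed inequality gives the asserted constant.

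The main obstacle is precisely the boundary term at $x=b$: without the uniform-in-$\ve$ Schauder bound (\ref{4.12-old}) on $v^{(\ve)}_{tx}$ near the lateral boundary $x=b$, the integration by parts of the $w_{xx}$-term would leave an uncontrolled contribution. One should also note that for each fixed $\ve$ the integrations by parts above are legitimate, since $v^{(\ve)}\in\bar C_{2+\gamma}[D]$ and $w=v^{(\ve)}_t$ solves a linear parabolic equation with H\"older-continuous coefficients and hence is regular enough up to the smooth portions of $\partial D$; the resulting estimate is uniform in $\ve$ only because the $\ve$-dependent zero-order coefficient enters with a favourable sign and is dropped.
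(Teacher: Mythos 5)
Your proof is correct and follows essentially the same route as the paper: multiply the equation (\ref{w=v_epst}) for $w=v^{(\ve)}_t$ by $w$, integrate by parts, discard the nonnegative zero-order term, control the lateral boundary contributions at $x=b$ via the uniform Schauder bound (\ref{4.12-old}) and at $x=a_0$ via $w(t,a_0)=0$, and use (\ref{C_1}) for the remaining terms. The only cosmetic difference is that you integrate over all of $D$ at once, whereas the paper first integrates in $x$ to get a differential inequality in $t$ and then integrates that over $[0,T']$.
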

\begin{proof}
 Set $v=v^{(\ve)},\ \beta=\beta^{(\ve)}$ and $w=v^{(\ve)}_{t}$.
Multiply the equation (\ref{w=v_epst}) by $w$ to obtain
$$
ww_t-\frac{\kappa^2}{2}ww_{xx}-\mu ww_{x}+(r+rK\beta'(v))w^2=0. 
$$
Integrating this equation over $(a_0,b)$ and recalling that $\beta'(v),t$ and
 $K$ are non-negative we obtain that for any $0\leq t\leq T'$,
\begin{equation}\label{sd_lem_eq1}
\frac{1}{2}\frac{d}{dt}\int_{a_0}^{b}w^2(t,x)dx-\frac{\kappa^2}{2}
\int_{a_0}^{b}w(t,x)w_{xx}(t,x)dx-
\mu \int_{a_0}^{b}\frac{1}{2}\frac{dw^2}{dx}(t,x)dx\leq 0.
\end{equation}
By (\ref{Lv=F_eps}) and (\ref{C_1}) we estimate the third term in 
(\ref{sd_lem_eq1}),
$$
\big | \mu\int_{a_0}^{b}\frac{1}{2}\frac{dw^2}{dx}(t,x)dx\big| =|\mu|
|w^2(t,b)-w^2(t,a_0)|=|\mu|w^2(t,b)<C^2_1.
$$
 
For the second term in (\ref{sd_lem_eq1}) we see that
$$
-\frac{\kappa^2}{2}\int_{a_0}^{b}w(t,x)w_{xx}(t,x)dx=\frac{\kappa^2}{2} 
\big ( \int_{a_0}^{b}w^2_{x}(t,x)dx-w(t,b)w_{x}(t,b)+w(t,a_0)w_{x}(t,a_0)
\big).
$$
Since $w(t,a_0)=0$ and the function $w(t,x)$ is uniformly bounded in $D$ 
we see in view of (\ref{4.12-old})
that $w_{x}=v^{(\ve)}_{tx}$ is uniformly bounded near the boundary
  $[0,T']\times\{b\}$ and $w(t,a_0)w_{x}(t,a_0)=0$ while 
$|w(t,b)w_{x}(t,b)|<C_2$ for some constant $C_2>0$ independent of $\ve$.
Thus, we conclude from (\ref{sd_lem_eq1}) that
$$
\frac{1}{2}\frac{d}{dt}\int_{a_0}^{b}w^2(t,x)dx+\frac{\kappa^2}{2}
\int_{a_0}^{b}w_x^2(t,x)dx\leq C_3
 $$
for some $C_3>0$ independent of $\ve$. Integrating the last equation over
 $[0,T']$ we obtain
$$
\frac{\kappa^2}{2}\int_{0}^{T'} \int_{a_0}^{b}w^2_{x}(t,x)dxdt+\frac{1}{2}
\int_{a_0}^{b}(w^2(T,x)-w^2(0,x))dx\leq C_3.
$$
Since the function $w$ is uniformly bounded it follows that there is $C>0$
 independent of $\ve$ such that
$$
\int_{0}^{T'} \int_{a_0}^{b}w^2_{x}(t,x)dxdt\leq C.
$$
\end{proof}

We will now deal with the $L^2$ properties of the function 
$v^{(\ve)}_{tt}(t,x)$.
\begin{lem}\label{sd_lem2}
There is a constant $C>0$ such that for any $0<\ve\leq \ve_0$
 and every $0<\sigma\leq t \leq T'$,
$$
\int_{a_0}^{b}(v^{(\ve)}_{tx}(t,x))^2dx+\int_{\sigma}^{t}
\int_{a_0}^{b}(v^{(\ve)}_{tt}(x,t))^2dxd\tau \leq \frac{C}{\sigma}.
$$
\end{lem}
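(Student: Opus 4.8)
The plan is a $\tau$-weighted energy estimate for $w:=v^{(\ve)}_{t}$. By (\ref{w=v_epst}), $w$ solves in $D$ a linear parabolic equation $w_{t}-\frac{\kappa^{2}}{2}w_{xx}-\mu w_{x}+c\,w=0$ with $w(\tau,a_{0})\equiv0$ and $w(\tau,b)=g(\tau)$, where $c:=r+rK\beta^{(\ve)}_{v}(v^{(\ve)})$. For fixed $\ve$, Schauder theory (as in \cite{F},\cite{KSt}) gives enough smoothness of $w$ on $D$ for the manipulations below. The only facts I will use — each uniform in $\ve\le\ve_{0}$ — are: $c\ge r>0$; $c_{t}=rK\beta^{(\ve)}_{vv}(v^{(\ve)})\,w\le0$ (since $\beta^{(\ve)}_{vv}\le0$ and $0\le w\le C_{1}$ by (\ref{C_1})); the boundary bounds (\ref{4.12-old}) on $v^{(\ve)}_{tx},v^{(\ve)}_{tt}$ up to $x=b$; Lemma \ref{lem3.6}; and Lemma \ref{sd_lem}.

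First I would multiply the equation for $w$ by $\tau\,w_{t}$ and integrate over $x\in(a_{0},b)$. Integrating $w_{xx}w_{t}$ by parts in $x$, the boundary term at $x=a_{0}$ vanishes because $w(\tau,a_{0})\equiv0$ forces $w_{t}(\tau,a_{0})=0$, the one at $x=b$ is $\le C\tau$ by (\ref{4.12-old}), and what is left produces $\frac{\kappa^{2}}{4}\tau\frac{d}{d\tau}\int_{a_{0}}^{b}w_{x}^{2}dx$. For the zeroth-order term I write $c\,w\,w_{t}=\partial_{t}(\tfrac12 cw^{2})-\tfrac12 c_{t}w^{2}$; since $c_{t}\le0$, the resulting term $\tfrac12\tau\int c_{t}w^{2}$ on the right is $\le0$ and may be dropped. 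Treating the first-order term by Young's inequality, $\mu\tau\int w_{x}w_{t}\le\tfrac12\tau\int w_{t}^{2}+\tfrac{\mu^{2}}{2}\tau\int w_{x}^{2}$, and absorbing $\tfrac12\tau\int w_{t}^{2}$ on the left, I arrive at
\[
\tfrac12\,\tau\!\int_{a_{0}}^{b}w_{t}^{2}\,dx+\frac{d}{d\tau}\Big(\tau\!\int_{a_{0}}^{b}\big(\tfrac{\kappa^{2}}{4}w_{x}^{2}+\tfrac12 cw^{2}\big)dx\Big)\le R(\tau),
\]
where $R(\tau)$ is the sum of $\tfrac{\kappa^{2}}{4}\int w_{x}^{2}$, $\tfrac{\mu^{2}}{2}\tau\int w_{x}^{2}$, $\tfrac12\int cw^{2}$, and the (bounded) $x=b$ boundary term.

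The crux is that $\int_{0}^{T'}R(\tau)\,d\tau\le C$ uniformly in $\ve$. The terms with $\int w_{x}^{2}=\int(v^{(\ve)}_{tx})^{2}$ are controlled by Lemma \ref{sd_lem}, and the boundary term by (\ref{4.12-old}). The delicate term is $\int_{0}^{T'}\!\int_{a_{0}}^{b}cw^{2}$, because $\beta^{(\ve)}_{v}$ is of order $\ve^{-1}$. Here I would use $0\le w\le C_{1}$ to bound $cw^{2}\le C_{1}\big(r\,w+rK\,\beta^{(\ve)}_{v}(v^{(\ve)})\,w\big)$ and observe that $\beta^{(\ve)}_{v}(v^{(\ve)})\,w=\beta^{(\ve)}_{v}(v^{(\ve)})\,v^{(\ve)}_{t}=\partial_{t}\big(\beta^{(\ve)}(v^{(\ve)})\big)$, so that
\[
\int_{0}^{T'}\!\!\int_{a_{0}}^{b}\beta^{(\ve)}_{v}(v^{(\ve)})\,w\,dx\,d\tau=\int_{a_{0}}^{b}\!\big(\beta^{(\ve)}(v^{(\ve)}(T',x))-\beta^{(\ve)}(v^{(\ve)}(0,x))\big)dx\le 2(b-a_{0})
\]
by (\ref{beta.v}); the $r\,w$ piece is $\le rC_{1}^{2}T'(b-a_{0})$. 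This handles $R$.

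Finally I would integrate the displayed differential inequality over $\tau\in(0,t)$. Because of the weight $\tau$, the bracketed quantity vanishes at $\tau=0$, so the (non-uniform) initial line $\tau=0$ contributes nothing; dropping the nonnegative $\tfrac12 t\int c\,w^{2}(t,\cdot)dx$ yields
\[
\tfrac12\int_{0}^{t}\!\!\int_{a_{0}}^{b}\tau\,w_{t}^{2}\,dx\,d\tau+\tfrac{\kappa^{2}}{4}\,t\!\int_{a_{0}}^{b}w_{x}^{2}(t,x)\,dx\le C
\]
uniformly in $\ve$ and $t\in(0,T']$. For $0<\sigma\le t$ this gives $\int_{a_{0}}^{b}(v^{(\ve)}_{tx}(t,x))^{2}dx\le 4C/(\kappa^{2}t)\le 4C/(\kappa^{2}\sigma)$ and, since $\tau\ge\sigma$ on $[\sigma,t]$, $\int_{\sigma}^{t}\!\int_{a_{0}}^{b}(v^{(\ve)}_{tt}(\tau,x))^{2}dx\,d\tau\le \sigma^{-1}\int_{0}^{t}\tau\!\int_{a_{0}}^{b}w_{t}^{2}\le 2C/\sigma$; adding these and renaming the constant proves the lemma. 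I expect the main obstacle to be precisely the bound on $\int\!\int cw^{2}$: one must use simultaneously the a priori bound $v^{(\ve)}_{t}\le C_{1}$ (to lower one power of $v^{(\ve)}_{t}$ in the penalization term) and the identity $\beta^{(\ve)}_{v}(v^{(\ve)})v^{(\ve)}_{t}=\partial_{t}\beta^{(\ve)}(v^{(\ve)})$ together with $\|\beta^{(\ve)}\|_{\infty}\le1$; introducing the weight $\tau$ (needed both to suppress the initial line and to generate the factor $\sigma^{-1}$) and bookkeeping the $x=b$ boundary terms via (\ref{4.12-old}) are the remaining, more mechanical, points.
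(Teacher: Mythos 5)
Your argument is correct and is essentially the paper's proof: both test the equation (\ref{w=v_epst}) for $w=v^{(\ve)}_{t}$ against $w_t$, integrate by parts in $x$ using $w(\cdot,a_0)=0$ and the boundary bound (\ref{4.12-old}) at $x=b$, exploit $\beta^{(\ve)}_{vv}\le 0$ and $v,w\ge 0$ to control the penalization term, absorb $\mu\int w_xw_t$ by Cauchy--Schwarz, and bound the remaining right-hand side via (\ref{C_1}), (\ref{beta.v}) and Lemma \ref{sd_lem}. The only differences are cosmetic: you introduce the time weight by multiplying by $\tau$ directly, whereas the paper produces the equivalent weight by integrating the resulting differential inequality over $\tau'\in(\tau,t)$ and then over $\tau\in(0,\sigma)$; and your explicit estimate of $\int\!\!\int c\,w^{2}$ through the identity $\beta^{(\ve)}_{v}(v^{(\ve)})\,w=\partial_{t}\beta^{(\ve)}(v^{(\ve)})$ spells out precisely the step the paper leaves implicit in its citation of (\ref{beta.v}).
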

\begin{proof}
Set $v=v^{(\ve)},\beta=\beta^{(\ve)}$ and $w=v^{(\ve)}_{t}$.
Multiplying (\ref{w=v_epst}) by the function $w_t$ we have
$$
w_t^2-\frac{\kappa^2}{2}w_{xx}w_{t}-\mu w_{x}w_{t}+(r+rK\beta'(v))ww_t=0 
$$
and an integration with respect to $x$ over $(a_0,b)$ yields
\begin{equation}\label{sd_lem2_eq1}
\int_{a_0}^{b}w_t^2dx-\frac{\kappa^2}{2}\int_{a_0}^{b}w_{xx}w_{t}dx-\mu
\int_{a_0}^{b} w_{x}w_{t}dx+\int_{a_0}^{b}(r+rK\beta'(v))ww_tdx=0.
\end{equation}
Fix some  $t\in [0,T]$. Since $w_{t}(t,a_0)=w(t,a_0)=0$ we see that
$$
\frac{\kappa^2}{2}\int_{a_0}^{b}w_{xx}(t,x)w_{t}(t,x)dx=\frac{
\kappa^2}{2}w_{x}(t,b)w_{t}(t,b) -\frac{\kappa^2}{4}\frac{d}{dt}
\int_{a_0}^{b}w_{x}(t,x)^2dx.
$$
From (\ref{4.12-old}) it follows that $\frac{\kappa^2}{2}|w_{x}(t,b)w_{t}
(t,b)|\leq C_1$ for some constant $C_1>0$ independent of $\ve$, and so we
 obtain
\begin{eqnarray}\label{sd_lem2_eq2}
&\frac{\kappa^2}{4}\frac{d}{dt}\int_{a_0}^{b}w_{x}^2(t,x)dx+\int_{a_0}^{b}
w_t^2dx+\int_{a_0}^{b}(r+rK\beta'(v))w(t,x)w_t(t,x)dx\\
&\leq \mu\int_{a_0}^{b} w_{x}(t,x)w_{t}(t,x)dx +C_1.
\nonumber\end{eqnarray}

Now we deal with the last term in (\ref{sd_lem2_eq1}). Since $\beta''(v)\leq0$
 and $v,w\geq 0$ we obtain that
\begin{eqnarray*}
&\int_{a_0}^{b}(r+rK\beta'(v))ww_tdx=\frac{1}{2}\int_{a_0}^{b}(r+rK\beta'(v))
\frac{d}{dt}w^2dx\\
&=\frac{1}{2}\frac{d}{dt}\int_{a_0}^{b}(r+rK\beta'(v))w^2(t,x)dx-
\frac{1}{2}\int_{a_0}^{b}rK\beta''(v)w^3dx\geq\frac{1}{2}\frac{d}{dt}
\int_{a_0}^{b}(r+rK\beta'(v))w^2dx. 
\end{eqnarray*}
We plug this inequality into (\ref{sd_lem2_eq2}) and obtain
$$
\frac{1}{2}\frac{d}{dt}\int_{a_0}^{b}[\frac {\ka^2}{2}w_{x}^2(t,x)+
(r+rK\beta'(v))w^2(t,x)]dx
+\int_{a_0}^{b}w_t^2(t,x)dx+\leq \mu\int_{a_0}^{b} w_{x}(t,x)w_{t}(t,x)dx
 +C_1.
 $$
Integrate the last inequality with respect to $\tau'$ over the interval
$(\tau,t)$ to obtain
\begin{eqnarray*}
&\frac{1}{2}\int_{a_0}^{b}[\frac {\ka^2}{2}w^2_{x}(t,x)+(r+rK\beta'(v))
w^2(t,x)]dx+\int_{\tau}^t\int_{a_0}^{b}w_t^2(t,x)dxd\tau'\\
&\leq  \mu\int_{\tau}^t\int_{a_0}^{b} |w_{x}(t,x)||w_{t}(t,x)|dxd\tau'+ 
C_1 (t-\tau)+\frac{1}{2}\int_{a_0}^{b}[\frac {\ka^2}{2}w^2_{x}(\tau,x)+
(r+rK\beta'(v))w^2(\tau,x)]dx. 
\end{eqnarray*}
Next, integrating in $\tau$ over the interval $(0,\sigma)$ for some 
$0<\sigma<t$ and taking into account that $(r+rK\be'(v))w^2\geq 0$ by the 
property (2) of $\be$ we obtain that
\begin{eqnarray}\label{sd_lem2_eq3}
&\frac{\sigma}{2}\int_{a_0}^{b}\frac {\ka^2}{2}w^2_{x}(t,x)dx+ \int_{0}^{\sigma}
\int_{\tau}^t\int_{a_0}^{b}w_t^2(t,x)dxd\tau'd\tau\\
& \leq 
C_2 +|\mu|\int_{0}^{\sigma}\int_{\tau}^t\int_{a_0}^{b} |w_{x}(t,x)||w_{t}(t,x)|
dxd\tau'd\tau +\frac{1}{2}\int_{0}^{\sigma}\int_{a_0}^{b}[\frac {\ka^2}{2}
w^2_{x}(\tau,x)+(r+rK\beta'(v))w^2(\tau,x)]dxd\tau.\nonumber
\end{eqnarray}
Now, by (\ref{C_1}), (\ref{beta.v}) and Lemma \ref{sd_lem} together with
the Cauchy--Schwarz inequality we estimate the right hand side of
(\ref{sd_lem2_eq3}) by a constant $C_3>0$ independent of $\ve$. Hence,
\[
C_3\geq \frac {\sig\ka^2}{4}\int_{a_0}^b w^2_xdx +\int_0^\sig\int_\tau^t
\int_{a_0}^bw^2_tdxd\tau'd\tau\geq\frac {\sig\ka^2}{4}\int_{a_0}^bw^2_xdx
+\sig\int_\sig^t\int_{a_0}^bw_t^2dxd\tau'
\]
and Lemma \ref{sd_lem2} follows.
\end{proof}

As a corollary of previous results we obtain 
\begin{prop}\label{tem2.7}
Let $\beta<\sigma<T$ and  $a<s(0)<s(\sigma)<b<\ln K$. Define $D^{\sigma}=
(0,\sigma)\times(a,b)$. Then
\begin{equation}\label{ee_2.11}
P(t,x)\in H^2(D^{\sigma})
\end{equation}
where by definition $H^2[U]$ is the set of all the functions in $L^2[U]$ 
with an $L^2$ weak second order derivatives.
Also there exists $C>0$ such that for every $0\leq t\leq T'$,
\begin{equation}\label{2.12}
\int_{a }^{b}|\frac{\partial^2 P}{\partial x\partial t}(t,x)|^2dx=
 \|\frac{\partial^2 P}{\partial x\partial t}(t,x) \|_{L^2[a,b]}<C.
\end{equation}
\end{prop}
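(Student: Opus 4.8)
The plan is to transfer the uniform ($\ve$-independent) bounds on $v^{(\ve)}$ and its derivatives established in Lemmas \ref{lem3.6}, \ref{sd_lem}, \ref{sd_lem2} to the limit function $v$, and then translate them back into statements about $P$ via the relation $P(T'-t,x)=v(t,x)+P(T',x)$ from \eqref{v(tx)(PP)}. First I would recall from the preceding Proposition that $v^{(\ve)}\to v$ weakly in $W^{1,p}(D)$ for every $1<p<\infty$ and uniformly on $D$, and that $v^{(\ve)}_x\to v_x$ uniformly in $x$ for each $t$. Since $\{v^{(\ve)}_{xx}\}$, $\{v^{(\ve)}_{tx}\}$ are bounded in $L^\infty[D]\subset L^2[D]$ uniformly in $\ve$ (Lemma \ref{lem3.6} and Lemma \ref{sd_lem}), after passing to a subsequence they converge weakly in $L^2[D]$, and by uniqueness of distributional limits the weak limits must be the distributional derivatives $v_{xx}$, $v_{tx}$. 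Hence $v_{xx},v_{tx}\in L^2[D]$, i.e. $v\in H^2[D]$, and by lower semicontinuity of the $L^2$-norm under weak convergence the bounds are inherited: $\|v_{xx}\|_{L^2[D]},\|v_{tx}\|_{L^2[D]}\leq C$.

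Next I would upgrade the $L^2[D]$ bound on $v_{tx}$ to the sliced bound \eqref{2.12}, namely $\sup_{0\le t\le T'}\int_a^b |v_{tx}(t,x)|^2\,dx<\infty$ on the smaller interval $(a,b)\subset(a_0,b)$. This is exactly the first term in Lemma \ref{sd_lem2}: for each fixed $t$ and each $\ve$, $\int_{a_0}^b(v^{(\ve)}_{tx}(t,x))^2\,dx\le C/\sigma$ for any $\sigma\le t$; but one should be careful that this degenerates as $t\to 0$. The fix is that near $t=0$ we are inside the subrectangle $D_0=(0,T)\times(a_2,b)$ (or $D_1=(x_0,b)\times(0,T)$) where \eqref{4.12-old} gives $\|v^{(\ve)}_{tx}\|_{L^\infty(D_1)}\le C$ uniformly in $\ve$, so choosing $a>x_0$ and using $a>a_2$ we get a genuinely uniform (in $t$ and $\ve$) bound $\int_a^b (v^{(\ve)}_{tx}(t,x))^2\,dx\le C$; for $t$ bounded away from $0$ Lemma \ref{sd_lem2} applies directly. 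Passing $\ve\to 0$ and using weak $L^2[a,b]$ convergence of $v^{(\ve)}_{tx}(t,\cdot)$ for a.e.\ $t$ (again via lower semicontinuity) yields \eqref{2.12}.

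Finally I would translate back to $P$. By \eqref{v(tx)(PP)} and \eqref{rem4.7}, on $D^\sigma=(0,\sigma)\times(a,b)$ with $\beta<\sigma<T'$ we have $P(t,x)=v(T'-t,x)+P(T',x)$, and since $P(T',\cdot)$ is smooth on $(a,b)$ (Proposition \ref{firstProp}(iii), as $a>s(T')$ puts us strictly inside the continuation region at time $T'$) with bounded derivatives, the second-order integrability of $P$ on $D^\sigma$ is equivalent to that of $v$ on the corresponding rectangle; this gives \eqref{ee_2.11}. Likewise $|P_{xt}(t,x)|=|v_{tx}(T'-t,x)|$, so \eqref{2.12} for $v$ transfers verbatim to $P$, noting that $t\in[0,T']$ and $T'-t\in[0,T']$ range over the same interval. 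A small point to handle is that Proposition \ref{tem2.7} is phrased with $\sigma<T$ rather than $\sigma<T'$; since the construction of $v$ required $\beta<T'<T$ arbitrary, one simply fixes $T'$ with $\sigma<T'<T$ at the outset, so the two statements are consistent.

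The main obstacle is the boundary behavior at $t=0$ in the second step: the estimate in Lemma \ref{sd_lem2} blows up like $1/\sigma$, and the content of the argument is recognizing that one must invoke the separate, non-degenerate interior Schauder-type bound \eqref{4.12-old} valid on an upper subrectangle to cover a neighborhood of $\{t=0\}$, and then only apply Lemma \ref{sd_lem2} on $\{t\ge\sigma_0\}$ for a fixed small $\sigma_0$. Everything else is routine weak-compactness and lower-semicontinuity bookkeeping together with the smoothness of $P(T',\cdot)$ away from the free boundary.
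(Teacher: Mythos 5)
Your first step is essentially the paper's own argument for (\ref{ee_2.11}): the uniform (in $\ve$) bounds of Lemmas \ref{lem3.6}, \ref{sd_lem} and \ref{sd_lem2} give weak $H^2$-compactness of $\{v^{(\ve)}\}$, the limit is identified with $v$ by the uniform convergence, and the result transfers to $P$ through (\ref{v(tx)(PP)}) and (\ref{rem4.7}). Note that the $v_{tt}$ bound required for $H^2$ also comes from Lemma \ref{sd_lem2}, and it is non-degenerate on $D^{\sigma}$ precisely because after the reversal $t\mapsto T'-t$ the relevant $v$-times are bounded below by $T'-\sigma>0$. This part is fine.

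The genuine gap is in your treatment of (\ref{2.12}) at small $v$-times (equivalently, $P$-times near $T'$). The estimate (\ref{4.12-old}) holds only on $D_1=(x_0,b)\times(0,T)$ with $x_0>a_2$, and $a_2$ lies strictly above the point called $a$ in Section \ref{sec3}, which itself lies above $s(T')$ (this $a$ is a different object from the $a$ of Proposition \ref{tem2.7}, which satisfies $a<s(0)<s(T')$ — the name collision may be the source of the confusion). So (\ref{4.12-old}) says nothing about $x\in(a,x_0)$, a fixed-width strip containing the free boundary point $(T',s(T'))$, while Lemma \ref{sd_lem2} degenerates like $1/t$ there; your proposed remedy of ``choosing $a>x_0$'' contradicts the hypothesis $a<s(0)$ and would trivialize the proposition, whose whole point is regularity \emph{across} the holder's boundary. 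The same slip reappears at the end, where you assert $a>s(T')$ to claim $P(T',\cdot)$ is smooth on $(a,b)$; in fact $a<s(T')$, and what is actually used (and suffices, since $P(T',x)$ is killed by $\partial_t$) is only the boundedness of $P_x(T',\cdot)$ and $P_{xx}(T',\cdot)$. The paper closes the small-time regime by a different tool: Proposition \ref{prop heat eq}, the interior-in-space parabolic estimate with zero initial data whose constant is uniform in $t$, applied to $v$ as a solution of the equation in (\ref{v_t-Av-f}) together with $v\equiv 0$ below the free boundary. Some argument of this kind is indispensable, and your proposal does not supply one.
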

\begin{proof}
From Lemma \ref{sd_lem2}, Lemma \ref{sd_lem} and Lemma \ref{lem3.6} we obtain 
that $\{v^{(\ve)}\}_{\ve<\ve_0}$ are uniformly bounded in 
$H^{2}[D^{\sigma}]$ and so they have a weak limit $\tilde{v}\in 
H^{2}[D^{\sigma}]$. Since $v^{(\ve)}\to v$ uniformly we must have 
that $v=\tilde{v}$, and  so $v\in H^{2}[D^{\sigma}]$.
Since $v$ is the solution of (\ref{3.25+}) we can apply Proposition 
\ref{prop heat eq} and
using the fact that the constant $C$ in (\ref{prop eq res}) doesn't depend on
$t$ we can obtain in a similar way  that for a fixed $\sigma$ there is a 
constant $C>0$ such that for every $0\leq t\leq \sigma$,  
   $$
   \|v_{x,t}(t,\cdot ) \|_{L^2[a_0,b]}<C.
   $$
From (\ref{rem4.7}) we can deduce the same result for the function $P(t,x)$.
\end{proof}
\begin{cor}\label{cor11}
For each $0\leq t<T$ the function $v_{t}(t,x)$ is Holder continuous with a
 Holder exponent $\frac{1}{2}$.
\end{cor}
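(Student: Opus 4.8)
The statement to prove is Corollary \ref{cor11}: for each $0\le t<T$ the function $v_t(t,x)$ is Hölder continuous with exponent $\frac 12$. The natural strategy is to deduce this from the two Sobolev-type bounds already established, namely Proposition \ref{tem2.7} (which gives $v\in H^2(D^\sigma)$ and, more usefully, $\|v_{xt}(t,\cdot)\|_{L^2[a_0,b]}<C$ uniformly in $t$) together with Lemma \ref{sd_lem2} (which gives $\int_{a_0}^b v_{tx}^2(t,x)\,dx+\int_\sigma^t\int_{a_0}^b v_{tt}^2\,dx\,d\tau\le C/\sigma$ for $\sigma\le t$). I would first fix $t$ with $0<t<T$, pick any $\sigma$ with $0<\sigma<t<T'<T$, and work on the rectangle $D^\sigma$; the case $t=0$ is handled separately using $v_t(0,x)=w(0,x)$ from \eqref{w=v_epst}, which is already smooth in $x$.

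Regularity in the $x$-variable, for fixed $t$, is the easy direction: from $\|v_{xt}(t,\cdot)\|_{L^2[a_0,b]}<C$ and the fact that $v_{tx}$ is a square-integrable weak derivative of $x\mapsto v_t(t,x)$, the one-dimensional Sobolev embedding $H^1\hookrightarrow C^{1/2}$ on an interval gives $|v_t(t,x_1)-v_t(t,x_2)|\le C|x_1-x_2|^{1/2}$ by Cauchy--Schwarz. Regularity in the $t$-variable is where the exponent $\tfrac12$ genuinely appears: for $t_1<t_2$ in $(\sigma,T')$ one writes $v_t(t_2,x)-v_t(t_1,x)=\int_{t_1}^{t_2}v_{tt}(\tau,x)\,d\tau$ and applies Cauchy--Schwarz in $\tau$, so that after integrating in $x$ over $(a_0,b)$,
\[
\int_{a_0}^b |v_t(t_2,x)-v_t(t_1,x)|^2\,dx\le (t_2-t_1)\int_{t_1}^{t_2}\int_{a_0}^b v_{tt}^2(\tau,x)\,dx\,d\tau\le (t_2-t_1)\frac C\sigma,
\]
using Lemma \ref{sd_lem2}. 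Combining the $L^2$-in-$x$ modulus of continuity in $t$ with the uniform $H^1$-in-$x$ bound, a standard interpolation (or Gagliardo--Nirenberg) inequality on the interval $[a_0,b]$ upgrades this to a pointwise bound $\sup_x|v_t(t_2,x)-v_t(t_1,x)|\le C(t_2-t_1)^{1/2}$, the exponent being exactly half because $v_t$ costs one $L^2$-derivative in $x$ to pass from $L^2$ to $L^\infty$. Putting the $x$-Hölder and $t$-Hölder estimates together via the triangle inequality yields joint Hölder-$\tfrac12$ continuity of $v_t$ on $\overline{D^\sigma}$ away from $t=0$, and since $\sigma>0$ was arbitrary this covers every $0<t<T'$; as $T'$ can be taken arbitrarily close to $T$ this gives the claim for all $0\le t<T$.

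I expect the main obstacle to be the interpolation step that converts the $L^2_x$-modulus-of-continuity-in-$t$ into a uniform (sup-norm) modulus of continuity in $t$: one must verify that the constant $C/\sigma$ from Lemma \ref{sd_lem2} interacts correctly with the uniform bound $\|v_{xt}(t,\cdot)\|_{L^2}<C$ so that the resulting Hölder constant is finite (depending on $t$ through $\sigma$, which is harmless since $t$ is fixed), and that the weak limit $v$ genuinely inherits these bounds from the approximants $v^{(\ve)}$ — this last point is already handled in the proof of Proposition \ref{tem2.7} by lower semicontinuity of the norms under weak convergence, so it can be quoted. A secondary, purely technical point is the behaviour near $t=0$, where Lemma \ref{sd_lem2} degenerates; but there $v_t(0,x)$ is given explicitly by the third line of \eqref{w=v_epst} as a sum of the smooth functions $f^{(\ve)}$, $\eta$-terms and $\beta_v^{(\ve)}(v^{(\ve)}(0,\cdot))$ (with $v^{(\ve)}(0,x)=\ve\eta(x)$), whose limit as $\ve\to0$ is readily seen to be at least Hölder-$\tfrac12$ in $x$, so continuity up to $t=0$ in the $x$-direction is not an issue and the corollary statement "for each $0\le t<T$" is met.
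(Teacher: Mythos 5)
Your first step --- fixing $t$, invoking the uniform bound $\|v_{tx}(t,\cdot)\|_{L^2[a_0,b]}<C$ from Proposition \ref{tem2.7}, and applying the one-dimensional Sobolev embedding $H^1\hookrightarrow C^{0,1/2}$ via Cauchy--Schwarz --- is exactly the paper's proof, and it already establishes the corollary. The statement ``for each $0\leq t<T$ the function $v_t(t,x)$ is H\"older continuous with exponent $\tfrac12$'' is a claim about the map $x\mapsto v_t(t,x)$ for each fixed $t$; this is confirmed both by the paper's one-line proof and by the only downstream use (Corollary \ref{cor12}, which needs continuity in $x$ on $[s(t),b]$ for fixed $t$).

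Everything after that in your proposal is an attempt to prove a stronger, joint space--time H\"older estimate that the corollary does not assert, and the key step there does not work as written. From $\|v_t(t_2,\cdot)-v_t(t_1,\cdot)\|_{L^2}\leq C\sigma^{-1/2}(t_2-t_1)^{1/2}$ (Lemma \ref{sd_lem2}) and the uniform bound $\|v_{tx}(t,\cdot)\|_{L^2}\leq C$, the one-dimensional interpolation (Agmon/Gagliardo--Nirenberg) inequality $\|g\|_{L^\infty}\leq C\|g\|_{L^2}^{1/2}\|g\|_{H^1}^{1/2}$ yields $\sup_x|v_t(t_2,x)-v_t(t_1,x)|\leq C(t_2-t_1)^{1/4}$, not $(t_2-t_1)^{1/2}$: you cannot place the full $L^2$-modulus on the sup norm without paying the square root. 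So the claimed exponent $\tfrac12$ in the $t$-direction is not obtained by this route. Since that entire branch is superfluous for the corollary, the remedy is simply to delete it (or, if you want a joint statement, to claim only exponent $\tfrac14$ in $t$). Your treatment of $t=0$ is likewise unnecessary but harmless, since the bound in (\ref{2.12}) is stated uniformly for $0\leq t\leq T'$.
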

\begin{proof}
For every $0<t<T $ Proposition \ref{tem2.7} gives us that $v_{t}(t,x)\in 
H^1[a_0,b]$. Hence, the result is a consequence of the Sobolev inequality.
\end{proof}
\begin{cor}\label{cor12}
For every $0\leq t<T'$ the functions $P_t(t,x)$ and $P_{xx}(t,x)$ as functions
 of $x$ are continuous in the closed interval $[s(t),b]$.
\end{cor}
\begin{proof}
For the function $P_t(t,x)$ the result follows from (\ref{rem4.7}) and the
 previous corollary. Since $P(t,x)$ is a solution of (\ref{P,A})
 in the interval $\{(t,x):s(t)<x< \ln K\}$ and since
 the functions $P_x(t,x)$ and $P(t,x)$ are continuous in the interval
   $[s(t),b]$ we obtain the result for $P_{xx}$, as well.
\end{proof}
\begin{cor}\label{cor2.8}
Let $\beta<\sigma<T$ and  $a<s(0)<s(\sigma)<b<\ln K$. Define $E=
\{(t,x): 0<t<\beta, a-\mu t <x<b-\mu t \}$ and $u(t,x)=e^{-rt}P(t,x
+\mu t)$. Then
\begin{equation}\label{2.13}
u(t,x)\in L^2[E]
\end{equation}
and there exists $C>0$ such that for every $0\leq t\leq \beta$,
\begin{equation}\label{2.14}
\int_{a-\mu t }^{b-\mu t}|\frac{\partial^2 u}{\partial x\partial t}
(t,x)|^2dx<C.
\end{equation}
\end{cor}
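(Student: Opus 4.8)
The plan is to transport the regularity statement of Proposition \ref{tem2.7} from the original variables to the tilted variables $(t,x)$ appearing in the definition of $u(t,x)=e^{-rt}P(t,x+\mu t)$. First I would observe that the change of variables $(t,x)\mapsto (t,x+\mu t)$ is an affine, hence smooth and bijective, map carrying the region $E=\{(t,x):0<t<\beta,\ a-\mu t<x<b-\mu t\}$ onto the region $\{(t,y):0<t<\beta,\ a<y<b\}$, which is contained in the domain $D^{\sigma}=(0,\sigma)\times(a,b)$ covered by Proposition \ref{tem2.7} (here we use $\beta<\sigma<T$). Consequently $u$ inherits from $P$ membership in $L^2[E]$, proving \eqref{2.13}, and likewise $u$ lies in $H^2[E]$ once we know $P\in H^2(D^\sigma)$, which is exactly \eqref{ee_2.11}. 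The multiplicative factor $e^{-rt}$ is bounded together with all its $t$-derivatives on $[0,\beta]$, so it does not affect any $L^2$ or $H^2$ bounds; it only contributes lower-order terms when derivatives are computed.

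The core of the argument is the estimate \eqref{2.14} on the mixed second derivative. By the chain rule, writing $y=x+\mu t$, one computes $u_x(t,x)=e^{-rt}P_x(t,y)$ and hence
\[
u_{tx}(t,x)=e^{-rt}\big(P_{tx}(t,y)+\mu P_{xx}(t,y)-rP_x(t,y)\big).
\]
For fixed $t\in[0,\beta]$ the substitution $x\mapsto y=x+\mu t$ is a measure-preserving bijection of the interval $(a-\mu t,b-\mu t)$ onto $(a,b)$, so
\[
\int_{a-\mu t}^{b-\mu t}|u_{tx}(t,x)|^2dx
\leq 3e^{-2rt}\int_a^b\big(|P_{tx}(t,y)|^2+\mu^2|P_{xx}(t,y)|^2+r^2|P_x(t,y)|^2\big)dy.
\]
The first term on the right is bounded uniformly in $t\in[0,\beta]$ by \eqref{2.12} of Proposition \ref{tem2.7}. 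For the remaining two terms I would use that, on the relevant range of $x$ (which stays strictly to the left of $\ln K$ and, for $t\le\beta$, strictly to the right of $s(t)$ once $a,b$ are chosen as in the statement), $P$ solves the parabolic equation $P_t+\mathbf{A}P=0$ from \eqref{P,A}, so $\frac{\kappa^2}{2}P_{xx}=-P_t-\mu P_x+rP$; together with the convexity and monotonicity from Proposition \ref{firstProp}(iv) and the already-established boundedness of $P$, $P_x$ and $P_t$ on $\bar D$ (Lemma \ref{lem3.6} via $v^{(\ve)}\to v$, plus \eqref{rem4.7}), this gives a uniform $L^\infty$, hence $L^2$, bound for $P_{xx}$ and $P_x$ on $[a,b]$ for each $t\le\beta$. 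Adding the three contributions yields \eqref{2.14}.

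The main obstacle is a bookkeeping one rather than a deep one: making sure that the tilted rectangle $E$ truly sits inside the region where Proposition \ref{tem2.7} and \eqref{P,A} are valid, i.e. that for every $t\in[0,\beta]$ the $x$-interval $(a-\mu t,b-\mu t)$ maps into $(s(t),\ln K)$ (or at worst into $(a_0,b)$ so that the $H^2$ bound from Lemma \ref{lem3.6} applies on the portion where $x\le s(t)$, where $P(t,\cdot)=K-e^{\cdot}$ is smooth anyway). This is arranged precisely by the hypothesis $a<s(0)<s(\sigma)<b<\ln K$ together with the monotonicity of $s$ from Proposition \ref{firstProp}(i): shifting by $\mu t$ with $|\mu t|$ small does not push $b-\mu t$ past $\ln K$ nor $a-\mu t$ below the relevant left endpoint, after possibly shrinking $\beta$ (which we may, since $\beta<\sigma$ is fixed and the statement only concerns $t\le\beta$). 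Once this geometric containment is verified, \eqref{2.13} and \eqref{2.14} follow from Proposition \ref{tem2.7} and the chain-rule computation above, and the proof is complete.
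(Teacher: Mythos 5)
Your proposal is correct and follows essentially the same route as the paper: the same chain-rule identity $u_{tx}=e^{-rt}(P_{tx}+\mu P_{xx}-rP_x)$ evaluated at $y=x+\mu t$, the bound (\ref{2.12}) for the $P_{tx}$ term, and boundedness of $P_x$ and $P_{xx}$ on the relevant range for the remaining terms. Your extra remarks on the geometric containment are fine but unnecessary, since the substitution $y=x+\mu t$ maps $E$ exactly onto $(0,\beta)\times(a,b)\subset D^{\sigma}$, so no shrinking of $\beta$ is needed.
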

\begin{proof}
The assertion (\ref{2.13}) follows from Proposition \ref{tem2.7} and the 
definition of $u(t,x)$. For (\ref{2.14}) note that
$$
\frac{\partial^2 u}{\partial x\partial t}(t,x)=e^{-rt}\big (-r 
\frac{\partial P}{\partial x}(t,x+\mu t)+\mu\frac{\partial^2 P}{\partial
 x^2}(t,x+\mu t)+\frac{\partial^2 P}{\partial x\partial t}(t,x+\mu t)
  \big),
  $$
then use (\ref{2.12}) and the fact that for $(t,x)\in E$ the 
functions $\frac{\partial^2 P}{\partial x^2}(t,x+\mu t)$ and $\frac{\partial 
P}{\partial x}(t,x+\mu t)$ are bounded.
\end{proof}

\section{Price function near the writer's exercise boundary}\label{sec4}

\subsection{Regularity properties of price function}
Let $F(t,x)$ be the price function of the put game option 
(see Section \ref{sec2}). We begin this section by showing that near 
the writer's exercise region $\Gamma_1=\{(t,K): 0\leq t \leq \beta\}$ the 
function $\frac{\partial F}{\partial t}$ is continuous. Let
\begin{equation}\label{Y_t}
Y^{[s,x]}_t=(Y^{1,[s,x]}_t,Y^{2,[s,x]}_t)=(s+t,S^x_t)
\end{equation}
which is a non homogeneous in time Markov process
in $\bbR^+\times \bbR$ where  $S^x_t=xe^{\mu t+\kappa
 B_t}$ and $\mu=r-\frac{\kappa^2}{2}$. Let
\begin{equation}\label{L_Y}
\textbf{L}_{Y}=\frac{\partial }{\partial t}+\frac{\kappa^2x^2}{2}\frac{
\partial^2 }{\partial x^2}+ rx\frac{\partial }{\partial x}-r
\end{equation}
which is the infinitesimal generator of $Y_t$ when considered
on the space of all $C^2$ functions. This is a parabolic operator
 with bounded smooth coefficients in the domain
\begin{equation}\label{domD}
D=(0,\beta)\times(k,K)
\end{equation}
where $k>0$. Let $\bfP_{[s,x]}$ and $\bfE_{[s,x]}$ be the probability and
the corresponding expectation for the Markov process $Y$ starting at the 
point $[s,x]$. We will first show that for any $t_0\in[0,\beta)$,
\begin{equation}\label{P[t,x][Y]}
\lim_{(t,x)\to (t_0,K)}\bfP_{[t,x]}[Y_{\tau} \in \Gamma_1]=1
\end{equation}
where $\tau=\tau(\Gam)$ and for any closed set $Q\subset\bbR_+\times\bbR$
we set $\tau(Q)$ to be the arrival time at the set $Q$ for a Markov process
under consideration which is $Y_t$ here.
Indeed, choosing an appropriate nonnegative function $\phi\leq 1$ on the 
boundary $\Gam$ and relying on Chapter 3 in \cite{F} we can choose $u(t,x)
\in C^{1,2}(D)$ which solves the equation $\textbf{L}_{Y}u=0$
 in $D$ and equals 1 on the boundary part $\Gamma_1$ for $0\leq t\leq t_1
 <\beta$ while decaying smoothly to $0$ when $t$ grows to $\beta$. Then 
$$
u(t,x)=\bfE_{[t,x]}\phi(Y_\tau)\leq \bfP_{[t,x]}\{Y_{\tau} \in \Gamma_1\},
$$
 and so
$$
1\geq \liminf_{(t,x)\to (t_0,K)}\bfP_{[t,x]}[Y_{\tau} \in \Gamma_1] \geq 
\lim_{(t,x)\to (t_0,K)}u(t,x)=u(t_0,K)=1.
$$

Next let $f(x)=(K-x)^+$ and $g(x)=f(x)+\delta$.
Recall that the price of a put game option with an expiration time $T$ and 
a constant penalty $\delta$ can be written in the form
$$
F(t,x)=\sup_{0\leq \tau \leq \tilde T}\inf_{0 \leq \sigma  \leq \tilde 
T}J_{[t,x]}(f,g,\sigma,\tau)
$$
where $ \tilde T=\inf\{t:Y^{1}_t=T\}$ and for any bounded Borel functions
$\hat f$ and $\hat g$ we write
$$
J_{[t,x]}(\hat f,\hat g,\sigma,\tau)=\bfE_{[t,x]}[e^{-r\sigma \wedge \tau}
 (\hat g(Y^2_{\sigma})\bbI_{\{\sigma <\tau\}}+\hat 
 f(Y^2_{\tau})\bbI_{\{\tau\leq \sigma\}})]. $$
Set
\begin{equation}\label{defF,f} 
f_s(x)=F(s,x)\,\,\mbox{when}\,\,\beta<s<T\,\,\mbox{and}\,\,
F_s(t,x)=sup_{0 \leq \tau \leq \tilde s}inf_{0 \leq \sigma \leq \tilde s}
J_{[t,x]}(f_s,g,\sigma,\tau).
\end{equation}
where $ \tilde s=\inf\{u:Y^{(1)}_u=s\}$. Let $<\sigma^*,\tau^*>$ and
$<\sigma^*_s,\tau^*_s>$ be the two saddle points (see ~\cite{Ki}) 
corresponding to the optimal stopping games with values $F(t,x)$ and 
$F_s(t,x)$, respectively, and so
\begin{eqnarray}\label{saddle}
&\sigma^*=\inf\{0\leq t\leq \tilde T:F(Y_t)=g(Y^2_t) \},\ \ \ \tau^*=\inf
\{0\leq t\leq \tilde T:F(Y_t)=f(Y^2_t)\}\\
&\sigma_s^*=\inf\{0\leq t\leq \tilde s:F_s(Y_t)=g(Y^2_t) \},\ \ \ 
\tau^*_s=\inf\{0\leq t\leq \tilde s:F_s(Y_t)=f_s(Y^2_t)\}.\nonumber
\end{eqnarray}
Then
$$
F(t,x)=J_{[t,x]}(f,g,\sigma^*,\tau^*)\ \ and\ \  F_s(t,x)=J^s_{[t,x]}
(f_s,g,\sigma^*_s,\tau^*_s).
$$
\begin{lem}\label{lem4.3}
For all $0\leq t\leq s<T$ and $x>0$, $F_s(t,x)=F(t,x)$. 
\end{lem}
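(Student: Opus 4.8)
The plan is to prove the two inequalities $F_s(t,x)\le F(t,x)$ and $F_s(t,x)\ge F(t,x)$ by exhibiting, in the game defining $F_s$, an explicit cancellation time for the writer and an explicit exercise time for the holder, and comparing the resulting payoff with the stopped value process of the original game by optional sampling. One may assume $t<s$, since for $t=s$ the only admissible pair is $\sigma=\tau=0$ and $F_s(s,x)=J_{[s,x]}(f_s,g,0,0)=f_s(x)=F(s,x)$. Because $Y^1_u=t+u$ under $\bfP_{[t,x]}$, the times $\tilde s=s-t$ and $\tilde T=T-t$ are deterministic. I will use: (i) $f(x)\le F(u,x)\le g(x)$ for all $u,x$ (the lower bound by letting the holder exercise at once, the upper by letting the writer cancel at once); (ii) $F$ is nonincreasing in its time argument (Proposition \ref{firstProp}(iv)); and (iii) the standard properties of the Dynkin game with value $F$ (see \cite{Ki} and Proposition \ref{firstProp}(iii)): with $\sigma^*,\tau^*$ as in \eqref{saddle}, the processes $M_u=e^{-r(u\wedge\sigma^*)}F(Y_{u\wedge\sigma^*})$ and $N_u=e^{-r(u\wedge\tau^*)}F(Y_{u\wedge\tau^*})$ are a supermartingale and a submartingale respectively, $M_0=N_0=F(t,x)$, $F(Y_{\sigma^*})=g(Y^2_{\sigma^*})$ on $\{\sigma^*<\tilde T\}$, and $F(Y_{\tau^*})=f(Y^2_{\tau^*})$ (by definition of $\tau^*$ and the terminal condition $F(T,\cdot)=f$).

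For $F_s(t,x)\le F(t,x)$ I would let the writer cancel at $\hat\sigma=\sigma^*\wedge\tilde s$, a stopping time bounded by $\tilde s$. Since $\tau\le\tilde s$ gives $\hat\sigma\wedge\tau=\sigma^*\wedge\tau$, one has $M_\tau=e^{-r(\hat\sigma\wedge\tau)}F(Y_{\hat\sigma\wedge\tau})$, and the key pathwise estimate is
\[
e^{-r(\hat\sigma\wedge\tau)}\bigl(g(Y^2_{\hat\sigma})\bbI_{\{\hat\sigma<\tau\}}+f_s(Y^2_{\tau})\bbI_{\{\tau\le\hat\sigma\}}\bigr)\ \le\ M_\tau\qquad\bfP_{[t,x]}\text{-a.s.}
\]
On $\{\tau\le\hat\sigma\}$ this reads $e^{-r\tau}f_s(Y^2_\tau)\le e^{-r\tau}F(t+\tau,Y^2_\tau)$, which holds by (ii) because $t+\tau\le s$ and $f_s=F(s,\cdot)$; on $\{\hat\sigma<\tau\}$ one necessarily has $\hat\sigma=\sigma^*<\tilde s\le\tilde T$, so $g(Y^2_{\hat\sigma})=F(Y_{\hat\sigma})$ and the bound is an equality. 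Taking $\bfE_{[t,x]}$ and applying optional sampling to the supermartingale $M$ at the bounded time $\tau$ gives $J_{[t,x]}(f_s,g,\hat\sigma,\tau)\le M_0=F(t,x)$; as this holds for every $\tau\le\tilde s$, we get $F_s(t,x)=\sup_\tau\inf_\sigma J_{[t,x]}(f_s,g,\sigma,\tau)\le F(t,x)$.

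For $F_s(t,x)\ge F(t,x)$ one argues symmetrically, letting the holder exercise at $\hat\tau=\tau^*\wedge\tilde s$ and proving for every $\sigma\le\tilde s$ that
\[
e^{-r(\hat\tau\wedge\sigma)}\bigl(g(Y^2_{\sigma})\bbI_{\{\sigma<\hat\tau\}}+f_s(Y^2_{\hat\tau})\bbI_{\{\hat\tau\le\sigma\}}\bigr)\ \ge\ N_\sigma=e^{-r(\hat\tau\wedge\sigma)}F(Y_{\hat\tau\wedge\sigma})\qquad\bfP_{[t,x]}\text{-a.s.}
\]
On $\{\sigma<\hat\tau\}$ one uses $g(Y^2_\sigma)\ge F(Y_\sigma)$ from (i); on $\{\hat\tau\le\sigma\}$, if $\hat\tau=\tau^*$ then $f_s(Y^2_{\tau^*})\ge f(Y^2_{\tau^*})=F(Y_{\tau^*})$ (the inequality by (i), the equality by the definition of $\tau^*$), and if $\hat\tau=\tilde s$ then $f_s(Y^2_{\tilde s})=F(s,Y^2_{\tilde s})=F(Y_{\tilde s})$. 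Optional sampling for the submartingale $N$ then yields $J_{[t,x]}(f_s,g,\sigma,\hat\tau)\ge N_0=F(t,x)$ for every $\sigma\le\tilde s$, hence $F_s(t,x)\ge\inf_\sigma J_{[t,x]}(f_s,g,\sigma,\hat\tau)\ge F(t,x)$. Combining the two bounds gives $F_s(t,x)=F(t,x)$ (and shows that $(\hat\sigma,\hat\tau)$ is a saddle point for the $F_s$-game).

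The one step carrying real content — everything else being bookkeeping of the events $\{\tau\le\sigma\}$, $\{\sigma<\tau\}$ and the hitting-time identities for $\sigma^*,\tau^*$ — is the estimate on $\{\tau\le\hat\sigma\}$ in the first display: a priori, endowing the holder of the $F_s$-game with the larger exercise payoff $f_s=F(s,\cdot)\ge f$ could raise the value above $F(t,x)$, and what prevents this is precisely the time-monotonicity (ii) of $F$, which makes early exercise for $f_s$ never worth more than holding the $F$-value until the writer's cancellation time $\sigma^*$. The reverse inequality $F_s\ge F$ is the easy one: it also follows directly from the dynamic programming principle for $F$ over the deterministic horizon $\tilde s$ (whose value at $\tilde s$ is exactly $F(s,\cdot)=f_s$) together with $f_s\ge f$, but the optional-sampling argument above treats both directions uniformly.
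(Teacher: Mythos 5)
Your proof is correct and follows essentially the same route as the paper's: both directions rest on the super/submartingale property of the value process stopped at $\sigma^*$ (resp.\ $\tau^*$), the time-monotonicity of $F$ to handle the holder's payoff $f_s=F(s,\cdot)$, and the bounds $f\le F\le g$. The only (cosmetic) difference is that you test against an arbitrary admissible $\tau$ (resp.\ $\sigma$) and use the $\sup$--$\inf$ structure directly, whereas the paper plugs in the saddle point $(\sigma^*_s,\tau^*_s)$ of the $F_s$-game; your variant is marginally cleaner in that it does not presuppose existence of that saddle point.
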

\begin{proof} We have
\begin{eqnarray*}
&F_s(t,x)=J_{[t,x]}(f_s,g,\sigma^*_s,\tau^*_s)\leq J_{[t,x]}
(f_s,g,\sigma^*,\tau^*_s)\\
&=\bfE_{[t,x]}[e^{-r\sigma^*\wedge\tau^*_s}
\big(f_s(Y^2_{\tau^*_s})\bbI_{\{\tau^*_s
\leq \sigma^*\}}+F(Y_{\sigma^*})\bbI_{\{\sigma^*<\tau^*_s\}}\big)]\nonumber\\
&\leq \bfE_{[t,x]}[e^{-r\sigma^*\wedge\tau^*_s}\big(F(Y_{\tau_s^*})
\bbI_{\{\tau^*_s
\leq \sigma^*\}}+F(Y_{\sigma^*})\bbI_{\{\sigma^*<\tau^*_s\}}\big)]=
\bfE_{[t,x]}[e^{-r\sigma^*\wedge\tau^*_s}F(Y_{\sigma^*\wedge\tau_s^*})]\leq F(t,x).
\end{eqnarray*}
Indeed, the first inequality above follows by the saddle point property.
The second inequality holds true since $F$ is nonincreasing in the time 
variable, $\tau^*_s\leq\tilde s=s-t$ for $Y^{[t,x]}$ and
$Y^{1,[t,x]}(\tau_s^*)\leq s$. The third inequality  is satisfied
since the process $e^{-rY^\bbI_{\sigma^*\wedge u}}F(Y_{\sigma^*\wedge u})$
is a continuous supermartingale in $u$ with respect to $\bfP_{[t,x]}$ 
(see ~\cite{IK}).
For the other direction we have
\begin{eqnarray*}
&F(t,x)\leq \bfE_{[t,x]}[e^{-r \tilde s \wedge\sigma^*_s\wedge\tau^* }
F(Y_{\tilde s\wedge\sigma^*_s\wedge\tau^*})]
=\bfE_{[t,x]}[e^{-r\tilde s\wedge\sigma^*_s\wedge\tau^* }\big(f(Y^2_{\tau^*})
\bbI_{\tau^* 
\leq \tilde s\wedge\sigma^*_s }\\
&+F(Y_{\tilde s})\bbI_{\tilde s < \tau^*\wedge\sigma^*_s}+g(Y^2_{\sigma_s^*})
\bbI_{\sigma^*_s<\tilde s \wedge\tau^*  } \big)]
\leq \bfE_{[t,x]}[e^{-r\tilde s\wedge\sigma^*_s\wedge\tau^* }
\big(f_s(Y^2_{\tau^*\wedge
 s })\bbI_{\tau^* \wedge\tilde s\leq\sigma^*_s }\\
 &+g(Y^2_{\sigma_s^*})\bbI_{\sigma^*_s<\tilde s \wedge\tau^*  } \big)]
=J_{[t,x]}(f_s,g,\sigma^*_s,\tau^*\wedge \tilde s))\leq J_{[t,x]}(f_s,g,
\sigma^*_s,\tau^*_s)=F_s(t,x)
\end{eqnarray*}
where we use the submartingale property of $e^{-rY^\bbI_{\tau^*\wedge u}}
F(Y_{\tau^*\wedge u})$ in $u$.
\end{proof}

Now for any bounded Borel functions $\hat f$ and $\hat g$ set 
$$
I_s(t,x,\hat f,\hat g)=\sup_{0 \leq \tau \leq \tilde s}\inf_{0\leq \sigma 
\leq \tilde s}J_{[t,x]}(\hat f,\hat g,\sigma,\tau).
$$
From the time homogeneity of the process $Y^2_t=S_t$ we obtain that
\begin{equation}\label{5.4}
I_{s+h}(t+h,x,\hat f,g)=I_s(t,x,\hat f,\hat g).
\end{equation}
\begin{prop}\label{prop4.4}
There is a constant $C>0$ such that for any $(t,x)\in (0,\beta)\times 
(k,K)$,
  $$
  0\leq -\frac{\partial F}{\partial t}(t,x)\leq C\bfP_{[t,x]}
[\tau^*_s<\sigma^*_{s+h} ].
$$
\end{prop}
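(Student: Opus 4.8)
The plan is to estimate the time derivative by comparing the two optimal stopping games appearing in \eqref{defF,f}, one with horizon $s$ and one with horizon $s+h$. By \eqref{5.4} and Lemma \ref{lem4.3} we have $F(t,x)=I_s(t,x,f_s,g)$ and $F(t+h,x)=I_{s+h}(t+h,x,f_s,g)=I_s(t,x,f_{s},g)$ with $f_s$ replaced by the shifted payoff, so that
$$
F(t,x)-F(t+h,x)=I_s(t,x,f_s,g)-I_s(t,x,f_{s+h},g),
$$
and the whole difference is now reduced to the effect of replacing the terminal payoff $f_s(\cdot)=F(s,\cdot)$ by $f_{s+h}(\cdot)=F(s+h,\cdot)$ while keeping the same domain and the same cancellation payoff $g$. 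Since $F$ is nonincreasing in time, $0\leq f_s-f_{s+h}\leq f_s-f_{s+h}$ is a nonnegative function, bounded by a constant $C_0$ (indeed $0\le F\le K$), so $F(t,x)-F(t+h,x)\ge 0$, giving the lower bound $-\partial F/\partial t\ge 0$ immediately; the content is the upper bound.

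For the upper bound I would run the standard saddle-point comparison twice, exactly in the style of the proof of Lemma \ref{lem4.3}. Write $\langle\sigma^*_s,\tau^*_s\rangle$ and $\langle\sigma^*_{s+h},\tau^*_{s+h}\rangle$ for the saddle points of the two games (both on the time-horizon giving $Y^1$ up to $s$). Plugging the optimal $\tau^*_s$ into the game with payoff $f_{s+h}$ and using optimality of $\sigma^*_{s+h}$, and conversely, one gets
$$
0\le I_s(t,x,f_s,g)-I_s(t,x,f_{s+h},g)\le \bfE_{[t,x]}\big[e^{-r\sigma^*_{s+h}\wedge\tau^*_s}\,(f_s-f_{s+h})(Y^2_{\tau^*_s})\,\bbI_{\{\tau^*_s\le\sigma^*_{s+h}\}}\big],
$$
because on the event $\{\sigma^*_{s+h}<\tau^*_s\}$ the two games are stopped by a cancellation and pay the identical quantity $g(Y^2_{\sigma^*_{s+h}})$, so their contributions cancel. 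Bounding $e^{-r\sigma^*_{s+h}\wedge\tau^*_s}\le 1$ and $0\le (f_s-f_{s+h})(\cdot)\le C_0$ yields
$$
0\le F(t,x)-F(t+h,x)\le C_0\,\bfP_{[t,x]}[\tau^*_s\le\sigma^*_{s+h}],
$$
and dividing by $h$ and letting $h\to 0$ (the derivative exists by Proposition \ref{firstProp} together with the regularity established in Section \ref{sec3}) gives the claimed inequality, up to harmlessly replacing $\le$ by $<$ in the indicator event.

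The delicate point is the direction of the inequality in the saddle-point manipulation: one must be careful that the two-sided nature of the game does not spoil the one-sided comparison. Concretely, one needs that replacing $f_s$ by the smaller $f_{s+h}$ can only decrease the value, and that the decrease is controlled only on $\{\tau^*_s\le\sigma^*_{s+h}\}$ — that is, on the paths where the holder exercises first in the $s$-game. This is where I expect to have to argue carefully, using that $f_s\ge f_{s+h}$ pointwise, that $g$ is unchanged, and that on cancellation paths the payoff does not see $f$ at all; the supermartingale/submartingale properties of $e^{-rY^1_{\cdot}}F(Y_\cdot)$ cited from \cite{IK} are what let one pass from the saddle pair of one game to a feasible pair in the other. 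Once the cancellation of the $\bbI_{\{\sigma^*_{s+h}<\tau^*_s\}}$ terms is justified, the rest is routine.
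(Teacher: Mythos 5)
Your overall strategy is the same as the paper's: reduce $F(t,x)-F(t+h,x)$ to $I_s(t,x,f_s,g)-I_s(t,x,f_{s+h},g)$ via Lemma \ref{lem4.3} and the time homogeneity (\ref{5.4}), then sandwich this difference with the suboptimal pair $\langle\sigma^*_{s+h},\tau^*_s\rangle$ so that the $g$-terms cancel on $\{\sigma^*_{s+h}<\tau^*_s\}$ and only the payoff difference $f_s-f_{s+h}$ on $\{\tau^*_s\leq\sigma^*_{s+h}\}$ survives. That part is right, and your caution about the direction of the saddle-point inequalities is well placed (the feasible-pair comparison must give an upper bound for the $f_s$-game and a lower bound for the $f_{s+h}$-game, both with the \emph{same} pair $\langle\sigma^*_{s+h},\tau^*_s\rangle$).

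However, there is a genuine gap at the final step. You bound $0\leq f_s-f_{s+h}\leq C_0$ with $C_0$ a constant coming only from $0\leq F\leq K$, arriving at $0\leq F(t,x)-F(t+h,x)\leq C_0\,\bfP_{[t,x]}[\tau^*_s\leq\sigma^*_{s+h}]$, and then you ``divide by $h$ and let $h\to0$.'' Dividing your estimate by $h$ produces $C_0h^{-1}\bfP_{[t,x]}[\cdots]$, which blows up and gives nothing in the limit. What is needed is the bound $0\leq f_s(y)-f_{s+h}(y)=F(s,y)-F(s+h,y)\leq Ch$, i.e.\ Lipschitz continuity of $F$ in time \emph{for times beyond $\beta$}. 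This is available precisely because $\beta<s<s+h$, where by Proposition \ref{firstProp}(ii) the game price coincides with the American put price $F_A$, whose time derivative is bounded by the results of \cite{L}; this is inequality (\ref{partialF<C}) in the paper, and it is the ingredient that makes the factor $h$ appear in the numerator so that the division by $h$ yields a finite constant $C$. Note that a crude global-in-time Lipschitz bound is exactly what one cannot assume here, since the whole point of Section \ref{sec4} is that $F_t$ is not known a priori to be controlled near the writer's boundary for $t<\beta$; the trick is that the terminal payoff $f_s$ lives at time $s>\beta$ where the American-option regularity applies. With (\ref{partialF<C}) inserted in place of your constant $C_0$, the rest of your argument goes through.
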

\begin{proof} The left hand side of the above inequality follows from
(iii) and (iv) of Proposition \ref{firstProp}. For the right hand side,
let $h>0$ be such that $\beta+h<T-h$ and $ t+h<\beta $ and let 
$\beta<s<T-h$. By (see \cite{L}) the price function of an American put option 
has a bounded derivative with respect to $t$ in $[0,s+h]\times \bbR$, i.e.  
$C=\sup_{(t,x)\in [0,s+h]\times \bbR_+ }|\frac {\partial F_A(t,x)}{\partial t}
|<\infty.$ This together with Proposition \ref{firstProp}(ii) yields
\begin{equation}\label{partialF<C}
\sup_{\beta<s<T,x\geq 0}|\frac {\partial F(s,x)}{\partial s}|\leq C.
\end{equation}
 Next, by Lemma \ref{lem4.3} and the saddle point property,
\begin{equation}\label{4.12}
F(t,x)=F_s(t,x)=J_{[t,x]}(f_s,g,\sig^*_s,\tau_s^*)\leq 
J_{[t,x]}(f_s,g,\sig^*_{s+h},\tau_s^*).
\end{equation}
By Lemma \ref{lem4.3}, (\ref{5.4}) and the saddle point property,
\begin{eqnarray}\label{4.13}
&F(t+h,x)=F_{s+h}(t+h,x)=I_{s+h}(t+h,x,f_{s+h},g)=I_s(t,x,f_{s+h},g)\\
&=J_{[t,x]}(f_{s+h},g,\sig^*_{s+h},\tau_{s+h}^*)\leq
J_{[t,x]}(f_{s+h},g,\sig^*_{s+h},\tau_{s}^*).\nonumber
\end{eqnarray}
Now, (\ref{defF,f}), (\ref{partialF<C}), (\ref{4.12}) and (\ref{4.13}) 
yields that
\begin{eqnarray*}
&0\leq \frac{1}{h}(F(t,x)-F(t+h,x))
\leq\frac{1}{h}\bfE_{[t,x]}[e^{-r\sigma^*_{s+h}\wedge
\tau^*_{s}}(f_s(Y^2_{\tau^*_s})
-f_{s+h}(Y^2_{\tau^*_s}))\bbI_{\{\tau^*_s\leq \sigma^*_{s+h}\}}]\\
&\leq\frac{1}{h}
E[e^{-r\sigma^*_{s+h}\wedge\tau^*_{s}}Ch\bbI_{\{\tau^*_s\leq \sigma^*_{s+h}\}}]
\leq C\bfP_{[t,x]}[\tau^*_s \leq \sigma^*_{s+h}].
\end{eqnarray*}
Passing to the limit as $h\to 0$ we obtain the result.
\end{proof}
\begin{cor}
For every $0\leq t_0 <\beta$, $\lim_{(t,x)\to (t_0,K)}\frac{\partial
 F}{\partial t}(t,x)=0$,
and so $\lim_{(t,x)\to (t_0,\ln K)}\frac{\partial P}{\partial t}(t,x)=0$.
\end{cor}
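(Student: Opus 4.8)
The plan is to combine Proposition \ref{prop4.4} with the exit estimate \eqref{P[t,x][Y]}. Fix $t_0\in[0,\beta)$. Choose once and for all $s$ with $\beta<s<T$ and $h>0$ so small that $s+h<T$ and $h<\beta-t_0$; then for $(t,x)$ near $(t_0,K)$ one has $t+h<\beta$, so Proposition \ref{prop4.4} applies and yields a constant $C>0$ (independent of $(t,x)$, namely $C=\sup_{[0,s+h]\times\bbR^+}|\partial_tF_A|$) with
\[
0\leq-\frac{\partial F}{\partial t}(t,x)\leq C\,\bfP_{[t,x]}[\tau^*_s<\sigma^*_{s+h}].
\]
Since $b$ is increasing with $\lim_{t\to T}b(t)=K$, we may in addition fix $k$ with $b(s)<k<K$ and take the domain \eqref{domD} to be $D=(0,\beta)\times(k,K)$; write $\tau=\tau(\Gam)$ for the first time $Y^{[t,x]}$ exits $D$ through its parabolic boundary, and recall $\Gamma_1=\{(u,K):0\leq u\leq\beta\}$.

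The heart of the proof is the inclusion $\{\tau^*_s<\sigma^*_{s+h}\}\subseteq\{Y_\tau\notin\Gamma_1\}$; equivalently, on the event that $Y^{[t,x]}$ exits $D$ through $\Gamma_1$ one has $\sigma^*_{s+h}\leq\tau^*_s$. On that event $Y_\tau=(u',K)$ for some $u'\leq\beta$, and $\tau\leq\beta-t<s-t$, so $\tau$ lies below both game horizons. By Lemma \ref{lem4.3} and Proposition \ref{firstProp}(ii), $F_{s+h}(u',K)=F(u',K)=\delta=g(K)$, so $(u',K)$ belongs to the writer's stopping set for the game defining $F_{s+h}$; hence $\sigma^*_{s+h}\leq\tau$. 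On the other hand, for every $u<\tau$ the point $Y_u$ lies in $D$, so $Y^2_u\in(k,K)$, in particular $Y^2_u>b(s)$; since $b$ is increasing this forces $Y^2_u>b(u')$ for every $u'\in[\beta,s]$, i.e. the stock level $Y^2_u$ is in the (American) continuation region throughout $[\beta,s]$. Using that $F$ is nonincreasing in time with $Y^1_u\leq\beta<s$, that $F=F_A$ for time $\geq\beta$, and that $\partial_tF_A<0$ strictly on the continuation region, we get
\[
F_s(Y_u)=F(Y^1_u,Y^2_u)\geq F(\beta,Y^2_u)=F_A(\beta,Y^2_u)>F_A(s,Y^2_u)=F(s,Y^2_u)=f_s(Y^2_u),
\]
so the holder's stopping condition $F_s(Y_u)=f_s(Y^2_u)$ never holds for $u<\tau$; therefore $\tau^*_s\geq\tau\geq\sigma^*_{s+h}$, which is the claimed inclusion.

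Consequently $0\leq-\partial_tF(t,x)\leq C\bigl(1-\bfP_{[t,x]}[Y_\tau\in\Gamma_1]\bigr)$, and by \eqref{P[t,x][Y]} the right-hand side tends to $0$ as $(t,x)\to(t_0,K)$; hence $\partial_tF(t,x)\to0$. (For $t\leq\beta$ and $x\geq K$ one has $F(t,x)=\delta$ identically, so $\partial_tF\equiv0$ there and the limit is again $0$.) The assertion for $P$ then follows from $P(t,x)=F(t,e^x)$, whence $P_t(t,x)=F_t(t,e^x)\to0$ as $(t,x)\to(t_0,\ln K)$. The main obstacle is the holder part of the inclusion: it rests on the strict time-monotonicity $\partial_tF_A<0$ on the American continuation region (a standard consequence of the strong maximum principle for the parabolic equation satisfied by $\partial_tF_A$, of which \eqref{2.13+} is the special case $x=K$) together with the identification of $\Gamma_1$ as part of the writer's stopping set for $F_{s+h}$; it is precisely the choice $b(s)<k<K$ of the lateral boundary of $D$ that makes the continuation-region estimate available uniformly along the path up to time $\tau$.
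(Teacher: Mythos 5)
Your proposal follows the paper's own route: reduce via Proposition \ref{prop4.4} to showing $\bfP_{[t,x]}[\tau^*_s<\sigma^*_{s+h}]\to 0$, establish that exiting $D$ through $\Gamma_1$ forces $\sigma^*_{s+h}\leq\tau^*_s$, and conclude with the exit estimate (\ref{P[t,x][Y]}). The paper merely asserts this inclusion "follows from the definitions"; your contribution is to actually verify it (writer's side via Lemma \ref{lem4.3} and $F(u',K)=\delta=g(K)$, holder's side via the choice $b(s)<k<K$ and strict time-decrease of $F_A$ in the continuation region), which is a correct and welcome elaboration.

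One caveat: the parenthetical claim that $F(t,x)=\delta$ identically for $x\geq K$ and $t\leq\beta$ is false --- it contradicts Proposition \ref{firstProp}(iv), by which $F$ is strictly decreasing in $x$, so $F(t,x)<\delta$ for $x>K$. The writer's stopping set is the curve $\{(t,K):t\leq\beta\}$, not a half-plane. This does not damage the core argument (the paper's own proof, like Proposition \ref{prop4.4}, is confined to $x<K$; the approach from $x>K$ would need the mirror domain $(0,\beta)\times(K,k')$ and the same exit argument), but as written that sentence should be deleted or replaced.
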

\begin{proof}
In view of Proposition \ref{prop4.4} we only have to show that for every 
$0\leq t_0<\beta$, 
$$
 \lim_{(t,x)\to (t_0,K)}\bfP_{[t,x]}[\tau^*_s 
\leq \sigma^*_{s+h}]=0.
$$
Let $D$ be as in (\ref{domD}), $\Gamma_2=\{(\beta,x):k\leq x\leq K\}$ 
and $\Gamma_3=\{(t,k):0\leq t\leq \beta \}$. It follows from the definition 
of $\tau^*_s$ and $\sigma^*_{s+h}$ that for every $(x,t) \in D$,
$$
\{\tau(\Gamma_1)<\tau(\Gamma_2\cap\Gamma_3)\}\subset \{\sigma^*_{s+h}<
 \tau^*_s \}\ \ with\ respect\ to \ \bfP_{[t,x]}.
 $$
From (\ref{P[t,x][Y]}) we obtain
 $$
\lim_{(t,x)\to (t_0,K)}\bfP_{[t,x]}[
\sigma^*_{s+h}<\tau^*_s ]=1
$$
and the result follows.
\end{proof}
Next, we deal with functions $P(t,x)=F(t,e^x)$, and so it is natural to
consider the domain $D_0=(0,\beta)\times(k,\ln K)$ for some positive $k<\ln K$ 
(which is, essentially, the same domain after the space coordinate change)
and let
\begin{equation}
c=P_{A, t}(\beta,\ln K)=\lim_{x\to \log K}P_{A,t}(\beta,x).
\end{equation}
Let $v(t,x)$ be a function solving the equation $(\frac{\partial}{\partial t}
+\textbf{A})v(t,x)=0$ with $\textbf{A}$ defined by (\ref{A}) and satisfying
 the boundary conditions
\begin{equation}\label{function v}
v(t,\ln K)=c\ , \ v(t,k)=P_t(t,k)\,\,\mbox{for}\,\, 0\leq t\leq \beta\,\,
\mbox{and}\,\, v(\beta,x)=P_t(\beta,x)\,\,\mbox{for}\,\, k<x<\ln K.
\end{equation}
Since these boundary conditions are continuous then (see ~\cite{F}) they
are satisfied by a unique solution in $C^{1,2}[D]$ of the above equation.
Let $w(t,x)$  be a function on $\bar D_0$ such that
\begin{equation}\label{P=w+v}
P_t(t,x)=w(t,x)+v(t,x)\ \ \  \forall(t,x)\in \bar D_0\setminus {(\beta,\ln K)}.
\end{equation}
Thus, $w(t,x)\in C^{1,2}[D']$ and it satisfies the same parabolic equation in
 $D_0$ as $\frac{\partial P}{\partial t}(t,x)$ and $v(t,x)$. Its boundary values
  are
\begin{equation}\label{function w}
w(t,\ln K)=-c \ ,\ w(t,k)=0\,\,\mbox{for}\,\, 0\leq t\leq \beta\,\,\mbox{and}
\,\, w(\beta,x)=0\,\,\mbox{for}\,\, k<x<\ln K.
\end{equation}
From the continuity of $v(t,x)$ on $\bar D_0$ we see that it is bounded there
 and since $\frac{\partial P}{\partial t}$ is also bounded there we obtain the 
 same result for the function  $w$ as for $v$. Hence,
\begin{equation}\label{w and v}
w(t,x),v(t,x)\in C^{1,2}[D_0]\cap L^{\infty}[D_0].
\end{equation}
\subsection{Integrability of $w_t(t,x)$ and $w_x(t,x)$}
Now we will analyze the function $w(t,x)$.
Let $Z^{[u,x]}_{t}=(u+t,X^x_t)$ be the diffusion process in the plane whose 
infinitesimal generator is equal to $\textbf{L}_1=\frac{\partial}{\partial t}
+\textbf{A}$ on the space of $C^2$ functions.
For each $\ve>0$ define $D_{\ve}=(0,\beta-\ve)\times(k+\ve,\ln K-\ve)$.
Let $\Gamma_{\ve}$ be the parabolic boundary of $D_{\ve}$.
For every $\ve >0$ which is sufficiently small
we can find a smooth function $\bar w(t,x)$ with compact support on the plane
 such that in $\bar D_{\ve}$ it is equal to $w(t,x)$.
By the Dynkin formula we obtain that for every $(u,x)\in D_{\ve}$,
\begin{equation}
\bfE_{[u,x]}[\bar w(Z_{\tau(\Gamma_{\ve})})]=\bar w(u,x)+ \bfE_{[u,x]}[
\int_0^{\tau(\Gamma_{\ve})}\textbf{L}_1\bar w(Z_s)ds]
\end{equation}
where $\tau(Q)$ denotes the arrival time to $Q$ by the process $Z_t^{[u,x]}$.
Note that since $w(t,x)=\bar w(t,x)$ for $(t,x)\in \bar D_{\ve}$ we 
can replace $\bar w$ by $w$ in the above formula and since $Z^{[u,x]}_s\in D_0$
for $s\leq\tau$ we obtain that $\textbf{L}_1\bar w(Z_s)=0$. It follows that
 for every $\ve>0$,
\begin{equation}
w(u,x)=\bfE_{[u,x]}[w(Z_{\tau(\Gamma_{\ve})})].
\end{equation}
Now fix $(u,x)\in D_0$ and a continuous path $\omega_0$. Let $\cE= 
\{Z^{[u,x]}_{\tau(\Gamma_{\frac{1}{n}})}(\omega_0)\}_{n_0<n}\subset 
\bar D_0$ where $n_0$ is such that $(u,x)\in D_{\frac{1}{n_0}}$. The 
sequence of times $\{\tau(\Gamma_{\frac{1}{n}})(\omega)\}_{n>n_0}$ 
is non decreasing with respect to $n$ and so it has a limit $\rho\leq T$.
 Let $\gamma$ be an accumulation point in $\cE$, i.e. $\lim_{k\to 
 \infty}Z^{[u,x]}_{\tau(\Gamma_{\frac{1}{n_k}})}(\omega_0)= \gamma$ 
 for some subsequence $n_k$. Define $d(y,\Gamma_0)=
 \inf\{|y-x|:x\in \Gamma_0 \}$ and note that this function is continuous 
 on $\bar D_0$ and it is $0$ if and only if $y\in \Gamma_0$. Since
   $d(Y^{[u,x]}_{\tau(\Gamma_{\frac{1}{n_k}})}(\omega_0),\Gamma_0)\leq 
 \frac{1}{n_k}$ for each $k$ we conclude that $\gamma \in \Gamma_0 $ and since 
 $\lim_{k\to \infty} \tau(\Gamma_{\frac{1}{n_k}})=\rho$ it follows
  that $Z^{[u,x]}_{\rho}(\omega_0)=\lim_{k\to \infty }
  Z^{[u,x]}_{\tau(\Gamma_{ \frac{1}{n_k}})}(\omega_0)=\gamma.$
Hence, $\tau(\Gamma_0)(\omega_0)=\rho$.
By the definition $w(t,x)$ is continuous except at the point 
$(\beta,\ln K)$ but because
$\bfP_{[u,x]}[Z_{\tau(\Gamma_0)}=(\beta,\ln K)]=0$ for every $(u,x)\in D_0$ we can 
ignore paths that reach the point $(\beta,\ln K)$, and so
\begin{equation}
\lim_{\ve \to 0}w(Z^{[u,x]}_{\tau(\Gamma_{\ve})})=w(Z^{[u,x]}_{
\tau(\Gamma_0)})\ \ \ \bfP_{[u,x]}\ a.s.
\end{equation}

\begin{cor}
For every $(t,x)\in D_0$,
$$
w(t,x)=\bfE_{[t,x]}[w(Z_{\tau(\Gamma_0)})]=-c\bfE_{[t,x]}[\bbI_{\{\tau(
\Gamma_{01})< \tau(\Gamma_{02} \cup \Gamma_{03})\}}]=-c\bfP_{[t,x]}
[\tau(\Gamma_{01})<\tau(\Gamma_{02}\cup\Gamma_{03})]
$$
where $\Gam_{01}=\{ (t,\ln K):\, 0\leq t\leq\beta\}$, $\Gamma_{02}=
\{(\beta,x):k\leq x\leq\ln K\}$ and $\Gamma_{03}=\{(t,k):0\leq t\leq \beta \}$.
\end{cor}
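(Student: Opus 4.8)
The plan is to let $\ve\to0$ in the identity $w(t,x)=\bfE_{[t,x]}[w(Z_{\tau(\Gamma_\ve)})]$ established above for every $\ve>0$, and then to insert the boundary values of $w$ prescribed in \eqref{function w} on the three pieces $\Gamma_{01},\Gamma_{02},\Gamma_{03}$ making up $\Gamma_0$.

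First I would note that $\tau(\Gamma_0)$ is $\bfP_{[t,x]}$-a.s.\ finite, in fact $\tau(\Gamma_0)\leq\beta-t$, since the time component $Z^1_s=t+s$ of the diffusion reaches the level $\beta$ at the deterministic instant $s=\beta-t$, where $Z$ lies on $\Gamma_{02}\subset\Gamma_0$. By \eqref{w and v} the function $w$ is bounded on $\bar D_0$, hence so is the sequence $\{w(Z_{\tau(\Gamma_\ve)})\}$, and the dominated convergence theorem applied to the a.s.\ limit $w(Z^{[t,x]}_{\tau(\Gamma_\ve)})\to w(Z^{[t,x]}_{\tau(\Gamma_0)})$ established just before the statement gives
\[
w(t,x)=\lim_{\ve\to0}\bfE_{[t,x]}[w(Z_{\tau(\Gamma_\ve)})]=\bfE_{[t,x]}[w(Z_{\tau(\Gamma_0)})],
\]
which is the first asserted equality. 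Then, by continuity of the paths of $Z$, $Z_{\tau(\Gamma_0)}\in\Gamma_0=\Gamma_{01}\cup\Gamma_{02}\cup\Gamma_{03}$ a.s., and \eqref{function w} gives $w\equiv-c$ on $\Gamma_{01}$ and $w\equiv0$ on $\Gamma_{02}\cup\Gamma_{03}$. Hence $w(Z_{\tau(\Gamma_0)})=-c\,\bbI_{\{Z_{\tau(\Gamma_0)}\in\Gamma_{01}\}}$ a.s.; moreover, off a $\bfP_{[t,x]}$-null set the event $\{Z_{\tau(\Gamma_0)}\in\Gamma_{01}\}$ coincides with $\{\tau(\Gamma_{01})<\tau(\Gamma_{02}\cup\Gamma_{03})\}$. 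Taking $\bfE_{[t,x]}$ produces the remaining two equalities.

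The one place that needs care is the overlap of the three boundary segments at their common corners — in particular at $(\beta,\ln K)$, where the two prescriptions $w=-c$ and $w=0$ disagree and $w$ is not continuous. This is exactly where the observation recorded just before the statement, $\bfP_{[t,x]}\{Z_{\tau(\Gamma_0)}=(\beta,\ln K)\}=0$, is used: since $Z$ is nondegenerate, the event that $Z$ first meets $\Gamma_0$ precisely at such a corner is $\bfP_{[t,x]}$-null, so it affects neither the a.s.\ identification of $w(Z_{\tau(\Gamma_0)})$ nor the value of the expectation. Everything else is the routine dominated-convergence and path-continuity argument sketched above.
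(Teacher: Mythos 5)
Your argument is correct and follows essentially the same route as the paper: bounded convergence (using the boundedness of $w$ from (\ref{w and v})) applied to the a.s.\ convergence $w(Z_{\tau(\Gamma_\ve)})\to w(Z_{\tau(\Gamma_0)})$ established just before the statement, followed by substitution of the boundary values (\ref{function w}). Your extra care about the corner $(\beta,\ln K)$ and the identification of $\{Z_{\tau(\Gamma_0)}\in\Gamma_{01}\}$ with $\{\tau(\Gamma_{01})<\tau(\Gamma_{02}\cup\Gamma_{03})\}$ merely spells out what the paper dispatches as obvious.
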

\begin{proof}
From (\ref{w and v}) we know that the function $w(t,x)$ is bounded and so 
we can use the Lebesgue bounded convergence theorem and from the boundary 
conditions on $w(t,x)$ it follows that
$$
w(t,x)=\lim_{\ve \to 0}\bfE_{[t,x]}[w(Z_{\tau(\Gamma_{\ve})})]=
\bfE_{[t,x]}[\lim_{\ve \to 0}w(Z_{\tau(\Gamma_{\ve})})]=\bfE_{[t,x]}
[w(Z_{\tau(\Gamma_0)})]
$$
which gives the first equality of the corollary while the second 
equality follows from (\ref{function w}) and the third equality is
obvious. \end{proof}
Let $(t,x),(t',x) \in D_0$ and assume that $t \leq t'$. Then it is not
difficult to understand that
$$
\bfP_{[t,x]}[\tau(\Gamma_{01})<\tau(\Gamma_{02} \cup \Gamma_{03})] \geq 
\bfP_{[t',x]}[\tau(\Gamma_{01})<\tau(\Gamma_{02} \cup \Gamma_{03})],
$$
and so $w(t,x)$ is nonincreasing in $t$ for every $x$ which implies that
\begin{equation}\label{w_t>0}
\frac{\partial w}{\partial t}(t,x)\geq 0\ \ \forall (t,x)\in D_0.
\end{equation}
It is also easy to see that for $0\leq t<T$ and $0\leq x\leq x'\leq \ln K$, 
$$
\bfP_{[t,x]}[\tau(\Gamma_{01})<\tau(\Gamma_{02}\cup\Gamma_{03})]\leq 
\bfP_{[t,x']}[\tau(\Gamma_{01})<\tau(\Gamma_{02}\cup\Gamma_{03})],
$$
and so
\begin{equation}\label{w_x>0}
\frac{\partial w}{\partial x}\leq 0\,\,\,\, \ \forall (t,x)\in D_0.
\end{equation}
\begin{lem}\label{lem-thm-2.5-[2]}
The functions $w_t$ and $w_x$ are in $L^1[D_0]$.
\end{lem}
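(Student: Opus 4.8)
The plan is to deduce everything from the two sign facts just established, $w_t\ge 0$ in (\ref{w_t>0}) and $w_x\le 0$ in (\ref{w_x>0}), together with the boundedness of $w$ on $D_0$ from (\ref{w and v}). Once $w_t$ and $w_x$ have a fixed sign, an $L^1$ bound over the bounded rectangle $D_0=(0,\beta)\times(k,\ln K)$ collapses to a one–dimensional fundamental–theorem–of–calculus computation on the horizontal and vertical slices of $D_0$, whose outputs are merely differences of boundary values of $w$ and are therefore controlled by $\|w\|_{L^\infty[D_0]}$.

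First I would treat $w_t$. By Tonelli's theorem, $\int_{D_0}|w_t|\,dx\,dt=\int_k^{\ln K}\big(\int_0^\beta |w_t(t,x)|\,dt\big)\,dx$. Fix $x\in(k,\ln K)$. Since $w\in C^{1,2}[D_0]$ by (\ref{w and v}) and $w$ extends continuously to $\bar D_0$ away from the single corner $(\beta,\ln K)$ — which the slice $t\mapsto w(t,x)$ avoids because $x<\ln K$ — the function $t\mapsto w(t,x)$ is continuous on $[0,\beta]$, is $C^1$ on $(0,\beta)$, and has $w_t\ge 0$ there. Applying the fundamental theorem of calculus on $[\varrho,\beta-\varrho]$ and letting $\varrho\downarrow 0$ (legitimate by monotone convergence, using $w_t\ge 0$), $\int_0^\beta |w_t(t,x)|\,dt=|w(\beta,x)-w(0,x)|=|w(0,x)|\le\|w\|_{L^\infty[D_0]}$, which is finite by (\ref{w and v}) and uniform in $x$; hence $\int_{D_0}|w_t|\le \|w\|_{L^\infty[D_0]}(\ln K-k)<\infty$. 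The argument for $w_x$ is identical with the variables interchanged: using $w_x\le 0$ from (\ref{w_x>0}) and, for each fixed $t\in(0,\beta)$, the continuity of $x\mapsto w(t,x)$ on $[k,\ln K]$ (again the corner is avoided), one gets $\int_k^{\ln K}|w_x(t,x)|\,dx=|w(t,\ln K)-w(t,k)|=|c|$ by the boundary conditions (\ref{function w}), whence $\int_{D_0}|w_x|\,dx\,dt=|c|\,\beta<\infty$.

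The only step needing genuine care is this passage to the fundamental theorem of calculus up to the endpoints of each slice, i.e. the claim that $w(\cdot,x)$ and $w(t,\cdot)$ are continuous up to the boundary of $D_0$ off the corner $(\beta,\ln K)$, and that $w$ is bounded. For this I would invoke the solvability theory for the parabolic boundary value problem with data continuous off that corner (Chapter 3 of \cite{F}) together with the probabilistic representation $w(t,x)=-c\,\bfP_{[t,x]}[\tau(\Gamma_{01})<\tau(\Gamma_{02}\cup\Gamma_{03})]$ already obtained, which simultaneously supplies the uniform bound $|w|\le|c|$ and shows that the corner is charged with probability zero, so that the behaviour there is irrelevant for the $L^1$ integrals. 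No information on second order derivatives of $w$ is required for this lemma.
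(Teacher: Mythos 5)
Your proof is correct and follows essentially the same route as the paper's: both arguments exploit the fixed signs of $w_t$ and $w_x$ from (\ref{w_t>0}) and (\ref{w_x>0}) to drop the absolute values, apply the fundamental theorem of calculus on one-dimensional slices of $D_0$, and reduce the $L^1$ norms to boundary values of the bounded function $w$. Your treatment of the corner $(\beta,\ln K)$ via slices that avoid it corresponds to the paper's $\ve\to 0$ truncation of the $x$-integral, so there is no substantive difference.
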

\begin{proof}
We will use (\ref{w_t>0}) in order to prove the result for $w_t(t,x)$. 
The case of $w_x(t,x)$ can be proven similarly be using (\ref{w_x>0}).
Using (\ref{function w}), (\ref{w_t>0}) and the 
continuity of $w(0,x)$ on $\{0\}\times[k,\ln K]$ we obtain that
\begin{eqnarray*}
&\int_{D_0}|\frac{\partial w}{\partial t}|dtdx=\int_{k}^{\ln K}\int_{0}^{
\beta}\frac{\partial w}{\partial t}dt dx =\lim_{\ve \to 0} 
\int_{k}^{\ln K-\ve}\int_{0}^{\beta}\frac{\partial w}{\partial t}dt dx\\
&=\lim_{\ve \to 0}\int_{k}^{\ln K-\ve}(w(\beta,x)-w(0,x))dx
=-\lim_{\ve \to 0}\int_{k}^{\ln K-\ve}w(0,x)dx=-\int_k^{\ln K}
w(0,x)dx<\infty.
\end{eqnarray*}
Using (\ref{w_x>0}) in place of (\ref{w_t>0}) the proof of integrability
 of $w_x$ is similar.
\end{proof}
\subsection{Integrability of $v_t(t,x)$ and $v_x(t,x)$}
We continue this section by analyzing the function $v(t,x)$ solving the
equation $\bfL_1v=0$ with the boundary conditions given by (\ref{function v}).
Let $C^{1,2}[\bar D_0]$ be the set of all functions which have one derivative 
in $t$ and two derivatives in $x$ both uniformly continuous in $D_0$. 
\begin{lem}\label{lem4.7}
There exist a function $z(t,x)\in  C^{1,2}[\bar D_0]$ such that    
\begin{equation}
z(t,x)=v(t,x)\ \ \forall (t,x)\in \Gamma_0.
\end{equation}
\end{lem}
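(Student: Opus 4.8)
The plan is to reduce the statement to an elementary extension problem for smooth boundary data. Recall that $\Gamma_0=\Gamma_{01}\cup\Gamma_{02}\cup\Gamma_{03}$, where $\Gamma_{01}=\{(t,\ln K):0\le t\le\beta\}$, $\Gamma_{02}=\{(\beta,x):k\le x\le\ln K\}$ and $\Gamma_{03}=\{(t,k):0\le t\le\beta\}$, and that by (\ref{function v}) the function $v$ equals the constant $c$ on $\Gamma_{01}$, equals $P_t(\beta,x)$ on the relative interior of $\Gamma_{02}$, and equals $P_t(t,k)$ on $\Gamma_{03}$. Throughout I use that $k$ is chosen so that $\bar D_0$ misses the holder's exercise region $\{x\le s(t)\}$ — possible since $b$ is increasing and $b(\beta)<K$ by Proposition \ref{firstProp}(i) — so that at every point of $\bar D_0$ other than $(\beta,\ln K)$ the price $P$ solves the parabolic equation (\ref{P,A}), and $P=P_A$ for $t\ge\beta$ by Proposition \ref{firstProp}(ii).

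First I would check that the data is $C^\infty$ along each face and compatible at the two corners $(\beta,k)=\Gamma_{02}\cap\Gamma_{03}$ and $(\beta,\ln K)=\Gamma_{01}\cap\Gamma_{02}$. On $\Gamma_{01}$ the data is the constant $c$, so there is nothing to prove. On the interiors of $\Gamma_{02}$ and $\Gamma_{03}$, and up to the corner $(\beta,k)$, the data is a restriction of $P_t$, and each such point lies in the open region where $P$ solves (\ref{P,A}); bootstrapping the $C^{1,2}$ solution (Proposition \ref{firstProp}(iii)) by interior parabolic regularity (see \cite{F}) shows $P$ is $C^\infty$ there, hence the data is $C^\infty$ along these faces, with all tangential derivatives matching at $(\beta,k)$ (the endpoints at $t=0$ are handled by the regularity of the price function up to $t=0$ in the continuation region, as in \cite{L}). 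The only nontrivial point is the corner $(\beta,\ln K)$: the data from $\Gamma_{01}$ is $c$, while the data from $\Gamma_{02}$ is $P_t(\beta,x)$ for $x<\ln K$, and I must show $P_t(\beta,\cdot)$ extends $C^\infty$ up to $x=\ln K$ with limit $c$. For this I would use that $P=P_A$ for $t\ge\beta$, so $P$ is $C^1$ in $t$ across $t=\beta$ for each fixed $x\in(s(\beta),\ln K)$ and $P_t(\beta,x)=P_{A,t}(\beta,x)$ there; since at time $\beta<T$ the level $\ln K$ lies strictly inside the American put's continuation region, $P_A$ is $C^\infty$ in a full two-sided neighbourhood of $(\beta,\ln K)$, so $P_{A,t}(\beta,\cdot)$ extends $C^\infty$ up to and including $x=\ln K$, with value $c$ there by the very definition of $c$. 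Thus the patched data is continuous on $\Gamma_0$, $C^\infty$ along each face, and matches (together with the required derivatives) at both corners.

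Given this, the construction of $z$ is the standard extension of smooth, compatible parabolic-boundary data — exactly the mechanism from Chapter 3 of \cite{F} already used in Section \ref{sec3} for $\phi^{(\ve)}$. Concretely, near $(\beta,\ln K)$ the data of both adjacent faces is realized by the single smooth function $(t,x)\mapsto\tilde P(x)$, where $\tilde P$ is any smooth extension of $P_t(\beta,\cdot)$ past $x=\ln K$ (this is independent of $t$, matching the constant data on $\Gamma_{01}$, and $\tilde P(\ln K)=c$ is the common corner value); near $(\beta,k)$ and along the interiors of $\Gamma_{02},\Gamma_{03}$ the data is realized by $P_t$ itself; gluing these local extensions — together with the value $0$ in the interior — by a smooth partition of unity subordinate to a finite cover of $\bar D_0$, and noting that each local model agrees with $v$ wherever it meets $\Gamma_0$, yields a function $z\in C^\infty(\bar D_0)\subset C^{1,2}[\bar D_0]$ with $z=v$ on $\Gamma_0$.

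The step I expect to be the main obstacle is the corner $(\beta,\ln K)$, which is precisely where the smooth fit of $P$ fails (as emphasized in the Introduction): along $\Gamma_{01}$ the actual time derivative of $P$ vanishes identically, since $F(t,K)\equiv\delta$ for $t\le\beta$ by Proposition \ref{firstProp}(ii), whereas $v$ is deliberately assigned the nonzero value $c=P_{A,t}(\beta,\ln K)$ there, the discrepancy being absorbed into $w=P_t-v$. Hence the delicate issue is not a compatibility check for $P_t$, but the verification that the game price derivative $P_t(\beta,x)$ — which for $x<\ln K$, $t<\beta$ genuinely differs from its American counterpart because $P$ obeys the writer's constraint at $x=\ln K$ — nonetheless coincides with $P_{A,t}(\beta,x)$ on the slice $t=\beta$ near $x=\ln K$, and so approaches $c$ with a $C^\infty$ (in particular $C^2$) rate, so that the two perpendicular pieces of boundary data actually fit to second order. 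This is where Proposition \ref{firstProp}(ii) and the interior regularity of the American put price do the essential work.
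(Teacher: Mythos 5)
Your argument is correct and rests on precisely the regularity facts the paper's proof uses: smoothness of $P_{A,t}$ on a full neighbourhood of $\bar D_0$ (the rectangle stays inside the American continuation region because $k>s(\beta)=s_A(\beta)\ge s_A(t)$ for $t\le\beta$), the identification $P_t(\beta,\cdot)=P_{A,t}(\beta,\cdot)$ on $(k,\ln K)$ coming from Proposition \ref{firstProp}(ii)--(iii), and interior smoothness of $P_t$ near the faces $x=k$ and $t=\beta$ away from $(\beta,\ln K)$. The only genuine difference is in how the extension is assembled. You check corner compatibility and glue local models (a $t$-independent smooth extension of $P_{A,t}(\beta,\cdot)$ near $(\beta,\ln K)$, the function $P_t$ itself near $(\beta,k)$ and along the other faces) with a partition of unity, in the spirit of the Friedman Chapter 3 machinery already used for $\phi^{(\ve)}$; the paper instead writes down one explicit linear blend, $\tilde z(t,x)=\frac{\ln K-x}{\ln K-k}\bigl(P_t(t,k)+P_{A,t}(\beta,x)-P_{A,t}(\beta,k)\bigr)+\frac{x-k}{\ln K-k}P_{A,t}(t,x)$, followed by a second interpolation in $x$ between $\tilde z(t,x)$ and $\tilde z(\beta,x)$, and verifies directly that the restrictions to the three faces equal $v$. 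Your route makes explicit what the paper's algebra hides, namely that the corner $(\beta,\ln K)$ is unproblematic only because the datum on $\Gamma_{01}$ is the artificially assigned constant $c=P_{A,t}(\beta,\ln K)$ rather than the true $P_t\equiv 0$ there, the mismatch being carried by $w$; the paper's formula is shorter but performs the same compatibility check implicitly. One small looseness: ``$P=P_A$ for $t\ge\beta$, so $P$ is $C^1$ in $t$ across $t=\beta$'' is not by itself a two-sided statement --- it is Proposition \ref{firstProp}(iii), giving the equation (\ref{P,A}) and hence smoothness in a full neighbourhood of $(\beta,x)$ for $s(\beta)<x<\ln K$, that makes $P_t(\beta,x)$ a genuine two-sided derivative, after which $P=P_A$ for $t\ge\beta$ identifies it with $P_{A,t}(\beta,x)$; since you invoke both facts elsewhere this is presentational, not a gap.
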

\begin{proof}
Recall that $P_{A,t}(T,x)=P_t(T,k)$ for $k\leq x<\ln K$ and note that the
functions $P_{A,t}(T,x),  P_{A,t}(t,x)$ and $ P_t(t,k)$ as function of 
$(t,x)$ belong to the space $C^{1,2}[\bar D_0].$
Set
$$
\tilde z(t,x)=\frac{\ln K-x}{\ln K-k}\big(P_t(t,k)+P_{A,t}(T,x)- 
P_{A,t}(T,k)\big)+\frac{x-k}{\ln K-k}P_{A,t}(t,x).
$$
Then $\tilde z(t,x) \in C^{1,2}[\bar D]$ since it is a linear combination
 of functions from this space. We also have
\begin{eqnarray*}
&\tilde z(t,k)=\frac{\ln K-k}{\ln K-k}\big(P_t(t,k)+P_{A,t}(T,k)-
P_{A,t}(T,k)\big)=P_t(t,x)\ \ \forall 0\leq t\leq \beta\\
&\tilde z(t,\ln K)=P_{A,t}(t,\ln K)\,\,\mbox{when}\,\, 0\leq t\leq \beta\,\,
\mbox{and for all}\,\,k\leq x\leq \ln K,\\
&\tilde z(T,x)=\frac{\ln K-x}{\ln K-k}\big(P_t(T,k)+P_{A,t}(T,x)-
 P_{A,t}(T,k)\big)+\frac{x-k}{\ln K-k} P_{A,t}(T,x)= P_{A,t}(T,x). 
\end{eqnarray*}
Thus, we obtain
\begin{equation}\label{4.24}
z(t,x)=\frac{\ln K-x}{\ln K-k}\tilde z(t,x)+\frac{x-k}{\ln K-k}\tilde 
z(T,x)\in C^{1,2}[\bar D].
\end{equation}
Since
\begin{equation*}
z(t,k)=\tilde z(t,k)=P_t(t,x),\,\,z(t,\ln K)=\tilde z(T,\ln K)=
P_{A,t}(T,\ln K)=c\,\,\mbox{and}\,\, z(T,x)=\tilde z(T,x)=P_{A,t}(T,x)
\end{equation*}
it follows that
\begin{equation}\label{4.25}
z(t,x)=v(t,x)\ \ \forall (t,x) \in \Gamma.
\end{equation}
\end{proof}

Next, define $f(t,x)=-\textbf{L}z(t,x)$. From Lemma \ref{lem4.7} we obtain
 that $f(x,t)$ is bounded in $D_0$ and so it belongs to $L^p[D_0]$ for every 
 $1\leq p\leq \infty$. Set $\tilde v(t,x)=v(t,x)-z(x,t)$ and observe that 
$$
\textbf{L}\tilde v(t,x)=f(t,x)\ \ and \ \ \tilde v(t,x)=0\ \ \forall 
(t,x)\in \Gamma_0.
$$
We conclude that the function $\tilde v(t,x)$ is the unique solution of the
 following problem (see ~\cite{BF}).
\begin{thm}\label{thm4.7}
Let $1\leq p<\infty $ then for any  $f(t,x)\in L^p[D_0]$ there exists a unique
function $\tilde v$ such that

 (i)  $\tilde v \in L^p[0,T;W^{2,p}(0,1)]\cap L^p[0,T;W^{1,p}_0(0,1)]$,
 
  (ii) $\frac{\partial \tilde v}{\partial t}\in L^p[D_0]$,
  
  (iii) $\textbf{L}\tilde v(t,x)=f(t,x)$ for every $(t,x)\in D_0$,
  
 (iv) $\tilde v|_{\Gamma_0}=0$.
\end{thm}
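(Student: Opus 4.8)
The plan is to treat Theorem \ref{thm4.7} as a standard instance of the $L^p$-theory of linear parabolic equations, so that the proof reduces to matching hypotheses with a citable existence result and supplying a short energy/maximum-principle argument for uniqueness.

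First I would normalize the geometry: the affine change of the space variable carrying $(k,\ln K)$ onto $(0,1)$ turns $D_0$ into the cylinder $(0,\beta)\times(0,1)$ and leaves $\textbf{L}$ a uniformly parabolic operator $\partial_t-a\partial_{xx}-b\partial_x+r$ with constant principal coefficient $a=\kappa^2/2>0$, constant drift $b$, and bounded zeroth order coefficient $r>0$; the parabolic boundary of the rescaled cylinder is the union of the two lateral sides and the bottom, which is regular enough over the one–dimensional spatial domain. Since $f\in L^p[D_0]$, the standard parabolic $L^p$ existence and regularity theorem (the one recorded in \cite{BF}; see also Chapter 3 of \cite{F}) yields a unique strong solution $\tilde v$ in the anisotropic Sobolev space $W^{2,1}_p$ of the cylinder with vanishing trace on the parabolic boundary, together with the a priori estimate $\|\tilde v\|_{W^{2,1}_p}\leq C\|f\|_{L^p[D_0]}$. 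Unwinding the definition of $W^{2,1}_p$ then gives exactly properties (i) and (ii); property (iii) is the statement that the equation holds almost everywhere in $D_0$, which is what being a strong solution means; and (iv) is the vanishing trace.

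For uniqueness within this class I would argue directly. If $\tilde v_1,\tilde v_2$ are two solutions, then $u=\tilde v_1-\tilde v_2$ satisfies $\textbf{L}u=0$ in $D_0$ with $u=0$ on the parabolic boundary. When $p\geq 2$ one has $W^{2,p}\subset H^2$ and $L^p\subset L^2$ on the bounded cylinder, so $u$ lies in the same class as in Lemma \ref{v.i.problem}; multiplying $\textbf{L}u=0$ by $u$, integrating over $(0,\beta)\times(k,\ln K)$ and integrating by parts in $x$ exactly as there (the first order term drops because $u$ vanishes on the lateral boundary) yields
\[
\tfrac12\int u^2(\beta,x)\,dx+\tfrac{\kappa^2}{2}\iint u_x^2\,dx\,dt+r\iint u^2\,dx\,dt=0,
\]
and since every term is nonnegative we get $u\equiv 0$ a.e. For $1\leq p<2$ I would instead invoke the weak maximum principle for $\textbf{L}$, which is legitimate since the zeroth order coefficient $r$ is positive: approximating $f$ in $L^p$ by smoother data for which the corresponding solutions are continuous on $\bar D_0$ by parabolic Sobolev embedding, and passing to the limit using the a priori estimate, one again concludes $\tilde v_1=\tilde v_2$.

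The only genuinely technical ingredient is the existence half, i.e. the parabolic Calderón–Zygmund estimate $\|\tilde v\|_{W^{2,1}_p}\leq C\|f\|_{L^p}$ and the solvability it delivers; this I would quote verbatim from \cite{BF} (and \cite{F}) rather than reprove, since it is precisely the linear theory those references are designed to supply, and our operator — constant principal part, bounded coefficients, cylindrical domain over an interval — falls squarely within their hypotheses. Thus the expected obstacle is not conceptual but merely bookkeeping: checking that the regularity of $f=-\textbf{L}z$ produced in Lemma \ref{lem4.7} and the geometry of $D_0$ meet the precise assumptions of the cited theorem.
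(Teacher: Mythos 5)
Your proposal matches the paper's treatment: Theorem \ref{thm4.7} is stated there as a quoted result from the linear $L^p$-theory of parabolic equations, with no proof supplied beyond the citation of \cite{BF}, which is exactly the reduction you make for the existence half, and your added uniqueness argument (the energy identity for $p\geq 2$, mirroring Lemma \ref{v.i.problem}) is sound and only supplements what the paper records. The one soft spot is the $1\leq p<2$ case, where approximating $f$ by smooth data does not by itself exclude a second $W^{2,1}_p$ solution of the homogeneous problem (one would bootstrap such a solution into a higher class or argue by duality), but this is immaterial both for the application in the paper, where $f=-\mathbf{L}z$ is bounded, and for the comparison, since the paper proves nothing here.
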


From assertions (i) and (ii) of Theorem \ref{thm4.7} we obtain that the 
functions $\tilde v_x(t,x)$ and  $\tilde v_t(t,x)$ are both in $L^p[D]$ for
every $0\leq p<\infty$ and since $z(t,x)\in C^{1,2}[\bar D_0]$ we obtain
 the following.
\begin{cor}\label{cor_them2.5(3)}
For every $1\leq p<\infty$ the functions $v_{t}(t,x)$ and $v_{x}(t,x)$ belong 
to the space $L^p[D_0]$.
\end{cor}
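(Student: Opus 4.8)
The plan is to exploit the decomposition $v=\tilde v+z$ set up just before the statement, where $z$ is the function from Lemma \ref{lem4.7} lying in $C^{1,2}[\bar D_0]$ and agreeing with $v$ on $\Gamma_0$, and $\tilde v(t,x)=v(t,x)-z(t,x)$ solves $\textbf{L}\tilde v=f$ with $f(t,x)=-\textbf{L}z(t,x)$ and zero parabolic boundary data. Since the assertion concerns only first-order derivatives over the bounded domain $D_0=(0,\beta)\times(k,\ln K)$, it suffices to bound $v_t$ and $v_x$ termwise, i.e. to handle $\tilde v$ and $z$ separately and add.

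First I would observe that $f$ is bounded on $D_0$: indeed $z\in C^{1,2}[\bar D_0]$, so $z_t,z_x,z_{xx}$ are continuous on the compact set $\bar D_0$, hence $f=-\textbf{L}z$ is bounded there and thus $f\in L^p[D_0]$ for every $1\leq p<\infty$. Therefore Theorem \ref{thm4.7} applies to $\tilde v$ and gives, by its assertions (i) and (ii), that $\tilde v\in L^p[0,T;W^{2,p}(0,1)]$ and $\tilde v_t\in L^p[D_0]$; combining the spatial Sobolev regularity with the $L^p$-control in $t$ yields $\tilde v_x,\tilde v_t\in L^p[D_0]$. Next, because $z\in C^{1,2}[\bar D_0]$, the derivatives $z_t$ and $z_x$ are continuous on $\bar D_0$, hence bounded, hence in $L^p[D_0]$ since $D_0$ has finite Lebesgue measure. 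Adding, $v_t=\tilde v_t+z_t$ and $v_x=\tilde v_x+z_x$ belong to $L^p[D_0]$ for every $1\leq p<\infty$, which is exactly the claim.

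There is essentially no obstacle here; the substantive work was already carried out in Lemma \ref{lem4.7} (producing the smooth extension $z$ with controlled boundary values) and in Theorem \ref{thm4.7} (the $L^p$-theory for the linear parabolic problem). The only point worth a word of care is verifying that the explicit bilinear combination defining $z$ in (\ref{4.24}) genuinely lies in $C^{1,2}[\bar D_0]$ with its first derivatives bounded up to the boundary — which is precisely what the proof of Lemma \ref{lem4.7} establishes — together with the elementary remark that on the bounded domain $D_0$ boundedness implies membership in $L^p$ for all finite $p$.
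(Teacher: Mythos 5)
Your argument is correct and is essentially identical to the paper's: both decompose $v=\tilde v+z$ with $z$ from Lemma \ref{lem4.7}, note that $f=-\textbf{L}z$ is bounded so Theorem \ref{thm4.7} yields $\tilde v_t,\tilde v_x\in L^p[D_0]$, and then add back the smooth part $z$ whose derivatives are bounded on the compact closure. Nothing further is needed.
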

We can now summarize most of the results of this section as follows.
\begin{prop}\label{thm2.5}
Let $s(\beta)<k<\ln K<k'$ and define 
$$
D_0=(0,\beta)\times(k,\ln K)\,\,\mbox{and}\,\, D_0'=(0,\beta)\times(\ln K,k').
$$ 
Then the function $P_t(t,x)$ is continuous at every point in the domain 
$\bar D_0\setminus \{(\beta,\ln K)\}$, and there exist two functions 
$w(t,x)$ and $v(t,x)$ on $D_0$ such that
\begin{equation}\label{P_t=w+v}
P_t(t,x)=w(t,x)+v(t,x)\ \ for\ every\ \ (t,x)\in \bar D_0\setminus 
\{(\beta,\ln K)\},
\end{equation}
\begin{equation}\label{them2.5(1)}
w(t,x),v(t,x)\in C^{1,2}(D_0)\cup L^{\infty}[D_0]
\end{equation}
and both functions are solutions of the parabolic equation $\textbf{L}_1u=0$.
Furthermore, $w(t,x)$ is continuous in $D_0$ and it satisfies
\begin{eqnarray}\label{w in them}
&w(t,\ln K)= P_{A,t}(\beta,\ln K)\,\,\mbox{and}\,\, w(t,b)=0\,\,\mbox{when}
\,\, 0\leq t\leq \beta,\\
& w(\beta,x)=0\,\,\mbox{when}\,\, k<x<\ln K\nonumber
\end{eqnarray}
and
\begin{equation}\label{w_t w_x int}
w_t(t,x),w_x(t,x)\in L^1[D].
\end{equation}
Finally, $v(t,x)\in C(\bar D)$ and for every $ 1\leq p<\infty$,
\begin{equation}\label{v_x v_t integr}
v_t(t,x),v_x(t,x)\in L^p[D].
\end{equation}
The same decomposition of $P_{t}(t,x)$ with the same properties holds true in
 the domain $D_0'$.
\end{prop}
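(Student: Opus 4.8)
The plan is to read off Proposition \ref{thm2.5} from the work already done in this section, so the argument is almost entirely a matter of bookkeeping. First I would note that, since $s(\cdot)$ is increasing and $s(\beta)<k$, the closed rectangle $\bar D_0$ is disjoint from the holder's exercise region $\{x\le s(t)\}$ and, except for the corner $(\beta,\ln K)$, from the writer's cancellation segment $[0,\beta)\times\{\ln K\}$. Hence Proposition \ref{firstProp}(iii) applies and shows that $P(t,x)=F(t,e^x)$ is of class $C^{1,2}$ in $D_0$ with $P_t+\textbf{A}P=0$ there; since $\textbf{A}$ has constant coefficients in the log variable, differentiating this identity in $t$ gives that $P_t$ is a classical solution of $\textbf{L}_1u=0$ in $D_0$. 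I would then introduce $v$ as the solution in $C^{1,2}[\bar D_0]$ of $\textbf{L}_1v=0$ with the boundary values (\ref{function v}) — unique by the classical theory of \cite{F}, since those data are continuous — and set $w=P_t-v$ as in (\ref{P=w+v}). Linearity then immediately yields $\textbf{L}_1w=0$ in $D_0$, the decomposition (\ref{P_t=w+v}), and the boundary values (\ref{w in them}): along $\{x=\ln K\}$ with $t<\beta$ I would invoke the corollary to Proposition \ref{prop4.4}, which gives $P_t(t,x)\to0$ as $(t,x)\to(t_0,\ln K)$, so that $w=-v$ takes the asserted value there, while along $\{x=k\}$ and $\{t=\beta\}$ the value $w=0$ is read off directly from (\ref{function v}).

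For (\ref{them2.5(1)}) it remains to bound $w$: continuity of $v$ on the compact set $\bar D_0$ is built into its construction, while $P_t$ is bounded on $D_0$ by Proposition \ref{prop4.4} (equivalently by (\ref{partialF<C}) together with Proposition \ref{firstProp}(ii)), so $w=P_t-v\in L^\infty[D_0]$ and, being a bounded solution of a nondegenerate parabolic equation, is continuous in $D_0$. The continuity of $P_t$ on all of $\bar D_0\setminus\{(\beta,\ln K)\}$ I would then obtain by patching interior smoothness in $D_0$, continuity up to $\{x=\ln K\}$ for $t<\beta$ from the corollary to Proposition \ref{prop4.4}, and continuity up to the remaining faces (which lie strictly inside the holder's continuation region) from Corollary \ref{cor12}. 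The bound (\ref{w_t w_x int}) is precisely Lemma \ref{lem-thm-2.5-[2]}: I would use the representation $w(t,x)=-c\,\bfP_{[t,x]}[\tau(\Gamma_{01})<\tau(\Gamma_{02}\cup\Gamma_{03})]$ from the corollary preceding (\ref{w_t>0}), extract the monotonicities (\ref{w_t>0})--(\ref{w_x>0}), and integrate $w_t$ (resp.\ $w_x$) in $t$ (resp.\ $x$), telescoping through the boundary values (\ref{function w}). Finally, (\ref{v_x v_t integr}) is Corollary \ref{cor_them2.5(3)}: by Lemma \ref{lem4.7} there is $z\in C^{1,2}[\bar D_0]$ equal to $v$ on $\Gamma_0$, so $\tilde v=v-z$ vanishes on $\Gamma_0$ and solves a parabolic equation with right-hand side $f=-\textbf{L}z\in L^p[D_0]$ for every $p<\infty$; Theorem \ref{thm4.7} then gives $\tilde v_t,\tilde v_x\in L^p[D_0]$, and adding back the smooth $z$ yields $v_t,v_x\in L^p[D_0]$.

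For the domain $D_0'=(0,\beta)\times(\ln K,k')$ on the opposite side of the writer's boundary I would observe that Proposition \ref{firstProp}(iii) is equally valid there (its only singular set is $\{x\le s(t)\}\cup[0,\beta)\times\{\ln K\}$) and that Proposition \ref{prop4.4} and its corollary, Lemma \ref{lem-thm-2.5-[2]}, Lemma \ref{lem4.7} and Theorem \ref{thm4.7} all carry over verbatim with the space variable now ranging over $(\ln K,k')$; hence the same decomposition with the same properties holds on $D_0'$. I expect the only genuinely delicate point in the whole argument to be the matching of $w$ and $v$ at the writer's boundary $\{x=\ln K\}$ and the consequent continuity of $P_t$ up to that face — this is exactly where the absence of a smooth fit for game put options bites — but that difficulty has already been absorbed into the corollary to Proposition \ref{prop4.4}; with that input in hand, everything above is routine.
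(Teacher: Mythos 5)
Your proposal is correct and follows essentially the same route as the paper: the paper's own proof of Proposition \ref{thm2.5} is likewise just an assembly of the decomposition (\ref{P=w+v}), the boundary values (\ref{function w}), Lemma \ref{lem-thm-2.5-[2]} and Corollary \ref{cor_them2.5(3)}, together with the observation that nothing used $k<\ln K$, so the argument transfers to $D_0'$. Your extra care with the continuity of $P_t$ up to the faces of $\bar D_0$ (via the corollary to Proposition \ref{prop4.4} at $x=\ln K$ and interior smoothness elsewhere) fills in a step the paper leaves implicit, and your sign convention $w(t,\ln K)=-c$ matches the actual construction in (\ref{function w}) rather than the typo in (\ref{w in them}).
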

\begin{proof}
Taking the same functions $v$ and $w$ as in (\ref{P=w+v})
we see that (\ref{them2.5(1)}) is actually the same as (\ref{w and v}) and 
the fact that both $v$ and $w$ are solution of $\textbf{L}_1u=0$ is clear
from their definitions. Next we see that (\ref{w in them}) is the same as 
(\ref{function w}), that (\ref{w_t w_x int}) is the same as Lemma 
\ref{lem-thm-2.5-[2]} and that (\ref{v_x v_t integr}) is, in fact, Corollary 
\ref{cor_them2.5(3)}.
Observe that we did not use in this section the fact that $k<\ln K$ so 
all the proofs are also applicable to the case $k'>\ln K$ and the domain $D_0'$.
\end{proof}

From (\ref{P_t=w+v}), (\ref{w_t w_x int}), (\ref{v_x v_t integr}) and estimating
$P_{xx}$ via other derivatives in view of the equation (\ref{P,A})
we obtain the following.
\begin{cor}\label{cor2.6}
Let  $\tilde D=\{(t,x): 0<t<\beta, \ \ k-\mu t <x<\ln K-\mu t\}$ and
\begin{equation}\label{u_defintion}
u(t,x)=e^{-rt}P(t,x+\mu t).
\end{equation}
Then
\begin{equation}\label{2.8}
\frac{\partial^2 u }{\partial t^2}\in L^1[\tilde D].
\end{equation}
\end{cor}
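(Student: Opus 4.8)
The plan is to write $u_{tt}$ explicitly in terms of $P$ and its first- and second-order derivatives evaluated at the points $(t,x+\mu t)$, and then read off the required integrability from Proposition \ref{thm2.5} and the equation (\ref{P,A}). First I would note that the change of variables $\Phi(t,x)=(t,x+\mu t)$ has Jacobian $1$ and maps $\tilde D$ bijectively onto $D_0=(0,\beta)\times(k,\ln K)$, so that $\int_{\tilde D}|h(t,x+\mu t)|\,dt\,dx=\int_{D_0}|h(t,y)|\,dt\,dy$ for every $h$ on $D_0$; it therefore suffices to control the relevant quantities in $L^{1}[D_0]$. Since $s(\beta)<k$ and $s$ is increasing, $D_0$ lies in the continuation region $\{s(t)<x<\ln K\}$, where $P$ is smooth by Proposition \ref{firstProp}; hence $u$ is smooth on $\tilde D$ and the chain rule gives
\begin{equation*}
u_{tt}(t,x)=e^{-rt}\big[\,r^{2}P-2r(P_{t}+\mu P_{x})+P_{tt}+2\mu P_{tx}+\mu^{2}P_{xx}\,\big]\Big|_{(t,\,x+\mu t)}.
\end{equation*}

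Next I would estimate the terms over $D_0$. The factor $e^{-rt}$ is bounded on $[0,\beta]$, and $P$ is bounded (being a put price, $0\le P\le K$); moreover $|P_{x}(t,x)|=e^{x}|F_{x}(t,e^{x})|\le e^{x}\le K$ on $D_0$, since $F$ is convex and decreasing in its stock argument, so $F_{x}\in[-1,0]$ (Proposition \ref{firstProp}(iv)). By (\ref{P_t=w+v}) and (\ref{them2.5(1)}) we have $P_{t}=w+v$ with $w,v\in L^{\infty}[D_0]$, so $P_{t}$ is bounded, and then (\ref{P,A}) gives $\tfrac{\kappa^{2}}{2}P_{xx}=rP-\mu P_{x}-P_{t}$, so $P_{xx}$ is bounded on $D_0$ as well. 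The only genuinely singular terms are $P_{tt}=w_{t}+v_{t}$ and $P_{tx}=w_{x}+v_{x}$: here (\ref{w_t w_x int}) gives $w_{t},w_{x}\in L^{1}[D_0]$ and (\ref{v_x v_t integr}) gives $v_{t},v_{x}\in L^{p}[D_0]$ for all $p<\infty$, in particular for $p=1$, whence $P_{tt},P_{tx}\in L^{1}[D_0]$.

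Combining, every term in the displayed formula, multiplied by the bounded factor $e^{-rt}$ and viewed as a function on $\tilde D$, has finite $L^{1}[\tilde D]$ norm by the change of variables of the first paragraph; hence $u_{tt}\in L^{1}[\tilde D]$, which is (\ref{2.8}). I do not anticipate a real obstacle here, since the substantive regularity has already been obtained in Sections \ref{sec3}--\ref{sec4}; the only slightly delicate points are that $P_{xx}$ is not covered directly by those results and must be eliminated via (\ref{P,A}) (equivalently, substitute $P_{xx}=\tfrac{2}{\kappa^{2}}(rP-\mu P_{x}-P_{t})$ into the formula for $u_{tt}$, so that only bounded terms and the $L^{1}$ terms $P_{tt},P_{tx}$ remain), and that one should verify at the outset that $D_0$ sits in the smooth continuation region, which legitimizes the pointwise chain-rule computation.
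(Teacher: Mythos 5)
Your proposal is correct and follows essentially the same route the paper intends: the corollary is obtained from the decomposition $P_t=w+v$ of (\ref{P_t=w+v}), the integrability statements (\ref{w_t w_x int}) and (\ref{v_x v_t integr}), and the elimination of $P_{xx}$ through the equation (\ref{P,A}). You have merely written out the chain-rule expansion of $u_{tt}$, the unit-Jacobian change of variables from $\tilde D$ to $D_0$, and the boundedness of the zeroth- and first-order terms, all of which the paper leaves implicit.
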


\subsection{Price function when initial stock price is large}
Let $F(t,x),P(t,x)$ and $u(t,x)$ be as above. Recall that in the domain $(0,T)
\times (\ln K,\infty)$ the function $P(t,x)$ satisfies the equation
$\bfL_1P=0$, it is continuous in the closure of $[0,T]\times[\ln K,\infty)$
and $P(T,x)=(K-e^x)^+=0$ for $x>\ln K$. Define
\begin{equation}
v(t,x)=u(T-t,\frac{\kappa}{\sqrt{2}}x+\ln K+|\mu|T)
\end{equation}
where $u$ is given by (\ref{u_defintion}) and set $G=(0,T)\times(0,\infty)$.
It follows from  Proposition \ref{thm2.5} that
\begin{enumerate}
\item $v(t,x)\in C^{1,2}[G]\cup C[\bar G]$,
\item $v_{xx}(t,x)=v_t(t,x)$ for every $(t,x)\in G$,
\item $v(t,0)=u(T-t,\ln K+|\mu|T)$ is continuous,
\item $v(0,x)=0$ for every $x>0$,
\item $v(t,x)$ is bounded (since $P(t,x)$ is).
\end{enumerate}
Since a bounded solution of the heat equation in $G$ is unique (see ~\cite{C})
then for every $(t,x)\in G$,
\begin{equation}
v(t,x)=-2\int_{0}^{t}\frac{\partial K}{\partial x }(t-\tau,x)v(\tau,0)d\tau
\,\,\mbox{where}\,\, K(t,x)=\frac{1}{\sqrt{4\pi t}}e^{-\frac{x^2}{4t}},
\end{equation}
and so 
$$
v(t,x)=\frac{1}{\sqrt{4\pi}}\int_0^{t}\frac{xe^{-\frac{x^2}{4(t-
\tau)}}v(\tau,0)d\tau}{(t-\tau)^{3/2}}.
$$
Differentiating $v$ we obtain polynomials $Q_{k,n}(s,x)$ such that for all
$k,n\in\bbN$,
$$
\frac{\partial^{k+n} v}{\partial^nt\partial^k x}(t,x)=\int_0^{t}Q_{n,k}((t
-\tau)^{-1/2},x)e^{\frac{-x^2}{4(t-\tau)}}v(\tau,0)d\tau.
$$
If $N$ is large enough and $c>0$ then
$\frac{(t-\tau)^N}{x^N}Q_{k,n}((t-\tau)^{-1/2},x)$
is a polynomial in $(t-\tau)^{1/2}$ and $1/x$ and it is bounded on 
$ (0,T)\times (c,\infty)$. Since $sup_{y\geq 0}y^{N}e^{-y}<\infty$
for any $N\in\bbN$ we can set $y=\frac{x^2}{4(t-\tau)}$ deriving that
for any $N\in\bbN$ and $(t,x)\in (0,T)\times(c,\infty)$,
\begin{eqnarray*}
&\frac{\partial^{k+n} v}{\partial^nt\partial^k x}(t,x)=\int_0^{t}
\frac{4^N(t-\tau)^N}{x^{2N}}Q_{n,k}((t-\tau)^{-1/2},x)y^Ne^{-y}v(\tau,0)d\tau\\
&\leq(\frac{4}{x})^{N}\int_0^{t}\big(\frac{(t-\tau)^N}{x^{N}}
Q_{n,k}(x,(t-\tau)^{-1/2})\big) y^Ne^{-y}v(\tau,0)d\tau \leq\frac{C}{x^N}
\end{eqnarray*}
For some $C=C(N)>0$. Hence, the following results hold true.
\begin{cor}
For any $k,n$ positive integers $k,n$  and $c>0$,
$$
\frac{\partial^{k+n}
v(t,x)}{\partial^k t\partial^n x}\in L^2[(0,T)\times(c,\infty)].
$$
\end{cor}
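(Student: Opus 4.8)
The plan is to read the corollary off directly from the pointwise decay estimate established immediately above its statement, so that essentially no new work is required. Recall that there it was shown that for all positive integers $k,n$ (the roles of $k$ and $n$ being interchangeable), every $c>0$ and every $N\in\bbN$, there is a constant $C=C(N,k,n,c)>0$ such that
\[
\Big|\frac{\partial^{k+n}v}{\partial^kt\,\partial^nx}(t,x)\Big|\leq \frac{C}{x^N}
\qquad\text{for all }(t,x)\in(0,T)\times(c,\infty).
\]
Since $v$ is given by a single-layer heat potential, this mixed derivative is in fact continuous on $(0,T)\times(c,\infty)$, hence measurable there, and the bound above is uniform in $t\in(0,T)$ and does not degenerate as $t\to 0$ or $t\to T$.

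First I would fix $N=1$ in the estimate, which already suffices because $\int_c^\infty x^{-2}\,dx=1/c<\infty$. Squaring the estimate and integrating over the bounded rectangle then gives
\[
\int_0^T\!\!\int_c^\infty\Big|\frac{\partial^{k+n}v}{\partial^kt\,\partial^nx}(t,x)\Big|^2dx\,dt
\ \leq\ C^2\int_0^T\!\!\int_c^\infty\frac{dx\,dt}{x^2}\ =\ \frac{C^2\,T}{c}\ <\ \infty,
\]
which is precisely the assertion $\dfrac{\partial^{k+n}v}{\partial^kt\,\partial^nx}\in L^2[(0,T)\times(c,\infty)]$. (Any $N\geq 1$ would work equally well; one only needs $2N>1$ for the $x$-integral over $(c,\infty)$ to converge.)

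There is no genuine obstacle in this last step: all the analytic content — the representation of $v$ via the derivative of the heat kernel, the reduction of $\partial^{k+n}v$ to integrals of $Q_{n,k}((t-\tau)^{-1/2},x)e^{-x^2/4(t-\tau)}v(\tau,0)$, and the resulting $x^{-N}$ decay for arbitrary $N$ — has already been carried out just before the statement. The only points worth making explicit in writing it up are that the constant $C$ may depend on $k$, $n$ and $c$ but not on $(t,x)$, that the $t$-integration is harmless since $(0,T)$ has finite length, and that the continuity of the integrand rules out any measurability issue.
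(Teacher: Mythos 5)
Your argument is exactly the paper's: the corollary is stated with ``Hence, the following results hold true'' immediately after the pointwise bound $|\partial^{k+n}v/\partial^kt\,\partial^nx|\leq C/x^N$ on $(0,T)\times(c,\infty)$, and the intended proof is precisely to square this bound and integrate, as you do. Your write-up with $N=1$ and the explicit computation $\int_0^T\int_c^\infty x^{-2}\,dx\,dt=T/c$ is correct and complete.
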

\begin{cor}\label{cor_infty}
Let $\frac{\sqrt{2}}{\kappa}(\ln K+|\mu|T)<k'$ and $\tilde G=\{(t,x):
\,0<t<\beta,\ k'-\mu <x<\infty \}$. Then
\[ 
\frac{\partial^2 u}{\partial t^2}(t,x)\in L^2[\tilde G].
\]
\end{cor}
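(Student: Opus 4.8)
The plan is to reduce the assertion to the previous corollary via the time‑reversal and rescaling used to define $v$ from $u$. Recall that $v(t,x)=u(T-t,\tfrac{\kappa}{\sqrt2}x+\ln K+|\mu|T)$. Inverting this affine substitution gives
\[
u(t,x)=v\big(T-t,\,\tfrac{\sqrt2}{\kappa}(x-\ln K-|\mu|T)\big),
\]
which is valid precisely where the argument of $v$ lies in $G=(0,T)\times(0,\infty)$, and there $P$ satisfies $\textbf{L}_1P=0$ with vanishing terminal data, so that $v$ is the bounded solution of the heat equation considered above and hence is smooth in the interior of $G$. Differentiating the identity twice in $t$ (the time‑reversal contributes only a sign, which disappears after squaring) yields
\[
\frac{\partial^2u}{\partial t^2}(t,x)=\frac{\partial^2v}{\partial t^2}\big(T-t,\,\tfrac{\sqrt2}{\kappa}(x-\ln K-|\mu|T)\big).
\]

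Next I would perform the change of variables $s=T-t$, $y=\tfrac{\sqrt2}{\kappa}(x-\ln K-|\mu|T)$ in the integral $\int_{\tilde G}\big|\frac{\partial^2u}{\partial t^2}(t,x)\big|^2\,dt\,dx$. Its Jacobian is the constant $\tfrac{\kappa}{\sqrt2}$, so the integral equals $\tfrac{\kappa}{\sqrt2}\int_{\tilde G^{*}}\big|\frac{\partial^2v}{\partial t^2}(s,y)\big|^2\,ds\,dy$, where $\tilde G^{*}$ is the image of $\tilde G$. The constraint $0<t<\beta$ forces $s=T-t\in(T-\beta,T)\subset(0,T)$ since $\beta<T$, and the lower bound on $x$ in the definition of $\tilde G$ together with the hypothesis $\tfrac{\sqrt2}{\kappa}(\ln K+|\mu|T)<k'$ keeps $y$ bounded below by a constant $c>0$; hence $\tilde G^{*}\subset(0,T)\times(c,\infty)$.

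Finally, using the heat equation $v_t=v_{xx}$ satisfied by $v$ on $G$ one rewrites $\frac{\partial^2v}{\partial t^2}=\frac{\partial^3v}{\partial t\,\partial x^2}$, a mixed derivative of the type shown to be square integrable on $(0,T)\times(c,\infty)$ by the previous corollary. Consequently $\int_{\tilde G^{*}}\big|\frac{\partial^2v}{\partial t^2}\big|^2<\infty$, and by the displayed identity and the change of variables $\frac{\partial^2u}{\partial t^2}\in L^2[\tilde G]$, as claimed. No genuine difficulty arises beyond this bookkeeping; the one point deserving attention is the inclusion $\tilde G^{*}\subset(0,T)\times(c,\infty)$, i.e. that the hypotheses on $k'$ and on $\beta$ keep the rescaled region away from the line $x=0$, where the heat‑kernel representation of $v$ --- and with it the decay bound $|\partial^{k+n}v|\le C/x^{N}$ used in the previous corollary --- breaks down.
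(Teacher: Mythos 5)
Your argument is correct and coincides with the paper's (implicit) proof: Corollary \ref{cor_infty} is exactly the preceding corollary transported back through the affine substitution $u(t,x)=v\big(T-t,\tfrac{\sqrt2}{\kappa}(x-\ln K-|\mu|T)\big)$, with the hypothesis on $k'$ serving only to keep the image region inside $(0,T)\times(c,\infty)$ for some $c>0$. The detour through the heat equation to rewrite $v_{tt}$ as $\partial_t\partial_x^2 v$ is unnecessary, since the preceding corollary already asserts square integrability of all derivatives $\partial^{k+n}v/\partial^k t\,\partial^n x$, including the pure case $k=2$, $n=0$.
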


\section{Proof of main theorem}\label{sec5}

We split the proof into two cases for $x\leq\ln K$ and for $x >\ln K$.
\subsection{Case $x \leq \ln K$}
We begin by proving the upper bound in (\ref{P1_err}). Since the option
holder can exercise at time 0 it is clear from the definition of $P(t,x)$
in (\ref{P(x,t)}) that $P(t,x)\geq \psi(x)$ for every $x>0$. Furthermore,
by Proposition \ref{firstProp}(iv) for each fixed $t$ the function $P(t,x)$
as a function of $x$ is nonincreasing.
Therefore, $P(t,x)\geq P(t,\ln K)=\delta $ when $x\leq \ln K$.
From the definition (\ref{sigma^n}) of the stopping time $\sigma^{(n)}$
it is not difficult to see that in the present case when $\sigma^{(n)}<T$,
$$
x+\mu \sigma^{(n)}+\kappa W_{\sigma^{(n)}}<\ln K,
$$ 
and so
\begin{equation}\label{3.4}
P(\sigma^{(n)},x+\mu \sigma^{(n)}+\kappa W_{\sigma^{(n)}})\geq \delta.
\end{equation}
Hence for every $\tau \in\cT^{(n)}$  we obtain,
\begin{eqnarray}\label{3.5}
&\bfE [e^{-r\tau\wedge\sigma^{(n)}}\big(\psi(x+\mu t+\kappa W^{(n)}_{\tau})
\bbI_{\tau\leq\sigma^{(n)}}+\delta\bbI_{\sigma^{(n)}<\tau} \big)]\\
&\leq\bfE [e^{-r\tau\wedge \sigma^{(n)}}\big(P(\tau\wedge\sigma^{(n)},x+\mu 
\tau\wedge\sigma^{(n)}+ \kappa W_{\tau\wedge\sigma^{(n)}}^{(n)})]=
\bfE [u(\tau\wedge\sigma^{(n)},X^{(n)}_{\tau\wedge\sigma^{(n)}})]).\nonumber 
\end{eqnarray}
By Proposition \ref{martingel prop 2.1},
\begin{equation}\label{3.6}
\bfE [u(\tau\wedge\sigma^{(n)},X^{(n)}_{\tau\wedge\sigma^{(n)}})])=u(0,x)+
\bfE [\sum_{j=1}^{h^{-1}(\tau\wedge\sigma^{(n)})}\cD u((j-1)h,X_{(j-1)h}^{(n)})]
\end{equation}
where, as before, $u(t,x) = e^{-rt}P(t,x+\mu t)$.
Taking the sup with respect to all $\tau \in \cT^{(n)}$ in the inequality 
(\ref{3.5}) and using the fact that $u(0,x)=P(0,x)$ we obtain that
\begin{equation}\label{3.7}
P^{(n)}_1(x)-P(0,x)\leq \sup_{\tau \in\cT^{(n)}}\bfE [\sum_{j=1}^
{h^{-1}(\tau\wedge\sigma^{(n)})}\cD u((j-1)h,X_{(j-1)h}^{(n)})].
\end{equation}
Thus, in order to bound $P^{(n)}_1(x)-P(0,x)$ from the above it suffices 
to find an upper bound of the right hand in (\ref{3.7}).

Next, we split the domain  $[0,T]\times \bbR$ into three parts 
\begin{eqnarray}\label{C_domain}
&\textbf{C}=\{(t,x)\in [0,T-h]:\mu t +x> s(t+h)+|\mu|h+\kappa\sqrt{h}\},\\
&\textbf{S}=\{(t,x)\in [0,T-h]:\mu t +x\leq s(t)-|\mu|h-\kappa\sqrt{h}\} 
\,\,\mbox{and}\nonumber\\
&\textbf{B}=\{(t,x)\in[0,T-h]\times\bbR:s(t)-|\mu|h-\kappa\sqrt{h}\leq \mu 
t+x\leq s(t+h) +|\mu|h+\kappa\sqrt{h}\}.\nonumber
\end{eqnarray}
In order to estimate the right hand side of (\ref{3.7}) we split it into 
three parts according to the domains   $\textbf{C}$, $\textbf{S}$ and 
$\textbf{B}$, i.e.
\begin{eqnarray}\label{3.8}
&\bfE [\sum_{j=1}^{h^{-1}(\tau\wedge\sig^{(n)})}\cD u((j-1)h,X_{(j-1)h}^{(n)})]=
\bfE [\sum_{j=1}^{h^{-1}(\tau\wedge\sig^{(n)})}\cD u((j-1)h,\\
&X_{(j-1)h}^{(n)})\bbI_{((j-1)h,X_{(j-1)h})\in\textbf{C}}]+\bfE 
[\sum_{j=1}^{h^{-1}(\tau\wedge\sig^{(n)})}\cD u((j-1)h,X_{(j-1)h}^{(n)})
\bbI_{((j-1)h,X_{(j-1)h})\in\textbf{S}}]\nonumber\\
&+\bfE [\sum_{j=1}^{h^{-1}(\tau\wedge\sig^{(n)})}\cD u((j-1)h,X_{(j-1)h}^{(n)})
\bbI_{((j-1)h,X_{(j-1)h})\in\textbf{B}}].\nonumber
\end{eqnarray}
By Proposition \ref{firstProp}(ii) after the time $\beta$ the prices of the 
American and game put options coincide which enables us to conclude that
$u(t,x)=e^{-rt}P_A(t,x+\mu t)$ for $t\geq\beta$ and that
the sets $\textbf{C}_{t\geq\beta}=\{(t,x)\in
\textbf{C}:\, t\geq\beta\}$, $\textbf{S}_{t\geq\beta}=\{(t,x)\in
\textbf{S}:\, t\geq\beta\}$ and $\textbf{B}_{t\geq\beta}=\{(t,x)\in
\textbf{B}:\, t\geq\beta\}$ are the same as the corresponding parts
of the domains $\bar C$, $\bar S$ and $\bar B$ introduced in ~\cite{L} for
the case of American put options.  
Therefore, we can use the following results from Sections 4.2 and 4.3
in ~\cite{L}.
\begin{prop}\label{prop 3.1}
There exists a constant $C>0$ such that for every $\tau\in\cT^{(n)}$,
\begin{equation}\label{3.9}
\bfE [\sum_{j=k_{\beta}}^{(\tau/h)\vee k_{\beta}}\cD |u((j-1)h,X_{(j-1)h}^{(n)})
|\bbI_{((j-1)h,X_{(j-1)h})\in\textbf{C}}]\leq C\big(\frac{\sqrt{\ln n}}{n}
\big)^{4/5},
\end{equation}
where $k_\beta=\min\{ k:\, kh\geq\beta\}$,  and
\begin{equation}\label{3.10}
\bfE [\sum_{j=k_{\beta}}^{(\tau/h)\vee k_{\beta}}\cD u((j-1)h,X_{(j-1)h}^{(n)})
\bbI_{((j-1)h,X_{(j-1)h})\in\textbf{B}}]\leq \frac{C}{n^{3/4}}.
\end{equation}
\end{prop}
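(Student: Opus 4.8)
The plan is to reduce both estimates to the corresponding ones established for American put options in \cite{L}; the only thing needing verification is that, for $t\in[\beta,T]$, the objects entering \eqref{3.9} and \eqref{3.10} coincide with those of \cite{L}.

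First I would record the identification. By Proposition \ref{firstProp}(ii), for every $t\ge\beta$ one has $F(t,x)=F_A(t,x)$ for all $x>0$, hence $P(t,x)=P_A(t,x)$ on $[\beta,T)\times\bbR$, and the holder's free boundary $s(t)=\ln b(t)$ coincides there with the exercise boundary of the American put. Consequently $u(t,x)=e^{-rt}P(t,x+\mu t)=e^{-rt}P_A(t,x+\mu t)$ for $t\ge\beta$, which is precisely the function treated in \cite{L}, and --- as already observed in the text preceding the proposition --- the sets $\textbf{C}_{t\ge\beta}$, $\textbf{S}_{t\ge\beta}$, $\textbf{B}_{t\ge\beta}$ of \eqref{C_domain} are the restrictions to $\{t\ge\beta\}$ of the sets $\bar C$, $\bar S$, $\bar B$ of \cite{L}. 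Since the sums in \eqref{3.9} and \eqref{3.10} run over indices $j\ge k_\beta$, i.e. over times $(j-1)h\ge\beta-h$, they agree with the sums estimated in Sections 4.2 and 4.3 of \cite{L} up to a single term $j=k_\beta$ of size $O(h)$, which is harmless.

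It then remains to invoke the two estimates of \cite{L}. In the continuation region $u$ solves $u_t+\frac{\kappa^2}{2}u_{xx}=0$ --- this is \eqref{P,A} rewritten in the coordinates of \eqref{u_defintion} --- while $u_{xx}$ has a jump across $s(\cdot)$. On $\textbf{C}$, which stays a definite distance (of order $\sqrt h$) from the boundary, Proposition \ref{dis_operator_prop 2.2} gives $\cD u(t,x)=\kappa^{-1}\int_0^{\sqrt h}dy\int_{-\kappa y}^{\kappa y}z\,u_{tx}(t+y^2,x+z)\,dz$, which is small in the interior but whose control deteriorates as one approaches $s(\cdot)$ and the terminal line; \cite{L} balances the interior bound against the expected occupation of a shrinking neighbourhood of the boundary, estimated via the Berry--Esseen bound \eqref{Berry-Esseen estimate}, and the optimal choice of the cut-off width yields the exponent $4/5$ together with the $\sqrt{\ln n}$ factor of \eqref{3.9}. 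For the band $\textbf{B}$, which has spatial width $O(\sqrt h)$ about $s(\cdot)$, \cite{L} combines the estimate for $\cD u$ there, the known asymptotics of the free boundary near maturity, and again \eqref{Berry-Esseen estimate}, to obtain the $n^{-3/4}$ bound of \eqref{3.10}. All the hypotheses under which these bounds were derived in \cite{L} --- smooth fit, the interior regularity of Proposition \ref{firstProp}(iii), and the location of the boundary --- are available on $[\beta,T]$, so the arguments transfer verbatim.

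The main obstacle is conceptual rather than technical: one must be sure that nothing specific to the game option --- in particular the writer's boundary $\Gamma_1$ and the flat piece $\{P(t,\ln K)=\delta\}$ living on $[0,\beta]$ --- affects the estimates on $\{t\ge\beta\}$. Proposition \ref{firstProp}(ii) rules this out completely: beyond time $\beta$ the game and American problems are literally identical, so the claimed estimates follow from \cite{L} with no further analysis. The genuinely new difficulties, concentrated on $[0,\beta]$ near both exercise boundaries and for large initial stock price, are handled separately by means of Corollaries \ref{cor2.8}, \ref{cor2.6} and \ref{cor_infty}.
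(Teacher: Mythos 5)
Your proposal is correct and follows exactly the route the paper takes: the text immediately preceding the proposition makes the same identification via Proposition \ref{firstProp}(ii) (that for $t\geq\beta$ the game and American prices, hence $u$ and the domains $\textbf{C}$, $\textbf{S}$, $\textbf{B}$ restricted to $\{t\geq\beta\}$, coincide with those of \cite{L}) and then simply imports the two estimates from Sections 4.2 and 4.3 of \cite{L}. Your additional remarks on the boundary term at $j=k_\beta$ and on how \cite{L} obtains the exponents are consistent elaborations, not a deviation.
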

Observe also that $P(t,x)=K-e^x$ in the domain $\textbf{S}$, and so we can 
use there Lemma 2 from Section 4 of ~\cite{L}.
\begin{lem}\label{lem 3.2}
For every $(t,x)\in\textbf{S}$ we have $\cD u(t,x)\leq 0$, and so
$$
\bfE [\sum_{j=1}^{h^{-1}(\tau\wedge\sig^{(n)})}\cD u((j-1)h,X_{(j-1)h}^{(n)})
\bbI_{((j-1)h,X_{(j-1)h})\in\textbf{S}}]\leq 0.
$$
\end{lem}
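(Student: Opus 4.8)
For every $(t,x)\in\textbf{S}$ we have $\cD u(t,x)\leq 0$, and consequently the running sum in (\ref{3.7}) restricted to the set $\textbf{S}$ has nonpositive expectation.

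The plan is to exploit the explicit form of the price function on the region $\textbf{S}$, namely $P(t,x)=K-e^{x}$ there (this is the holder's exercise region where $\mu t + x$ stays below the free boundary $s(\cdot)$ with a buffer of size $|\mu|h+\kappa\sqrt h$, which is exactly designed so that all four arguments appearing in $\cD u$ stay inside $\textbf{S}$). With $u(t,x)=e^{-rt}P(t,x+\mu t)=e^{-rt}\bigl(K-e^{x+\mu t}\bigr)$, the quantity $\cD u(t,x)$ becomes a completely explicit expression: substituting $u$ into (\ref{dis-operator}) gives
\begin{equation*}
\cD u(t,x)=\tfrac12 e^{-r(t+h)}\bigl(K-e^{x+\mu(t+h)+\kappa\sqrt h}\bigr)+\tfrac12 e^{-r(t+h)}\bigl(K-e^{x+\mu(t+h)-\kappa\sqrt h}\bigr)-e^{-rt}\bigl(K-e^{x+\mu t}\bigr).
\end{equation*}
Collecting the constant terms gives $K e^{-rt}(e^{-rh}-1)\leq 0$ since $r>0$, and collecting the exponential terms gives $-e^{-rt}e^{x+\mu t}\bigl(e^{-rh}e^{\mu h}\cosh(\kappa\sqrt h)-1\bigr)$. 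The first step is therefore to show the coefficient $e^{-rh}e^{\mu h}\cosh(\kappa\sqrt h)-1$ is nonnegative, so that this second term is also $\leq 0$; then $\cD u(t,x)\leq 0$ follows immediately. Using $\mu=r-\tfrac{\kappa^2}{2}$ we have $e^{-rh}e^{\mu h}=e^{-\kappa^2 h/2}$, so the coefficient equals $e^{-\kappa^2 h/2}\cosh(\kappa\sqrt h)-1$, and the inequality $\cosh(\kappa\sqrt h)\geq e^{\kappa^2 h/2}$ is a standard term-by-term comparison of the Taylor series of $\cosh$ and $\exp$ (the $\cosh$ series has coefficients $\tfrac{1}{(2k)!}$ while the exponential series has $\tfrac{1}{2^k k!}$, and $(2k)!\leq 2^k k!\cdot k!\leq 2^k k!\cdot (2^k k!)$... more simply $\cosh y = \sum y^{2k}/(2k)! \geq \sum y^{2k}/(2^k k!) = e^{y^2/2}$ since $(2k)!\geq 2^k k!$). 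This disposes of the pointwise claim.

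For the second assertion, I would argue that on the event $\{((j-1)h,X^{(n)}_{(j-1)h})\in\textbf{S}\}$ each summand $\cD u((j-1)h,X^{(n)}_{(j-1)h})$ is $\leq 0$ by the first part, hence each term in the sum $\sum_{j=1}^{h^{-1}(\tau\wedge\sigma^{(n)})}\cD u((j-1)h,X^{(n)}_{(j-1)h})\bbI_{((j-1)h,X_{(j-1)h})\in\textbf{S}}$ is $\leq 0$ pathwise, so the whole sum is $\leq 0$ pathwise, and taking expectation preserves the inequality. The only care needed is that the random number of summands $h^{-1}(\tau\wedge\sigma^{(n)})$ is bounded (by $n$), so there is no integrability subtlety — the sum is a finite sum of nonpositive terms.

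The main obstacle — really the only substantive point — is verifying that all four points $(t,x)$, $(t+h,x\pm\kappa\sqrt h)$ appearing in the definition of $\cD u(t,x)$ genuinely lie in a region where $P(\cdot,\cdot)=K-e^{(\cdot)}$ holds, so that the explicit computation above is legitimate; this is precisely why the set $\textbf{S}$ in (\ref{C_domain}) was defined with the margin $|\mu|h+\kappa\sqrt h$. Concretely, for $(t,x)\in\textbf{S}$ one has $\mu t+x\leq s(t)-|\mu|h-\kappa\sqrt h$, and since $s$ is increasing (Proposition \ref{firstProp}(i)) one checks $\mu(t+h)+(x\pm\kappa\sqrt h)\leq s(t)\leq s(t+h)$, placing the shifted points in the holder's exercise region, and also $\mu(t+h)+(x\pm\kappa\sqrt h)\leq s(t+h)$ with room to spare so $P=K-e^{(\cdot)}$ applies by Proposition \ref{firstProp}(i). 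Since this is the content of Lemma 2 of \cite{L} transported verbatim (the holder's boundary of the game option plays the role of the American exercise boundary, and the argument uses only monotonicity of $s$ and $P=K-e^x$ below it), I would simply cite that lemma and record the one-line verification that the game-option setup satisfies its hypotheses.
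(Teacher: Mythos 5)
Your overall strategy is the right one and matches the paper's (the paper simply observes $P(t,x)=K-e^{x}$ on $\textbf{S}$ and invokes Lemma 2 of \cite{L}, whose proof is exactly the explicit computation you set up). Your verification that the margin $|\mu|h+\kappa\sqrt h$ in the definition of $\textbf{S}$ places all three evaluation points of $\cD u$ in the exercise region is correct, as is the pathwise argument for the expectation. However, the key sign computation contains a genuine error: the inequality $\cosh y\geq e^{y^{2}/2}$ is false — it is reversed. Since $(2k)!\geq 2^{k}k!$ (which you correctly state), one gets $\tfrac{y^{2k}}{(2k)!}\leq\tfrac{y^{2k}}{2^{k}k!}$ and hence $\cosh y\leq e^{y^{2}/2}$ (numerically, $\cosh 1\approx 1.543<e^{1/2}\approx 1.649$). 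Consequently $e^{-\kappa^{2}h/2}\cosh(\kappa\sqrt h)-1\leq 0$, and the term $-e^{-rt}e^{x+\mu t}\bigl(e^{-\kappa^{2}h/2}\cosh(\kappa\sqrt h)-1\bigr)$ in your decomposition is \emph{nonnegative}, not nonpositive. So $\cD u$ is the sum of a nonpositive term and a nonnegative term, and your conclusion does not follow.

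The correct way to finish is to compare magnitudes rather than signs. Writing
$\cD u(t,x)=e^{-rt}\bigl[-K(1-e^{-rh})+e^{x+\mu t}\bigl(1-e^{-\kappa^{2}h/2}\cosh(\kappa\sqrt h)\bigr)\bigr]$
and using $e^{x+\mu t}\leq e^{s(t)}=b(t)<K$ on $\textbf{S}$, it suffices to show $1-e^{-rh}\geq 1-e^{-\kappa^{2}h/2}\cosh(\kappa\sqrt h)$, i.e.\ $\cosh(\kappa\sqrt h)\geq e^{(\kappa^{2}/2-r)h}$. This is immediate when $r\geq\kappa^{2}/2$, and in general it holds because $1-e^{-rh}=rh+O(h^{2})$ while $1-e^{-\kappa^{2}h/2}\cosh(\kappa\sqrt h)=\tfrac{\kappa^{4}}{12}h^{2}+O(h^{3})$ — i.e.\ the ``discretization deficiency'' of the discounted stock price is one order in $h$ smaller than the drift $-rK$ of the discounted intrinsic value. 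This yields $\cD u\leq 0$ for all $h=T/n$ with $n$ sufficiently large, which is all Theorem \ref{main_theorem} requires (small $n$ are absorbed into the constant $C(x)$); you should either record this restriction or supply an argument valid for all $h$, since the inequality as a pointwise identity can fail for large $h$ when $r<\kappa^{2}/2$.
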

Thus, for an upper bound of the right side of (\ref{3.7}) we can ignore the
 second term in the right hand side of (\ref{3.8}) and estimate only two
 remaining terms starting with the first term in the right hand side of
   (\ref{3.8}).
\begin{prop}\label{prop 3.3}
There is a constant $C>0$ such that for all $n\in\bbN$,
\begin{equation}\label{3.11}
\bfE [\sum_{j=1}^{h^{-1}(\tau\wedge\sig^{(n)})}|\cD u((j-1)h,X_{(j-1)h}^{(n)})|
\bbI_{((j-1)h,X_{(j-1)h})\in\textbf{C}}]\leq Cn^{-3/4}.
 \end{equation}
\end{prop}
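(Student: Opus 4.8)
The plan is to adapt the treatment of the continuation region in \cite{L}, the new feature being that for $t<\beta$ the continuation region of the game put option is wedged between \emph{two} free boundaries: the holder's boundary $s(\cdot)$ and the writer's boundary $\{x=\ln K\}$, the latter lacking a smooth fit. First I would split the sum at $j=k_\beta$. For $j\geq k_\beta$ one has $t\geq\beta$, where by Proposition \ref{firstProp}(ii) the game and American prices agree and $\textbf{C}_{t\geq\beta}$ coincides with the corresponding set of \cite{L}, so estimate (\ref{3.9}) of Proposition \ref{prop 3.1} bounds this part by $C(\sqrt{\ln n}/n)^{4/5}\leq Cn^{-3/4}$. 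It then remains to control the part of the sum coming from times $t<\beta$.

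Next I would turn $\cD u$ into an integral of $u_{tx}$. For $((j-1)h,x)\in\textbf{C}$ and before $\sigma^{(n)}$, the buffer $|\mu|h+\kappa\sqrt h$ in the definition (\ref{C_domain}) of $\textbf{C}$ and the buffer $|\mu|h+2\kappa\sqrt h$ about $\ln K$ in the definition (\ref{sigma^n}) of $\sigma^{(n)}$ force the whole stencil $\{(t+y^2,\,x+z+\mu(t+y^2)):0\leq y\leq\sqrt h,\ |z|\leq\kappa y\}$ to stay inside the open continuation region $\{(s,\xi):s(s)<\xi<\ln K\}$, where by Proposition \ref{firstProp}(iii) the price is smooth and $P_s+\textbf{A}P=0$; equivalently $\delta(u)\equiv0$ on the stencil, with $u(t,x)=e^{-rt}P(t,x+\mu t)$. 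Proposition \ref{dis_operator_prop 2.2} then gives $\cD u(t,x)=\frac{1}{\kappa}\int_0^{\sqrt h}dy\int_{-\kappa y}^{\kappa y}z\,u_{tx}(t+y^2,x+z)\,dz$, and the substitution $s=t+y^2$ yields the pointwise bound $|\cD u(t,x)|\leq\frac{1}{2}\int_t^{t+h}\!\int_{x-\kappa\sqrt h}^{x+\kappa\sqrt h}|u_{tx}(s,\xi)|\,d\xi\,ds$.

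Then I would sum. Since consecutive lattice values of $X^{(n)}_{(j-1)h}$ are $2\kappa\sqrt h$ apart, (\ref{Berry-Esseen estimate}) yields $\bfP\{X^{(n)}_{(j-1)h}=x_m\}\leq C_1/\sqrt{j-1}$, and the intervals $[x_m-\kappa\sqrt h,x_m+\kappa\sqrt h]$ tile $\bbR$; hence the sum over lattice points of $\textbf{C}$ at step $j$ collapses into a single space--time integral of $|u_{tx}|$ over $[(j-1)h,jh]\times I_j$ with $I_j$ a bounded $\xi$-interval inside the continuation region, and after treating $j=1$ separately one is reduced to bounding $\sum_{j\geq2}\tfrac{1}{\sqrt{j-1}}\int_{(j-1)h}^{jh}\big(\int_{I_j}|u_{tx}(s,\xi)|\,d\xi\big)ds$. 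I would split $I_j$ by distance to the two free boundaries: on the part at distance $\geq\rho$ from both, the interior parabolic estimates of Proposition \ref{prop heat eq} (after parabolic rescaling) bound the derivatives of $u$ by some $C(\rho)$, and since $\int_{-\kappa y}^{\kappa y}z\,dz=0$ one gets the improved pointwise bound $|\cD u|\leq C(\rho)h^2$, contributing $O(C(\rho)n^{-1})$; on the collar of width $\rho$ about $s(\cdot)$, the uniform per-slice bound $\|u_{tx}(s,\cdot)\|_{L^2}\leq C$ of Proposition \ref{tem2.7} and Corollary \ref{cor2.8}, together with Cauchy--Schwarz over a width-$\rho$ interval and the weight $\tfrac{1}{\sqrt{j-1}}\asymp\sqrt h/\sqrt s$, contributes a power of $\rho$ times $n^{-1/2}$. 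Optimizing $\rho=\rho(n)$ and using the fine regularity of $P$ up to the holder boundary recorded in Corollaries \ref{cor11} and \ref{cor12} produces the bound $Cn^{-3/4}$ for the part of $I_j$ away from $\ln K$.

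The main obstacle is the remaining collar, of width of order $\sqrt h$, next to the writer's boundary $\{x=\ln K\}$: the stopping rule in (\ref{sigma^n}) keeps the stencil at distance $\geq2\kappa\sqrt h$ from $\ln K$, so only $O(1)$ lattice columns meet it at each step, but in the absence of a smooth fit the derivatives of $P$ are not controlled in $L^2$ there, so neither the interior nor the $L^2$ estimate applies. For this piece I would use the decomposition $P_t=w+v$ of Proposition \ref{thm2.5}, valid up to $\ln K$, where $w_t,w_x\in L^1$ and $v_t,v_x\in L^p$ for every $p<\infty$: the $v$-contribution is handled exactly as the holder collar, while the $w$-contribution is estimated from the $L^1$-integrability of $w_x$ against the weight $1/\sqrt s$ coming from the atom bounds, using the boundary conditions (\ref{function w}) for $w$. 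Assembling the three pieces -- Proposition \ref{prop 3.1} for $t\geq\beta$, the interior plus holder-collar estimate, and the writer-collar estimate -- gives (\ref{3.11}). I expect the writer-collar step to be the delicate one, and it is precisely the mechanism behind the degradation, noted in the introduction, of the game-option error rates relative to the American ones.
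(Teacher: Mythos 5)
Your overall architecture (split at $k_\beta$ and quote Proposition \ref{prop 3.1} for $t\geq\beta$; represent $\cD u$ via Proposition \ref{dis_operator_prop 2.2}; sum with the Berry--Esseen bound (\ref{Berry-Esseen estimate})) matches the paper, but the quantitative core does not close. The pointwise bound you extract on $\textbf{C}$, namely $|\cD u(t,x)|\leq\tfrac12\int_t^{t+h}\int_{x-\kappa\sqrt h}^{x+\kappa\sqrt h}|u_{tx}|$, loses a factor of $\sqrt h$: after summing against the atom bounds it produces a prefactor $C\sqrt h$ in front of $\int\frac{ds}{\sqrt s}\int|u_{tx}(s,\xi)|\,d\xi$, and since the spatial cross-section of $\textbf{C}$ has width of order one, that weighted integral is bounded \emph{below} by a constant; the route therefore caps out at $O(n^{-1/2})$, not $O(n^{-3/4})$. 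The paper avoids this by exploiting the cancellation $\int_{-\kappa y}^{\kappa y}z\,dz=0$ \emph{throughout} $\textbf{C}$ (not only in the interior): writing $u_{tx}(t+y^2,x+z)-u_{tx}(t+y^2,x)=\int_x^{x+z}u_{txx}$ and using $u_{xx}=-\tfrac{2}{\kappa^2}u_t$ to replace $u_{txx}$ by $u_{tt}$ yields (\ref{3.14}), $|\cD u|\leq\frac{\sqrt h}{\kappa}\int_t^{t+h}\int|u_{tt}|$, whose extra $\sqrt h$ gives the prefactor $C_1/n$ in (\ref{3.15}); one then only needs $\int_h^\beta\frac{ds}{\sqrt s}\int|u_{tt}|\,dz\leq Cn^{1/4}$, obtained by splitting the time integral at $\sqrt h$ and using $u_{tt}\in L^1$ (Corollaries \ref{cor2.8} and \ref{cor2.6}) for $s\geq\sqrt h$. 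Your substitute for this --- a fixed collar of width $\rho$ plus pointwise interior estimates at distance $\rho$ --- cannot be optimized: the collar term $\sqrt{h\rho}$ forces $\rho\lesssim\sqrt h$, while the interior constant $C(\rho)$ from Proposition \ref{prop heat eq} scales like a negative power of $\rho$ under parabolic rescaling, so the ``$O(C(\rho)n^{-1})$'' interior contribution blows up precisely in the regime where the collar contribution is small enough.

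The second gap is the writer's collar. Even in the correct $u_{tt}$ formulation, after discarding the bounded derivatives and the $L^2$ pieces ($v_t,v_x$), the term that decides the rate is $\int_h^{\sqrt h}\frac{dt}{\sqrt t}\int w_t(t,x+\mu t)\,dx$; this inner integral is over $x$, so the sign-definiteness of $w_t$ does not collapse it to boundary values of $w$ (that trick works only for the $w_x$ term), and $w_t\in L^1$ tested against $1/\sqrt s$ on $[h,\sqrt h]$ gives only $O(n^{1/2})$ for $\cI_n$, i.e.\ an overall rate $n^{-1/2}$. The paper's actual argument represents $w$ explicitly as a single-layer heat potential (\ref{z-function}), moves $\frac{d}{dt}$ past the spatial integral so that $\int\partial_xH$ telescopes to boundary values of $H$, and then estimates the resulting kernels using the fact that the stopping rule (\ref{sigma^n}) keeps the evaluation point at distance $\geq\kappa\sqrt h$ from $\ln K$; the change of variables $\rho=\sqrt{h/(2(T-t-\tau))}$ then yields exactly $\cI_n\leq Cn^{1/4}$. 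This explicit heat-kernel computation, which is the heart of the proposition and the source of the $n^{-3/4}$ rate, is absent from your sketch.
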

\begin{proof}
We have
\begin{eqnarray}\label{3.12}
&\bfE [\sum_{j=1}^{h^{-1}(\tau\wedge\sig^{(n)})}|\cD u((j-1)h,X_{(j-1)h}^{(n)})
|\bbI_{((j-1)h,X_{(j-1)h})\in\textbf{C}}]\\
&=\bfE [\sum_{j=1}^{(h^{-1}(\tau\wedge\sig^{(n)}))\wedge k_{\beta} }|\cD 
u((j-1)h,X_{(j-1)h}^{(n)})|\bbI_{((j-1)h,X_{(j-1)h})\in\textbf{C}}]  
\nonumber\\
&+\bfE [\sum_{j=k_{\beta}}^{(h^{-1}(\tau\wedge\sig^{(n)}))\vee k_{\beta}}
|\cD u((j-1)h,X_{(j-1)h}^{(n)})|\bbI_{((j-1)h,X_{(j-1)h})\in\textbf{C}}].
\nonumber\end{eqnarray}
Proposition \ref{prop 3.1} provides a bound for the second term in the right
hand side of (\ref{3.12}), and so it remains to deal only with the first term
there. Note that if $jh<\sigma^{(n)}\wedge\beta^{(n)}$ and $(jh,X^{(n)}_{jh})
\in \textbf{C}$ then
$$
\tilde c_1(j)=s(jh)-\mu jh+\kappa\sqrt h\leq X^{(n)}_{jh}  \leq \ln K-2
\kappa\sqrt{h}-\mu jh=\tilde c_2(j)
$$
where the equalities above are just definitions of $\tilde c_1$ and 
$\tilde c_2$.
Observe also that since $x<\ln K$ and  $jh<\sigma^{(n)}$ then by the 
definition of the stopping times $\sigma^{(n)}$ the process $X^{(n)}_{jh}
+\mu jh$ does not exceed $\ln K-2K\sqrt{h}$.
By Proposition \ref{dis_operator_prop 2.2},
\begin{equation}\label{3.13}
\cD u(t,x)=\frac{1}{\kappa}\int_{0}^{\sqrt{h}}dy\int_{-\kappa y}^{\kappa y}
dz\big (z\frac{\partial^2 u }{\partial t\partial x}(t+y^2,x+z)+\delta(u)
(t+y^2,x+z)  \big).
\end{equation}
Relying on the same computation as in Section 4 of ~\cite{L} we see that for 
$(t,x)\in \textbf{C}$ and $x< \ln K-|\mu|h-\kappa\sqrt{h}$,
\begin{equation}\label{3.14}
|\cD u(t,x)|\leq \frac{\sqrt{h}}{\kappa}\int_{t}^{t+h}ds \int_{x-\kappa 
\sqrt{h}}^{x+\kappa \sqrt{h}}dz|\frac{\partial^2 u }{\partial t^2}(s,z)|.
\end{equation}
Thus, for $0\leq j < k_{\beta}$,
\begin{eqnarray*}
&E(|\cD u(jh,X^{(n)}_{jh})|\bbI_{\{(jh,X^{(n)}_{jh})\in\bfC\}\cap\{jh< 
\sigma^{(n)}\}})
\leq \int_{\tilde c_1(j)}^{\tilde c_2(j)}|Du(jh,y)|d\bfP_{X_{jh}}(y)\\
&\leq \int_{\tilde c_1(j)}^{\tilde c_2(j)}\big ( \frac{\sqrt{h}}{2\kappa}
\int_{jh}^{jh+h}ds\int_{y-\kappa\sqrt{h}}^{y+\kappa\sqrt{h}}|\frac{\partial^2
 u }{\partial t^2}(s,z)|dz\big )d\bfP_{X_{jh}}(y)\\
&=\frac{\sqrt{h}}{2\kappa}\int_{jh}^{(j+1)h}ds\int_{\tilde c_1(j)+
\ka\sqrt h }^{\tilde c_2(j)+\ka\sqrt h}dz|\frac{\partial^2 u }{\partial t^2}(s,z)|\int_{max(\tilde 
c_1(j) ,z-\kappa\sqrt{h})}^{min(\tilde c_2(j),z+\kappa \sqrt{h})}
d\bfP_{X_{jh}}(y)\\
&\leq \frac{\sqrt{h}}{2\kappa}\int_{jh}^{(j+1)h}ds\int_{\tilde c_1(j)-
\kappa\sqrt{h}}^{\tilde c_2(j)+\kappa\sqrt{h}}dz|\frac{\partial^2 u }{
\partial t^2}(s,z)|\bfP [|X_{jh}^{(n)}-z|\leq \kappa \sqrt{h}].
\end{eqnarray*}
From (\ref{Berry-Esseen estimate}) we see that there is a constant $C>0$ 
independent of $j$ and $n$ such that
$$
\bfP [|X_{jh}^{(n)}-z|\leq \kappa\sqrt{h}]\leq \frac{C}{\sqrt{j+1}}.
$$
Hence, for $jh<\sigma^{(n)}$, 
\begin{eqnarray*}
&\bfE\big(|\cD u(jh,X^{(n)}_{jh})|\bbI_{(jh,X^{(n)}_{jh})\in\bfC}
\bbI_{jh<\sig^{(n)}}\big)\leq \frac{
\sqrt{h}}{2\kappa}\int_{jh}^{(j+1)h}ds\int_{\tilde c_1(j)-\kappa\sqrt{h}}^{
\tilde c_2(j)+\kappa \sqrt{h}}dz|\frac{\partial^2 u }{\partial t^2}(s,z)|
\frac{C}{\sqrt{j+1}}\\
&=\frac{Ch}{2\kappa}\int_{jh}^{(j+1)h}\frac{ds}{\sqrt{h(j+1)}}\int_{
\tilde c_1(j)-\kappa \sqrt{h}}^{\tilde c_2(j)+\kappa \sqrt{h}}dz|\frac{
\partial^2 u }{\partial t^2}(s,z)|\\
&\leq \frac{C_1}{n}\int_{jh}^{(j+1)h}\frac{ds}{\sqrt{s}}\int_{
\tilde c_1(j) -\kappa \sqrt{h}}^{\tilde c_2(j)+\kappa \sqrt{h}}dz|
\frac{\partial^2 u }{\partial t^2}(s,z)|. 
\end{eqnarray*}
Define
$$
c_1(t)=s(t)-\mu t,\ \  c_2(t)= \ln K-\mu t -\kappa\sqrt{h}
$$
where $s(t)=\ln (b(t))$ is the free boundary of the option holder 
and $b(t)$ was introduced at the beginning of Section \ref{sec3}.
Observe that for every $j$ and any $jh\leq s\leq(j+1)h$,
$$
\tilde c_1(j)-\kappa\sqrt{h}\geq c_1(s),\ \ \tilde c_2(j)+\kappa\sqrt{h}
\leq c_2(s).
$$
Summing up the above estimates we obtain 
\begin{eqnarray}\label{3.15}
&\sum_{j=0}^{k_{\beta}-1}\bfE(|\cD u(jh,X^{(n)}_{jh})|\bbI_{\{(jh,
X^{(n)}_{jh})\in \bfC\}\cap\{jh < \sigma^{(n)}\}})\\
&\leq\frac{C_2}{n}+\frac{C_1}{n}\int_{h}^{\beta}\frac{ds}{\sqrt{s}}\int_{ 
c_1(s)}^{c_2(s) }dz|\frac{\partial^2 u }{\partial t^2}(s,z)|\nonumber\\
&=\frac{C_2}{n}+\frac{C_1}{n}\Big(\int_{h}^{\sqrt{h}}\frac{ds}{\sqrt{s}}
\int_{c_1(s)}^{c_2(s)}dz|\frac{\partial^2 u }{\partial t^2}(s,z)|+\int_{
\sqrt{h}}^{\beta}\frac{ds}{\sqrt{s}}\int_{c_1(s)}^{c_2(s)}dz|
\frac{\partial^2 u }{\partial t^2}(s,z)|\Big)\nonumber
\end{eqnarray}
where the term $\frac{C_2}{n}$ comes from the first term 
$\bfE |\cD u(0,x)|$ of the sum 
 which can be estimated easily using the fact that $u_t(t,x)$ and 
 $u_{xx}(t,x)$ are bounded for small $t$.
 
Let $G=\{(t,x):0<t<\beta, c_1(t) <x< \ln K-\mu t \}$
and note that $G\subset E\cup\tilde D $ where $E$ and $\tilde D$ 
are defined in Corollaries \ref{cor2.8} and \ref{cor2.6} which imply that 
$\frac{\partial^2 u }{\partial t^2}(s,z)\in L^1[F]$. Hence,
\begin{equation}\label{3.16}
\int_{\sqrt{h}}^{\beta}\frac{ds}{\sqrt{s}}\int_{c_1(s)}^{c_2(s) }dz|
\frac{\partial^2 u }{\partial t^2}(s,z)|\leq C_1n^{1/4}\int_{
\sqrt{h}}^{\beta}ds\int_{c_1(s)}^{c_2(s) }dz|\frac{\partial^2 u }{
\partial t^2}(s,z)|\leq Cn^{1/4}.
\end{equation}
Next, we estimate the first integral in brackets in the right hand
 side of (\ref{3.15}). Let $s(\beta)<k<\ln K,\ \ \ k'=\frac{\ln K-k}{2}$
and split the integral in question as follows
\begin{eqnarray}\label{3.17}
&\int_{h}^{\sqrt{h}}\frac{ds}{\sqrt{s}}\int_{c_1(s)}^{c_2(s) }dz|
\frac{\partial^2 u }{\partial t^2}(s,z)|\\
&=\int_{h}^{\sqrt{h}}\frac{ds}{\sqrt{s}}\int_{c_1(s)}^{k'-\mu s}dz|\frac{
\partial^2 u }{\partial t^2}(s,z)|+\int_{h}^{\sqrt{h}}\frac{ds}{\sqrt{s}}
\int_{k'-\mu s }^{c_2(s) }dz|\frac{\partial^2 u }{\partial t^2}(s,z)|.
\nonumber\end{eqnarray}
From Corollary \ref{cor2.8} we know that the function $\frac{\partial^2 u }
{\partial t^2}(s,z)$ is in  $L^2[\tilde E],$ where
$$
\tilde E=\{(s,z): 0<s<T, c_1(t)<z< k'-\mu t\}\subset E^{\sigma}
$$
(for an appropriate $b<\ln K$ in the definition of $E^{\sigma}$).
Therefore we can use the Cauchy-Schwarz inequality to obtain
\begin{eqnarray}\label{3.18}
&\int_{h}^{\sqrt{h}}\frac{ds}{\sqrt{s}}\int_{c_1(s)}^{k'-\mu s}dz|
\frac{\partial^2 u }{\partial t^2}(s,z)|\\
&\leq\big(\int_{h}^{\sqrt{h}}\frac{ds}{s}\int_{c_1(s)}^{k'-\mu s}dz\big)^{1/2}
\big(\int_{h}^{\sqrt{h}}\int_{c_1(s)}^{k'-\mu s}|\frac{\partial^2 u }
{\partial t^2}(s,z)|^2dz\big)^{1/2}\leq C\ln n.\nonumber
\end{eqnarray}

Now we are left with the second integral in the right hand side of 
(\ref{3.17}). We will show that there is a constant $C>0$ such that,
\begin{equation}\label{lamstat3.5}
\cI_n= \int_{h}^{\sqrt{h}}\frac{ds}{\sqrt{s}}\int_{k'-\mu s}^{c_2(s) }dz|
\frac{\partial^2 u }{\partial t^2}(s,z)|\leq Cn^{1/4}.
\end{equation}
Recall that $u(t,x)=e^{-rt}P(t,x+\mu t)$, and so
$$
\frac{\partial^2 u}{\partial t^2}(t,x)=
e^{-rt}\big(r^2P(t,x+\mu t)-2rP_{t}(t,x+\mu t)-2r\mu P_{x}(t,x+\mu t)\big)
$$
$$
+e^{-rt}\big(\mu^2P_{xx}(t,x+\mu t)+2\mu P_{xt}(t,x+\mu t)+P_{tt}(t,x+\mu t) 
\big).
$$
Observe that the functions $P(t,x),P_x(t,x),P_t(t,x)$ and $P_{xx}(t,x)$ are
all bounded for small $t$. Indeed, $P\leq K+\delta$ while $P_t$ is bounded
in the domain of integration in (\ref{lamstat3.5}) for small $h$ in view of
(\ref{P=w+v}), (\ref{w and v}), (\ref{P_t=w+v}) and (\ref{them2.5(1)}). 
Next, $P_x$ is bounded by Theorem 8.1 from \cite{Ku}. Finally, $P_{xx}$
is bounded since in the domain in question $P$ and its first derivatives
are bounded and $P$ satisfies the equation
$(\frac {\partial}{\partial t}+\bfA)P=0$ (see (\ref{P,A})). Therefore, we 
can write
\begin{equation}\label{I_n}
\cI_n \leq \int_{h}^{\sqrt{h}}\frac{dt}{\sqrt{t}}\int_{k'-\mu t}^{c_2(t)}
dx\big(|2\mu e^{-rt}P_{tx}(t,x+\mu t)|+|e^{-rt}P_{tt}(t,x+\mu t)|\big)+C_1,
\end{equation}
for some constant $C_1>0$ independent of $n$. Recall that
for $(x,t)\in D=(0,\beta)\times(k,\ln K)$ by Proposition \ref{thm2.5},
$P_{t}(t,x)=v(t,x)+w(t,x)$
where  $v_t$ and $v_x$ belong to $L^2[D]$. Hence, expressing
$P_{tx}$ and $P_{tt}$ via $v_t,w_t$ and $v_x,w_x$ we can estimate 
the integral (\ref{I_n}) containing $v_t$ and $v_x$ by means of the 
Cauchy-Schwarz inequality as it was done in (\ref{3.18}). Replacing 
these integrals by $C_2\sqrt {\ln n}$ we obtain
$$
\cI_n \leq \int_{h}^{\sqrt{h}}\frac{dt}{\sqrt{t}}\int_{k'-\mu t }^{c_2(t)}dx
\big(|2\mu e^{-rt}w_{x}(t,x+\mu t)|+|e^{-rt}w_{t}(t,x+\mu t)|\big)+C_2
\sqrt {\ln n}+C_1.
$$
By (\ref{w_t>0}) and (\ref{w_x>0}) the functions $w_t(t,x)$ and $w_x(t,x)$ 
do not change signs in $D$, and so it follows that
\begin{eqnarray}\label{3.19}
&\int_{h}^{\sqrt{h}}\frac{dt}{\sqrt{t}}\int_{k'-\mu t}^{c_2(t)}\big(|2\mu
 e^{-rt}w_{x}(t,x+\mu t)|+|e^{-rt}w_{t}(t,x+\mu t)|\big)dx\\
&=\Big{|}\int_{h}^{\sqrt{h}}\frac{dt}{\sqrt{t}}2|\mu| e^{-rt}\int_{k'}^{\ln K
-\kappa\sqrt{h}}w_{x}(t,x)dx\Big{|}+\Big{ |}\int_{h}^{\sqrt{h}}\frac{dt}{
\sqrt{t}}\int_{k'-\mu t}^{c_2(t)}e^{-rt}w_{t}(t,x+\mu t)dx\Big{|}.\nonumber
\end{eqnarray}
By Proposition \ref{thm2.5},  $w(x,t)$ is bounded on $D$, 
and so the contribution of the first integral in the right hand side of 
(\ref{3.19}) is bounded by a constant and it remains to estimate only 
the second integral there.

Next, we will need a more explicit representation of the function $w$. Let
\begin{equation}\label{z_tilde}
\tilde z(t,x)=e^{-rt}w(t,x+\mu t).
\end{equation}
Then in the domain $\tilde E=\{(t,x),0< t< \beta, k-\mu t<x<\ln K-\mu t \}$,
$$
\frac{\kappa^2}{2}\tilde z_{xx}(t,x)+\tilde z_{t}(t,x)=0.
$$
Define
\begin{equation}\label{z}
z(t,x)=\tilde z(T-t,\frac{\kappa}{\sqrt{2}}x)
\end{equation}
and let 
$$
E=\{(t,x): 0<t<T,\  \frac{\sqrt{2}(k-\mu t)}{\kappa}< x< 
\frac{\sqrt{2}(\ln K-\mu t)}{\kappa} \}.
$$
In the domain $E$ the function $z(t,x)$ satisfies the heat equation
$$
z_{xx}(t,x)=z_t(t,x).
$$
If we let
\begin{equation}\label{2.9}
d_1(t)=\frac{\sqrt{2}(k-\mu(T-t))}{\kappa},\ \  d_2(t)=\frac{\sqrt{2}(\ln K
-\mu(T-t))}{\kappa}
\end{equation}
then from the boundary values of $w(t,x)$ we obtain
\begin{equation*}
z(0,x)=0\ \ for\ \    d_1(0)<x<d_2(0),\ \ z(t,d_1(t))=0\,\,\mbox{and}\,\,
z(t,d_2(t))=e^{-r(T-t)}\ \ for \ \ 0<t\leq T.
\end{equation*}
Note that $z(t,x)$ is a bounded continuous function on the boundaries 
$(t,d_i(t)),\ i=1,2\ ,\ 0<t\leq T $ of $E$. Hence, by  Chapter 14 of ~\cite{C} 
we can represent $z(t,x)$ in the form
\begin{equation}\label{z-function}
z(t,x)=\int_0^t\frac{\partial H}{\partial x}(x-d_1(\tau),t-\tau)
\phi_1(\tau)d\tau
+\int_0^t\frac{\partial H}{\partial x}(x-d_2(\tau),t-\tau)\phi_2(\tau)d\tau
\end{equation}
where $H(t,x)=\frac{1}{\sqrt{4\pi t}}e^{-\frac{x^2}{4t}}$ is the fundamental
 solution and the functions $\phi_i(t),\ i=1,2$ are bounded continuous on the 
 interval $(0,T]$.
From the definition of $\tilde z$ we see that
$$
\tilde z_t(t,x)=-re^{-rt}w(t,x+\mu t)+e^{-rt}w_t(t,x+\mu t).
$$
Since $w(t,x)$ is bounded then for some constant $C_1>0$ independent of $n$,
$$
\Big{|}\int_{h}^{\sqrt{h}}\frac{dt}{\sqrt{t}}\int_{k'-\mu t}^{c_2(t)}e^{-rt}
w_{t}(t,x+\mu t)dx\Big{|}\leq
\Big{|}\int_{h}^{\sqrt{h}}\frac{dt}{\sqrt{t}}\int_{k'-\mu t}^{c_2(t)}z_{t}
(t,x)dx\Big{|}+C_1.
$$
From the representation (\ref{z-function}) of $z(t,x)$ we obtain that
\begin{eqnarray}\label{3.20}
&\Big{|}\int_{h}^{\sqrt{h}}\frac{dt}{\sqrt{t}}\int_{k'-\mu t}^{c_2(t)}z_{t}
(t,x)dx\Big{|}\leq\\
&\frac{\kappa}{\sqrt{2}}\Big{|}\int_{h}^{\sqrt{h}}\frac{dt}{\sqrt{t}}\int_{
\frac{\sqrt{2}}{\kappa}(k'-\mu t)}^{\frac{\sqrt{2}}{\kappa}c_2(t)}\frac{d}{dt}
\int_0^{T-t}\frac{\partial H}{\partial x}(x-d_1(\tau),T-t-\tau)\phi_1(\tau)
d\tau dx \Big{|}\nonumber\\
&+\frac{\kappa}{\sqrt{2}}\Big{|}\int_{h}^{\sqrt{h}}\frac{dt}{\sqrt{t}}
\int_{\frac{\sqrt{2}}{\kappa}(k'-\mu t)}^{\frac{\sqrt{2}}{\kappa}c_2(t)}
\frac{d}{dt}\int_0^{T-t}\frac{\partial H}{\partial x}(x-d_2(\tau),T-t-\tau)
\phi_2(\tau)d\tau  dx\Big{|}.\nonumber
\end{eqnarray}

Observe that as long as we keep $x$ or $t$ away from $0$ the function $H(x,t)$ 
is smooth and it has bounded derivatives with bounds depending on the range of
 $t,x$ and their distance from zero. Next, if $x$ satisfies
$$
\frac{\sqrt{2}}{\kappa}(k'-\mu t)<x<\frac{\sqrt{2}}{\kappa}c_2(t)=
\frac{\sqrt{2}}{\kappa}(\ln K-\mu t-\kappa\sqrt{h})
$$
 then  
$$
k'-k\leq\frac{\sqrt{2}}{\kappa}\big(k'-k +\mu(T-t-\tau)\big)<x-d_1(\tau)\ 
\ for\ \ 0<\tau\leq T-t.
$$
Since $k'>k$ we see that $x-d_1(\tau)$ stays away from $0$ on the entire 
interval $(0,T-t]$. It follows from the above that the function
$$
\Phi_1(t,x)= \int_0^{T-t}\frac{\partial H}{\partial x}(x-d_1(\tau),
T-t-\tau)\phi_1(\tau)d\tau
$$
has bounded derivatives with respect to $t$ with bounds independent of $n$ in
 the region $\{(t,x): h<t<\sqrt{h},\frac{\sqrt{2}}{\kappa}(k'-\mu t) 
 <x\frac{\sqrt{2}}{\kappa}c_2(t)\}$. We conclude that the first integral in 
 the right hand side of (\ref{3.20}) is bounded from above by a constant
 independent of $n$ and it remains to estimate the second integral there.
 
Set 
$$
\Phi_2(t,x)=\int_0^{T-t}\frac{\partial H}{\partial x}(x-d_2(\tau),T-t-\tau)
\phi_2(\tau)d\tau.
$$
We see that if 
$$
x<\frac{\sqrt{2}}{\kappa}c_2(t)=\frac{\sqrt{2}}{\kappa}
(\ln K-\mu t-\sqrt{h}),
$$
then 
$$
x-d_2(\tau)=x-\frac{\sqrt{2}}{\kappa}(\ln K-\mu(T-\tau))<\frac{\sqrt{2}}
{\kappa}(\mu(T-t-\tau)-\sqrt{h}).
$$
In this case $x-d_2(\tau)$ can be zero when $\tau \in [0,T-t]$ but
 this can happen only for a $\tau$ that which is at least 
 $\mu^{-1}\sqrt h$ apart from $T-t$. Thus, the function $\Phi_2$ is smooth 
 with a bounded uniformly continuous derivative with respect to $t$ though
 this  bound may depend on $n$. Nevertheless, we still have the following
\begin{eqnarray*}
&\frac{\kappa}{\sqrt{2}}\Big{|}\int_{h}^{\sqrt{h}}\frac{dt}{\sqrt{t}}
\int_{\frac{\sqrt{2}}{\kappa}(k'-\mu t)}^{\frac{\sqrt{2}}{\kappa}c_2(t)}
\frac{d}{dt}\big(\int_0^{T-t}\frac{\partial H}{\partial x}(x-d_2(\tau),
T-t-\tau)\phi_2(\tau)d\tau\big)dx\Big{|}\\
&=\frac{\kappa}{\sqrt{2}}\Big{|}\int_{h}^{\sqrt{h}}\frac{dt}{\sqrt{t}}
\frac{d}{dt}\big(\int_0^{T-t}\int_{\frac{\sqrt{2}}{\kappa}(k'-\mu t)}^{
\frac{\sqrt{2}}{\kappa}c_2(t)}\frac{\partial H}{\partial x}(x-d_2(\tau),
T-t-\tau)dx\phi_2(\tau)d\tau \big)\Big{|}\\
&=\frac{\kappa}{\sqrt{2}}\Big{|}\int_{h}^{\sqrt{h}}\frac{dt}{\sqrt{t}}
\frac{d}{dt}\Big(\int_0^{T-t}\big(H(\frac{\sqrt{2}}{\kappa}c_2(t)-d_2(\tau),
T-t-\tau)\\
&-H(\frac{\sqrt{2}}{\kappa}(k'-\mu t)-d_2(\tau),
T-t-\tau)\big)\phi_2(\tau)d\tau \Big)\Big{|}\\
&\leq\frac{\kappa}{\sqrt{2}}\Big{|}\int_{h}^{\sqrt{h}}\frac{dt}{\sqrt{t}}
\frac{d}{dt}\Big(\int_0^{T-t}
\big(H(\frac{\sqrt{2}}{\kappa}c_2(t)-d_2(\tau),T-t-\tau)\phi_2(\tau)d\tau 
\Big)\Big{|}\\
&+\frac{\kappa}{\sqrt{2}}\Big{|}\int_{h}^{\sqrt{h}}
\frac{dt}{\sqrt{t}}\frac{d}{dt}\Big(\int_0^{T-t}
H(\frac{\sqrt{2}}{\kappa}(k'-\ln K+\mu(T-t-\tau),T-t-\tau)\phi_2(\tau)
d\tau \Big)\Big{|}.
\end{eqnarray*}
We see that in the second term in the right hand side 
$k'-\ln K+\mu(T-t-\tau)$ can take on the value $0$ for $\tau\in(0,T-t]$ but
then $\tau$ is at least $c=\mu^{-1}|k'-\ln K|$ apart from $T-t$ and now the
separation constant $c$ does not depend on $n$. Thus, we can bound the
second term there from above by a constant and it remains to estimate the
first term which we do  as follows
\begin{eqnarray*}
&\textbf{I} = \frac{\kappa}{\sqrt{2}}\Big{|}\int_{h}^{\sqrt{h}}\frac{dt}
{\sqrt{t}}\frac{d}{dt}\int_0^{T-t}
H(\frac{\sqrt{2}}{\kappa}c_2(t)-d_2(\tau),T-t-\tau)\phi_2(\tau)d\tau \Big{|}\\
&\leq C_2\int_{h}^{\sqrt{h}}\frac{dt}{\sqrt{t}}\int_0^{T-t}
\Big{|}\frac{1}{(T-t-\tau)^{3/2}}\exp\big(-\frac{(\frac{\sqrt{2}}{\kappa}
(\mu(T-t-\tau)-\kappa\sqrt{h})^2}{4(T-t-\tau)}\big)
\phi_2(\tau)\Big{|}d\tau\\
&+C_2\int_{h}^{\sqrt{h}}\frac{dt}{\sqrt{t}}\int_0^{T-t}
\Big{|}\frac{1}{(T-t-\tau)^{1/2}}
\exp\big(-\frac{(\frac{\sqrt{2}}{\kappa}(\mu(T-t-\tau)-\kappa\sqrt{h})^2}
{4(T-t-\tau)}\big)\phi_2(\tau)\Big{|}d\tau\\
&+C_2\int_{h}^{\sqrt{h}}\frac{dt}{\sqrt{t}}\int_0^{T-t}
\Big{|}\frac{h}{(T-t-\tau)^{5/2}}
\exp\big(-\frac{\ka^2h}{2(T-t-\tau)}\big)\exp\big(\frac{\mu\sqrt{h}}{\sqrt{2}}
-\frac{\mu^2}{2\kappa^2}(T-t-\tau) \big)\phi_2(\tau)\Big{|}d\tau
\end{eqnarray*}
where $C_2>0$ is a constant independent of $n$.
Analyzing the integral with respect to $\tau$ in the second term in the
right hand side above by considering different possible values of $T-t-\tau$
we conclude that this integral is bounded by a constant independent of $n$. 
Next we observe that $|\exp\big(\frac{\mu\sqrt{h}}{\sqrt{2}}-\frac{\mu^2}
{2\kappa^2}(T-t-\tau)\big)\phi_2(\tau)|$ is also bounded by a constant 
independent of $n$ too. Hence, we obtain
\begin{eqnarray*}
&\textbf{I}\leq C_3+ C_3\int_{h}^{\sqrt{h}}\frac{dt}{\sqrt{t}}\int_0^{T-t}
\frac{1}{(T-t-\tau)^{3/2}}\exp\big(-\frac{h}{2(T-t-\tau)}\big)
d\tau\\
&+C_3\int_{h}^{\sqrt{h}}\frac{dt}{\sqrt{t}}\int_0^{T-t}
\frac{h}{(T-t-\tau)^{5/2}}
\exp\big(-\frac{h}{2(T-t-\tau)}\big)d\tau
\end{eqnarray*}
for a constant $C_3>0$ independent of $n$.
Set $\rho=\sqrt{\frac{h}{2(T-t-\tau)}}$ and note that $\frac{d\rho}{d\tau}
=-\frac{\sqrt{h}}{2\sqrt 2(T-t-\tau)^{3/2}}$ and $\frac{d\rho^2}{d\tau}=
-\frac{\sqrt{h}}{4(T-t-\tau)^{2}}$. We proceed by changing variables 
arriving at
\begin{eqnarray*}
&\textbf{I}\leq C_4+C_4\int_{h}^{\sqrt{h}}\frac{dt}{\sqrt{t}}\int_{\frac{
\sqrt{h}}{\sqrt{2(T-t)}}}^{\infty}
\frac{1}{\sqrt{h}}e^{-\rho^2}
d\rho+C_4\int_{h}^{\sqrt{h}}\frac{dt}{\sqrt{t}}\int_{\frac{h}{2(T-t)}}^{\infty}
\frac{1}{\sqrt{h}}\rho e^{-\rho^2}
d\rho^2\\
&\leq C_4+C_5\frac{1}{\sqrt{h}}\int_{h}^{\sqrt{h}}\frac{dt}{\sqrt{t}}
\leq C_4+C_52(1+\frac{1}{h^{1/4}})\leq C_6n^{1/4}
\end{eqnarray*}
for some constants $C_4,C_5,C_6>0$ independent of $n$ and (\ref{lamstat3.5})
follows. Combining (\ref{lamstat3.5}) and (\ref{3.18}) we obtain from 
(\ref{3.17}) that
\begin{equation}\label{this}
\int_{h}^{\sqrt{h}}\frac{ds}{\sqrt{s}}\int_{c_2(s)}^{c_1(s)}dz|
\frac{\partial^2u}{\partial t^2}(s,z)|\leq Cn^{1/4}.
\end{equation}
Finally, Proposition \ref{prop 3.1} follows from (\ref{this}), (\ref{3.15}) 
and (\ref{3.16}).
\end{proof}

Next, we turn our attention to the domain $\textbf{B}$. First, we will
prove the following result.
\begin{lem}\label{lem 3.7}
There exists a constant $C>0$ such that for all $n\in\bbN$,
\begin{equation}\label{3.21}
\bfE [\sum_{j=1}^{h^{-1}(\tau\wedge\sig^{(n)})}\cD u((j-1)h,X_{(j-1)h}^{(n)})
\bbI_{((j-1)h,
X_{(j-1)h})|\in\textbf{B}}]\leq Cn^{-3/4}.
\end{equation}
\end{lem}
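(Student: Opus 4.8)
The plan is to split the sum at the index $k_{\beta}=\min\{k:\,kh\ge\beta\}$ (so that $k_{\beta}h=\beta^{(n)}\ge\beta$) and treat the ranges $j\ge k_{\beta}$ and $1\le j<k_{\beta}$ separately. For $j\ge k_{\beta}$, since the infimum defining $\sigma^{(n)}$ is taken only over $[0,\beta^{(n)})$, on the event $\{jh<\sigma^{(n)}\}$ with $j\ge k_{\beta}$ one must have $\sigma^{(n)}=T$, hence $\tau\wedge\sigma^{(n)}=\tau$ there; by Proposition \ref{firstProp}(ii) the game and American put prices coincide on $\{t\ge\beta\}$, so $u(t,x)=e^{-rt}P_{A}(t,x+\mu t)$ and the part of $\textbf{B}$ with $t\ge\beta$ is exactly the region handled in \cite{L}. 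Consequently the $j\ge k_{\beta}$ part of the sum is dominated by the quantity estimated in \eqref{3.10} of Proposition \ref{prop 3.1}, which gives the bound $C/n^{3/4}$. It then remains to bound the contribution of $1\le j<k_{\beta}$, i.e. the part of $\textbf{B}$ with $t<\beta$: a strip of width $O(\sqrt h)$ about the holder's free boundary $s(t)$.

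For that part I would follow the scheme of Proposition \ref{prop 3.3} (region $\textbf{C}$ with $t<\beta$) and of the boundary analysis in \cite{L}. The obstacle is that on $\textbf{B}$ the $\cD$-stencil straddles $s(t)$, so $\delta(u)$ does not vanish there and Proposition \ref{dis_operator_prop 2.2} cannot be applied to $u$ directly. To get around this I would write $u=g+\phi$ with $g(t,x)=e^{-rt}(K-e^{x+\mu t})$ the discounted (untruncated) payoff and $\phi=u-g\ge0$; the smooth fit of $P$ across $s(t)$ --- which is exactly what Corollary \ref{cor12} (continuity of $P_{t}$ and $P_{xx}$ up to $s(t)$) together with $P(t,y)=K-e^{y}$ for $y\le s(t)$ provides --- makes $\phi$ and $\phi_{x}$ vanish on $\{\mu t+x\le s(t)\}$, with $\phi$ vanishing there quadratically, while on the continuation side $\delta(\phi)=-\delta(g)=rKe^{-rt}$. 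Since $g$ is smooth, Proposition \ref{dis_operator_prop 2.2} gives $\cD g(t,x)=h\,\delta(g)(t,x)+O(h^{3/2})=-hrKe^{-rt}+O(h^{3/2})\le0$ for $n$ large, so --- only an upper bound being required --- this term may be dropped. For $\cD\phi$ on $\textbf{B}$ one uses that all stencil points lie within $O(\sqrt h)$ of $s(\cdot)$, a case split on which of them fall into the exercise region (where $\phi\equiv0$), the quadratic vanishing of $\phi$, and a Taylor expansion on the continuation side to reduce $\bfE\big(|\cD\phi(jh,X^{(n)}_{jh})|\bbI_{(jh,X^{(n)}_{jh})\in\textbf{B}}\bbI_{jh<\sigma^{(n)}}\big)$ --- after invoking the Berry--Esseen estimate \eqref{Berry-Esseen estimate} to replace the relevant probability by $C(j+1)^{-1/2}$ --- to an expression of the form $\tfrac{C}{n}\big(\int_{h}^{\beta}\tfrac{ds}{\sqrt s}\int\big(|u_{tx}(s,z)|+|u_{tt}(s,z)|\big)\,dz+O(1)\big)$, the remaining terms involving only the uniformly bounded functions $P,P_{x},P_{t},P_{xx}$. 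One then splits the $s$-integral at $\sqrt h$, as in Proposition \ref{prop 3.3}: on $[\sqrt h,\beta]$ the factor $s^{-1/2}\le n^{1/4}$ and the near-boundary $L^{1}$ bound for $u_{tt}$ from Corollaries \ref{cor2.8} and \ref{cor2.6} give $\le Cn^{1/4}$, while on $[h,\sqrt h]$ the $L^{2}$ bounds for $u_{tx},u_{tt}$ near $s(t)$ (Proposition \ref{tem2.7}, Corollary \ref{cor2.8}) combined with the Cauchy--Schwarz inequality, as in the derivation of \eqref{3.18} and the passage following \eqref{3.15}, again produce $\le Cn^{1/4}$; multiplying by the prefactor $C/n$ yields the required $Cn^{-3/4}$.

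The step I expect to be the main obstacle is this last, delicate near-boundary estimate for $t<\beta$. Because smooth fit at $s(t)$ is only first order (the second derivative $P_{xx}$ jumps across the boundary), $\cD u$ is of \emph{exact} order $h$ on the width-$\sqrt h$ strip $\textbf{B}$, and the crude bound $Ch\cdot\bfE\#\{1\le j<k_{\beta}:(jh,X^{(n)}_{jh})\in\textbf{B}\}$, with an expected number of visits of order $\sqrt n$ coming from Berry--Esseen, only yields $n^{-1/2}$; the extra factor $n^{-1/4}$ has to be squeezed out of the quadratic vanishing of $u-g$ at $s(t)$ together with the fine $L^{2}$-integrability of $u_{tx}$ and $u_{tt}$ near $s(t)$ --- precisely the machinery already needed for the region $\textbf{C}$ with $t<\beta$. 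A secondary but essential point is to verify that every regularity fact borrowed from \cite{L}'s boundary analysis has a genuine game-put counterpart among the results of Section \ref{sec3}, which is why those results were proved there in parabolic free-boundary form rather than merely quoted.
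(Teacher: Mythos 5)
Your reduction of the part of the sum with $j\geq k_{\beta}$ to the American case via Proposition \ref{prop 3.1} and (\ref{3.10}) is exactly what the paper does. The gap is in the part with $t<\beta^{(n)}$, which you yourself flag as the main obstacle and do not resolve. The paper's mechanism there is different from, and incompatible with, your decomposition $u=g+\phi$. The paper applies the representation (\ref{difoper u}) to $u$ \emph{itself} (justified on the strip because $u\in H^2$ there by Corollary \ref{cor2.8}, approximating by $C^2$ functions), and then uses that $\delta(u)=u_t+\frac{\ka^2}{2}u_{xx}\leq 0$ throughout the relevant domain (it equals $0$ in the continuation region and $-rKe^{-rt}$ in the exercise region), so the entire $\delta(u)$ term is simply dropped for the upper bound. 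What survives is only $\frac{1}{\ka}\int_0^{\sqrt h}dy\int_{-\ka y}^{\ka y}z\,u_{tx}\,dz\,dy\leq\frac12\int_t^{t+h}ds\int_{|z-x|\leq\ka\sqrt h}|u_{tx}(s,z)|\,dz$. The extra factor $n^{-1/4}$ beyond the crude $n^{-1/2}$ then comes from the Cauchy--Schwarz inequality \emph{in the space variable alone}, over the strip of width $s(t+h)-s(t)+O(\sqrt h)=O(\sqrt h)$ --- the width control being precisely the H\"older-$\frac12$ estimate $|s(t_1)-s(t_2)|\leq C\sqrt{|t_1-t_2|}$ from Proposition \ref{prop2.4} combined with the Lipschitz continuity of $P$ --- together with the uniform-in-$t$ bound $\int_a^b|u_{tx}(t,z)|^2dz\leq C_1$ of Corollary \ref{cor2.8}; this yields $(\mbox{width})^{1/2}\leq Ch^{1/4}$ per slice, which multiplied by the Berry--Esseen factor $\sqrt h\int_{jh}^{(j+1)h}\tau^{-1/2}d\tau$ and summed over $j$ gives $Cn^{-3/4}$. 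You never invoke Proposition \ref{prop2.4}, and the split of the $s$-integral at $\sqrt h$ as in (\ref{3.18}) is the wrong tool here: it is tailored to the region $\textbf{C}$, where the inner integral runs over an interval of width $O(1)$ and the integrand is $|u_{tt}|$ carrying an extra $\sqrt h$ prefactor from the computation leading to (\ref{3.14}).

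Moreover, your decomposition $u=g+\phi$ actively destroys the sign information that makes the paper's argument work. Since $\cD u=\cD g+\cD\phi$ and you drop $\cD g\leq 0$, you must bound $\cD\phi$ from above; but $\delta(\phi)=\delta(u)-\delta(g)=+rKe^{-rt}$ on the continuation side of the stencil, a positive quantity of order one, whose contribution to $\cD\phi$ is of order $h$ per step. With an expected number of visits to $\textbf{B}$ of order $\sqrt n$ (from (\ref{Berry-Esseen estimate})), this term alone already contributes $O(n^{-1/2})$ and cannot be improved by the quadratic vanishing of $\phi$ or by Taylor expansion --- the same obstruction reappears whichever way you expand. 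Relatedly, the prefactor $C/n$ in your claimed reduction to $\frac{C}{n}\int_h^{\beta}\frac{ds}{\sqrt s}\int(|u_{tx}|+|u_{tt}|)\,dz$ is not justified for the $u_{tx}$ term: the representation (\ref{difoper u}) together with (\ref{Berry-Esseen estimate}) naturally produces the prefactor $C\sqrt h=Cn^{-1/2}$ for that term, and with prefactor $n^{-1/2}$ a bound $Cn^{1/4}$ on the double integral only gives $Cn^{-1/4}$. The fix is to abandon the decomposition, keep $\cD u$ intact, drop $\delta(u)\leq 0$, and run the narrow-strip Cauchy--Schwarz argument described above.
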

\begin{proof}
Let $\textbf{B}_{t<\beta^{(n)}}$ and $\textbf{B}_{t\geq\beta^{(n)}}$ be the 
set of all points  $(t,x)\in\textbf{B}$ such that $t<\beta^{(n)}$ and $t\geq 
\beta^{(n)}$, respectively.
We split (\ref{3.21}) according to these two regions, namely,
\begin{eqnarray*}
&\bfE [\sum_{j=1}^{h^{-1}(\tau\wedge\sig^{(n)})}\cD u((j-1)h,X_{(j-1)h}^{(n)})
\bbI_{((j-1)h,X_{(j-1)h})|\in\textbf{B}}]\\
&=\bfE [\sum_{j=1}^{h^{-1}(\tau\wedge\sig^{(n)})}\cD u((j-1)h,X_{(j-1)h}^{(n)})
\bbI_{((j-1)h,X_{(j-1)h})|\in\textbf{B}_{t<\beta^{(n)}}}]\\
&+\bfE [\sum_{j=1}^{h^{-1}(\tau\wedge\sig^{(n)})}\cD u((j-1)h,X_{(j-1)h}^{(n)})
\bbI_{((j-1)h,X_{(j-1)h})|\in\textbf{B}_{t\geq\beta^{(n)}}}].
\end{eqnarray*}
By Proposition \ref{prop 3.1} we have that for a constant $C>0$ independent
of $n$,
$$
\bfE [\sum_{j=1}^{h^{-1}(\tau\wedge\sig^{(n)})}\cD u((j-1)h,X_{(j-1)h}^{(n)})
\bbI_{((j-1)h,X_{(j-1)h})|\in\textbf{B}_{t\geq\beta^{(n)}}}]\leq Cn^{-3/4}.
$$
Thus, it remains to estimate only the first term in the right hand side. Let 
$E=\{(t,x):0<t<\beta^{(n)},a-\mu t<x<b-\mu t\}$ where $a< s(0)$ and 
$s(\beta^{(n)}+h)+|\mu|h+2\kappa\sqrt{h}<b<\ln K$. For $n$ large enough we
 can find such a $b$ because $s(t)$ is continuous and $s(\beta^{(n)})<\ln K$.
We know from Corollary \ref{cor2.8} that $u(t,x)\in H^2[E]$. Since 
$C^2[E]$ is dense in this space we can approximate $u(t,x)$ by $C^2$ 
functions to get equality (\ref{difoper u}) of Proposition 
\ref{dis_operator_prop 2.2} for $u(t,x)$, as well. Since 
$u_t(t,x)+\frac{\kappa^2}{2}u_{xx}(t,x)\leq 0$ in the domain $E$ we obtain
\begin{eqnarray*}
&\cD u(t,x)\leq\frac{1}{\kappa}\int_{0}^{\sqrt{h}}dy\int_{-\kappa y}^{
\kappa y}dz\big (z\frac{\partial^2 u }{\partial t\partial x}(t+y^2,x+z) 
\big)\\
&\leq \int_{0}^{\sqrt{h}}ydy\int_{-\kappa \sqrt{y}}^{\kappa \sqrt{y}}dz 
\big |\frac{\partial^2 u }{\partial t\partial x}(t+y^2,x+z) \big|\\
&=\frac{1}{2}\int_{0}^{{h}}ds\int_{-\kappa \sqrt{y}}^{\kappa 
\sqrt{y}}dz \big |\frac{\partial^2 u }{\partial t\partial x}(t+s,x+z) \big|.
\end{eqnarray*}
It follows that
$$
\cD u(t,y)\bbI_{(t,y)\in \bar B}\leq \frac{1}{2}\int_{t}^{t+h}ds\int_{s(t)-
\lambda\sqrt{h}-\mu t}^{s(t+h)+\lambda\sqrt{h}-\mu t}\bbI_{|z-y|\leq 
\kappa\sqrt{h}}\big |\frac{\partial^2 u }{\partial t\partial x}(s,z) \big|dz
$$
where $\lambda =|\mu|+\kappa$. Hence,
\begin{eqnarray*}
&\bfE [\sum_{j=1}^{h^{-1}(\tau\wedge\sig^{(n)})}\cD u((j-1)h,X_{(j-1)h}^{(n)})
\bbI_{\{((j-1)h,X_{(j-1)h}))
\in\textbf{B}_{t<\beta^{(n)}}\}}]\\
&\leq\frac{1}{2}\big{ (}\sum_{j=1}^{k_{\beta}} \int_{jh}^{(j+1)h}d\tau
\int_{s(jh)-\lambda\sqrt{h}-\mu jh}^{s(jh+h)+\lambda\sqrt{h}-\mu jh}
\textbf{P}\big(|X^{(n)}_{jh}-z|\leq\kappa\sqrt{h}\big) \big |
\frac{\partial^2 u }{\partial t\partial x}(s,z) \big|dz  \big{ )}+\frac{C}{n}.
\end{eqnarray*}
Here $k_{\beta}=\lceil \frac{\beta}{h} \rceil$,
and the term $\frac{C}{n}$ is the contribution of $\cD u(0,X^{n}_0)=\cD
 u(0,x)\leq \frac{C}{n}$ which holds true from by the definition of the
operator $\cD$ and boundedness of $u_t$ and $u_{xx}$ for small $t$.
From Corollary \ref{cor2.8} we see that there exists a constant $C_1>0$
such that
\begin{equation}
\int_{a}^b|\frac{\partial^2 u }{\partial t\partial x}(t,z)\big|^2dz\leq C_1\,\,
\mbox{when}\,\, 0\leq t\leq \beta^{(n)}.
\end{equation}
This together with (\ref{Berry-Esseen estimate}), the Cauchy-Schwarz inequality 
and the inequality $\frac{1}{\sqrt{\tau}}\geq  \frac{1}{\sqrt{2jh}}$, which
is satisfied when $j\geq 1$ and $jh\leq\tau \leq 2jh $, yields that 
\begin{eqnarray*}
&\bfE [\sum_{j=1}^{h^{-1}(\tau\wedge\beta^{(n)})}\cD u((j-1)h,X_{(j-1)h}^{(n)})
\bbI_{\{((j-1)h,X_{(j-1)h}))\in\textbf{B}_{t<\beta^{(n)}}\}}]\\
&\leq \sqrt{h}C_2\sum_{j=1}^{k_{\beta}}\int_{jh}^{(j+1)h}\frac{d\tau}
{\sqrt{\tau}}\big((s(j+1)h)-s(jh)+2\lambda\sqrt{h}\big)^{1/2}+\frac{C_2}{n} 
\end{eqnarray*}

From Proposition \ref{prop2.4} and Lipschitz continuity of the function 
$P(t,x)$ in $t\leq \beta^{(n)}$ uniformly in $x\leq\ln K$ (see Theorem 8.1
in \cite{Ku}) we obtain that for some constant $C_3>0$,
$$
|s(t_1)-s(t_2)|\leq \sqrt{|t_1-t_2|}C_3\,\,\mbox{whenever}\,\, 0\leq t_1,t_2
\leq\beta^{(n)}.
$$
Hence,
$$
\bfE [\sum_{j=0}^{h^{-1}(\tau\wedge\beta^{(n)})}\cD u((j-1)h,X_{(j-1)h}^{(n)})
\bbI_{\{((j-1)h,X_{(j-1)h}))
\in\textbf{B}_{t<\beta^{(n)}}\}}]\leq \frac{C_4}{n^{3/4}}
$$
for some constant $C_4>0$ independent of $n$.
\end{proof}
By combining the results of Lemma \ref{lem 3.2} , Proposition \ref{prop 3.3}
 and Lemma \ref{lem 3.7} together with (\ref{3.8}) we obtain that the upper
 bound $P_1^{(n)}(x)-P(0,x)<\frac{C}{n^{3/4}}$ for some constant $C>0$ 
 independent of $n$ and of $x\leq \ln K$.

Next, we will obtain a lower bound for the approximation error 
$P_1^{(n)}(x)-P(0,x)$ when $x \leq \ln K$. Set
\begin{equation}\label{3.23}
\tau^{(n)}=inf\{t:\mu [t/h]h+X^{(n)}_{t}<s([t/h]h+h)+|\mu|h+\kappa\sqrt{h}\}.
\end{equation}
By Proposition \ref{prop 3.3},
\begin{eqnarray}\label{3.24}
&\bfE [u(\tau^{(n)}\wedge\sigma^{(n)},X^{(n)}_{\tau^{(n)}\wedge\sigma^{(n)}})]
=u(0,x)+\bfE [\sum_{j=1}^{\tau^{(n)}\wedge\sigma^{(n)}/h}\cD 
u((j-1)h,X_{(j-1)h}^{(n)})]\\
&=u(0,x)+\bfE [\sum_{j=1}^{h^{-1}(\tau\wedge\sig^{(n)})}|\cD u((j-1)h,X_{(j-1)h}^{(n)})
\bbI_{\{((j-1)h,X_{(j-1)h\}})|\in\textbf{C}}]\geq P(0,x)-Cn^{-3/4}.
\nonumber\end{eqnarray}

Set $\alpha=\alpha_n=T-\frac{1}{n^{2/3}}$ and let $\tau_A^{(n)}$ be defined
by (\ref{3.23}) with $s$ there replaced by the free boundary $s_A$ for the
 American put option (see Section 2.2 in \cite{L}). 
 Define also $\tau_{\alpha}^{(n)}=\tau^{(n)}\bbI_{\{\tau^{(n)}+h<\alpha\}}
 +T\bbI_{\{\tau^{(n)}+h\geq \alpha\}}$ and $\tau_{A,\alpha}^{(n)}=
 \tau_A^{(n)}\bbI_{\{\tau_A^{(n)}+h<\alpha\}}+T\bbI_{\{\tau_A^{(n)}+h\geq 
 \alpha\}}$. We will rely on the following estimate from Section 4.5 
 in \cite{L}.

\begin{lem}\label{lem 3.8}
 There exists a constant $C>0$ independent of $n\bbN$ such that
\begin{equation}\label{3.25}
|\bfE [u_A(\tau_A^{(n)},X^{(n)}_{\tau_A^{(n)}})-e^{r\tau^{(n)}_{A,\alpha}}
\psi(\mu\tau^{(n)}_{A,\alpha}+X_{\tau^{(n)}_{A,\alpha}})]|\leq 
\frac {C}{n^{2/3}}
\end{equation}
where $u_A(t,x)=e^{-tr}P_A(t,x+\mu t)$ with $P_A$ given by (\ref{2.-3}).
\end{lem}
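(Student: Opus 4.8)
The quantities $\tau_A^{(n)}$, $\tau_{A,\alpha}^{(n)}$ and $u_A$ involve only the American put, so (\ref{3.25}) is exactly the estimate obtained in \cite{L}, Section~4.5; the plan is to follow that argument and we only describe its structure. Put $A_1=\{\tau_A^{(n)}+h<\alpha\}$ and $A_2=\{\tau_A^{(n)}+h\geq\alpha\}$. Both events lie in $\cG_{\tau_A^{(n)}}$, and by construction $\tau_{A,\alpha}^{(n)}=\tau_A^{(n)}$ on $A_1$ while $\tau_{A,\alpha}^{(n)}=T$ on $A_2$, so the expression under the expectation in (\ref{3.25}) is split accordingly and the two parts are treated separately. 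Here $s_A$ denotes the (increasing) free boundary of the American put in logarithmic coordinates, so that $u_A(\tau_A^{(n)},X^{(n)}_{\tau_A^{(n)}})=e^{-r\tau_A^{(n)}}P_A(\tau_A^{(n)},Y)$ with $Y=\mu\tau_A^{(n)}+X^{(n)}_{\tau_A^{(n)}}$, and $P_A(t,y)=\psi(y)=K-e^{y}$ whenever $y\leq s_A(t)$.

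On $A_1$ we compare $u_A(\tau_A^{(n)},X^{(n)}_{\tau_A^{(n)}})$ with $e^{-r\tau_A^{(n)}}\psi(Y)$ pointwise. From the definition of $\tau_A^{(n)}$ (the analogue of (\ref{3.23}) with $s_A$ in place of $s$), at the stopping instant the point $Y$ is either $\leq s_A(\tau_A^{(n)})$ — in which case $P_A(\tau_A^{(n)},Y)=\psi(Y)$ and the two terms coincide — or lies above $s_A(\tau_A^{(n)})$ at distance $O(\sqrt h)$, since on $A_1$ the boundary $s_A$ rises by only $O(n^{-2/3})$ over one time step (its derivative being $O((T-t)^{-1/2})$ up to a logarithmic factor away from maturity, and $T-\tau_A^{(n)}>n^{-2/3}$), which is negligible compared with $\ka\sqrt h$; moreover $Y<\ln K$ for $n$ large, so $\psi$ is smooth between $s_A(\tau_A^{(n)})$ and $Y$. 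By the smooth fit of the American put ($P_A$ and $P_{A,x}$ coincide with $\psi$ and $\psi'$ on $x=s_A(t)$), Taylor's formula with integral remainder gives $0\leq P_A(\tau_A^{(n)},Y)-\psi(Y)\leq C\,(Y-s_A(\tau_A^{(n)}))^2\sup|P_{A,xx}-\psi''|$, the supremum over the strip $\{t\leq\alpha\}$ adjacent to the boundary; using the bound $|P_{A,xx}|=O((T-\alpha)^{-1/2})=O(n^{1/3})$ there — which follows from the parabolic equation for $P_A$ and the near-maturity growth of $P_{A,t}$ recorded in \cite{L} — the $A_1$-contribution to (\ref{3.25}) is of order $h\cdot n^{1/3}=O(n^{-2/3})$.

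On $A_2$ one cannot simply bound $\bfP(A_2)$, which need not be small. Instead, applying Proposition \ref{martingel prop 2.1} to $u_A$ on the random interval $[\tau_A^{(n)},T]$ and using $u_A(T,\cdot)=e^{-rT}\psi(\mu T+\cdot)$, and noting $A_2\in\cG_{\tau_A^{(n)}}$ so that the martingale part averages to zero, we get
\[
\bfE\big[\big(u_A(\tau_A^{(n)},X^{(n)}_{\tau_A^{(n)}})-e^{-rT}\psi(\mu T+X^{(n)}_T)\big)\bbI_{A_2}\big]=-\bfE\Big[\bbI_{A_2}\sum_{\tau_A^{(n)}<jh\leq T}\cD u_A((j-1)h,X^{(n)}_{(j-1)h})\Big].
\]
Since $\tau_A^{(n)}\geq\alpha-h$ on $A_2$, only the last $O(n^{1/3})$ terms of the sum appear, all with time argument within $n^{-2/3}+h$ of $T$. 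In the part of the strip lying strictly below $s_A$ one has $u_A(t,x)=e^{-rt}\psi(\mu t+x)$ and $\cD u_A\leq0$ there (as in Lemma \ref{lem 3.2}); in the $O(n^{-1/3}\sqrt{\ln n})$-narrow band between $s_A(t)$ and $\ln K$, and in the region $x+\mu t>\ln K$, one bounds $|\cD u_A|$ through Proposition \ref{dis_operator_prop 2.2} and the near-maturity estimates on the second-order derivatives of $P_A$ from \cite{L}, and combines this with the Berry--Esseen estimate (\ref{Berry-Esseen estimate}) (which for indices $j$ of order $n$ contributes a factor $O(n^{-1/2})$) and with $T-\tau_A^{(n)}\leq n^{-2/3}+h$. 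The cut-off $\alpha=T-n^{-2/3}$ is chosen precisely to balance the narrowness of this band against the blow-up of the second-order derivatives of $u_A$ near $T$, which keeps the $A_2$-contribution of order $n^{-2/3}$ as well. Adding the two contributions yields (\ref{3.25}).

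The delicate point is the behaviour near maturity, entering both parts: the American free boundary $s_A$ is Holder of exponent $\frac12$ only up to a $\sqrt{|\ln(T-t)|}$ factor, and the second-order derivatives of $P_A$ are not integrable up to $T$. It is exactly the truncation at level $n^{-2/3}$ — which confines the continuation-region terms to a time window in which $P_{A,xx}$ and $\partial^2 u_A/\partial t^2$ are still under control, while being short enough that the error it discards is itself $O(n^{-2/3})$ — together with the martingale decomposition displayed above, that produces the claimed bound. The full computation is carried out in \cite{L}, Section~4.5.
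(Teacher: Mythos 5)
The paper gives no proof of Lemma \ref{lem 3.8}: it is quoted as a known estimate from Section 4.5 of \cite{L}, exactly as you ultimately do. Your sketch of the underlying argument --- the split along $\{\tau_A^{(n)}+h<\alpha\}$, smooth fit plus the $O((T-t)^{-1/2})$ bound on $P_{A,xx}$ on the first event, and the discrete It\^o/Berry--Esseen analysis of the last $O(n^{1/3})$ steps on the second, with $\alpha=T-n^{-2/3}$ balancing the two --- is consistent with Lamberton's proof, so beyond the citation there is nothing to compare.
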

\begin{rem}\label{rem3.9}
Note that $s_A(t)=s(t)$ for $\beta\leq t<T$, and so $\tau_A^{(n)}\vee\beta
=\tau^{(n)}\vee\beta$ and $\tau^{(n)}_{A,\alpha}\vee\beta=
\tau^{(n)}_{\alpha}\vee\beta$.
\end{rem}

From now on we assume that $n$ is large enough so that $\beta^{(n)}<\alpha$.
From the definition of $P^{(n)}_1(x)$ we have
\begin{equation}\label{3.26}
P^{(n)}_1(x)\geq \bfE [e^{-r\tau_\alpha^{(n)}\wedge\sigma^{(n)}}\big(\psi(
\mu\tau_\alpha^{(n)}+X^{(n)}_{\tau_\alpha^{(n)}} )\bbI_{\{\tau_\alpha^{(n)}
\leq\sigma^{(n)}\}}+\delta\bbI_{\{\sigma^{(n)}<\tau_\alpha^{(n)}\}} \big)].
\end{equation}
Hence, if we prove that for some constant $C>0$ independent of $n$,
\begin{equation}\label{3.27}
\bfJ=|\bfE [u(\tau^{(n)}\wedge\sigma^{(n)},X^{(n)}_{\tau^{(n)}\wedge
\sigma^{(n)}})]-\bfE [e^{-r\tau_\alpha^{(n)}\wedge\sigma^{(n)}}\big(\psi(\mu
\tau_\alpha^{(n)}+X^{(n)}_{\tau_\alpha^{(n)}} )\bbI_{\{\tau_\alpha^{(n)}\leq\sigma^{(n)}\}}
+\delta\bbI_{\sigma^{(n)}<\tau_\alpha^{(n)}} \big)]|\leq \frac{C}{\sqrt{n}}
\end{equation}
then by (\ref{3.26}) and (\ref{3.24}) we could conclude that
\begin{equation}\label{3.28}
-\frac{C}{\sqrt{n}}\leq P^{(n)}_1(x)-P(0,x).
\end{equation}
We split the left hand side of (\ref{3.27}) into three parts
\begin{eqnarray}\label{3.29}
&\bfJ=\bfE [\big\{u(\tau^{(n)}\wedge \beta^{(n)} ,X^{(n)}_{\tau^{(n)}\wedge
\beta^{(n)}})-e^{-r\tau^{(n)}\wedge\beta^{(n)}}\big(\psi(\mu\tau^{(n)}\wedge
\beta^{(n)}\\
&+X^{(n)}_{\tau^{(n)}\wedge\beta^{(n)}} ) \big)\big\}\bbI_{\{\tau_\alpha^{(n)}\leq
 \sigma^{(n)}\wedge\beta^{(n)}\}}]
+ \bfE [\big\{u(\sigma^{(n)},X^{(n)}_{\sigma^{(n)}})-e^{-r\sigma^{(n)}}\delta
\big\}\bbI_{\{\sigma^{(n)}<\tau_\alpha^{(n)}\}}]\nonumber\\
&+\bfE [\big\{u(\tau^{(n)} ,X^{(n)}_{\tau^{(n)}})-e^{-r\tau_\alpha^{(n)}}
\big(\psi(\mu\tau_\alpha^{(n)}+X^{(n)}_{\tau_\alpha^{(n)}} ) \big)\big
\}\bbI_{\{\beta^{(n)}<\tau_\alpha^{(n)}\leq \sigma^{(n)}\}}].\nonumber
\end{eqnarray}
This equality is true since $\tau^{(n)}=\tau_\alpha^{(n)}=\tau^{(n)}\wedge\beta$ 
on the set $\tau_\alpha^{(n)}\leq \sigma^{(n)}\wedge \beta^{(n)}<\alpha$. We begin 
with the last term. First note that on the set $\beta^{(n)}
\leq \tau_\alpha^{(n)}\leq \sigma^{(n)}$ we have, in particular, 
$\beta^{(n)}\leq \sigma^{(n)}$ and so $\sigma^{(n)}=T$ by Remark 
\ref{rem3.9}. In the case  $\tau_\alpha^{(n)}>\beta^{(n)}$ we have 
$\tau_\alpha^{(n)}=\tau_{A,\alpha}^{(n)}$ and $\tau^{(n)}=\tau_A^{(n)}$ and 
so from Lemma \ref{lem 3.8} we derive that
\begin{eqnarray*}
&|\bfE [\big\{u(\tau^{(n)} ,X^{(n)}_{\tau^{(n)}})-e^{-r\tau_\alpha^{(n)}}
\big(\psi(
\mu\tau_\alpha^{(n)}+X^{(n)}_{\tau_\alpha^{(n)}} ) \big)\big\}
\bbI_{\{\beta^{(n)}<\tau_{\alpha}^{(n)}\leq \sigma^{(n)}\}}]|\\
&\leq |\bfE [\big\{u(\tau_A^{(n)} ,X^{(n)}_{
{\tau}^{(n)}_A})-e^{-r{\tau}_{A,\alpha}^{(n)}}\big(\psi(\mu
{\tau}_{A,\alpha}^{(n)}+X^{(n)}_{
{\tau}_{A,\alpha}^{(n)}} ) \big)\big\}]|\leq \frac{C}{n^{2/3}}.
\end{eqnarray*}

Next, we deal with the first term in the right hand side of (\ref{3.29}) 
where  $\tau_\alpha^{(n)}=\tau^{(n)}\leq\sigma^{(n)}\wedge\beta^{(n)}$.
This means that before time $\beta^{(n)}$ the process $X^{(n)}$ is
stopped near the boundary $s(t)$ and 
$$
\mu\tau^{(n)}+X^{(n)}_{\tau^{(n)}}<s(\tau^{(n)}+h)+|\mu|h+
\sigma\sqrt{h}.
$$
By the definition, $u(\tau^{(n)},X^{(n)}_{\tau^{(n)}})=e^{-r\tau^{(n)}}
P(\tau^{(n)},X^{(n)}+\mu\tau^{(n)})$.
Thus, we have
\begin{eqnarray*}
&\bfE [\big\{u(\tau^{(n)}\wedge \beta^{(n)} ,X^{(n)}_{\tau^{(n)}})-e^{-
r\tau^{(n)}\wedge\beta^{(n)}}\big(\psi(\mu\tau^{(n)}\wedge\beta^{(n)}+
X^{(n)}_{\tau^{(n)}\wedge\beta^{(n)}} ) \big)\big\}\bbI_{\{\tau_\alpha^{(n)}
\leq \sigma^{(n)}\wedge\beta^{(n)}\}}]\\
&=\bfE [\big\{e^{-r\tau^{(n)}}\big(P(\tau^{(n)},X^{(n)}+\mu\tau^{(n)})-
\psi(\mu\tau^{(n)}+X^{(n)}_{\tau^{(n)}} ) \big)\big\}\bbI_{\{\tau_\alpha^{(n)}
\leq \sigma^{(n)}\wedge\beta^{(n)}\}}].
\end{eqnarray*}
If $\mu\tau^{(n)}+X^{(n)}_{\tau^{(n)}}\leq s(\tau^{(n)})$ then $P(\tau^{(n)},
X^{(n)}+\mu\tau^{(n)})-\psi(\mu\tau^{(n)}+X^{(n)}_{\tau^{(n)}} )=0$ so we can
 assume that
\begin{equation}\label{3.30_err_es}
s(\tau^{(n)})<\mu\tau^{(n)}+X^{(n)}_{\tau^{(n)}}<s(\tau^{(n)}+h)+|\mu|h+
\sigma\sqrt{h}.
\end{equation}
To continue we need the following lemma.
\begin{lem}\label{lem 3.10}
There is a constant $C>0$ independent of $n$ such that for every point 
$(t,x)$  satisfying $s(t)\leq \mu t+x\leq s(t+h)+|\mu|h+\sigma\sqrt{h}$ 
and $0\leq t\leq \beta^{(n)}$,
$$
|P(t,\mu t+x)-\psi(\mu t+x)|\leq \frac{C}{n}.
$$
\end{lem}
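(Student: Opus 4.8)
The plan is to set $g(t,y):=P(t,y)-\psi(y)$ with $y=\mu t+x$ and to bound $g$ in the given band just above the holder's free boundary $s(t)$ by a second order Taylor expansion in $y$ around $s(t)$. First note that for $0\le t\le\beta^{(n)}$ and $n$ large enough the whole band $s(t)\le y\le s(t+h)+|\mu|h+\sigma\sqrt h$ lies strictly below $\ln K$, since $\beta^{(n)}+h\to\beta<T$ and $s(\beta)<\ln K$; hence on the band $\psi(y)=K-e^y$ is smooth and $(t,y)$ lies in the closure of the holder's continuation region. On the free boundary $P(t,s(t))=K-e^{s(t)}=\psi(s(t))$, so $g(t,s(t))=0$, and by the smooth fit at the holder's exercise boundary --- which, in contrast to the writer's boundary, does hold just as in the American case (see \cite{L} and \cite{Ku}) --- we also have $P_x(t,s(t))=\psi'(s(t))=-e^{s(t)}$, i.e.\ $g_x(t,s(t))=0$.

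Next I would bound $g_{xx}$ uniformly in $n$ on $R:=\{0\le t\le\beta^{(n)},\,s(t)\le y<\ln K\}$. In the interior of $R$ the price $P$ solves $(\partial_t+\textbf{A})P=0$ (Proposition \ref{firstProp}(iii)), so $P_{xx}=\tfrac{2}{\kappa^2}\big(rP-\mu P_x-P_t\big)$, where $|P|\le K+\delta$, $|P_x|$ is bounded by Theorem 8.1 in \cite{Ku}, and $|P_t|$ is bounded by a constant independent of $n$: for $t<\beta$ this is Proposition \ref{prop4.4}, whose bound for $-P_t$ does not deteriorate as one approaches $\{y=s(t)\}$ and is dominated by $\sup_{[0,T]\times\bbR^+}|\partial_tF_A|<\infty$, and for $\beta\le t\le\beta^{(n)}$ it holds because there $P=P_A$ and $\partial_tF_A$ is bounded (\cite{L}). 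Hence $|g_{xx}|=|P_{xx}+e^y|\le M$ on $R$ with $M$ independent of $n$, while by Corollary \ref{cor12} $g(t,\cdot)$ is $C^2$ up to and including the boundary point $y=s(t)$. Taylor's formula with integral remainder together with $g(t,s(t))=g_x(t,s(t))=0$ and $P\ge\psi$ then give, for $s(t)\le y<\ln K$,
\[
0\le g(t,y)=\int_{s(t)}^{y}(y-w)\,g_{xx}(t,w)\,dw\le\frac{M}{2}\,\big(y-s(t)\big)^2 .
\]
Finally, $y-s(t)\le\big(s(t+h)-s(t)\big)+|\mu|h+\sigma\sqrt h$ by hypothesis, and $|s(t+h)-s(t)|\le C_3\sqrt h$ for $t\le\beta^{(n)}$ by Proposition \ref{prop2.4} together with the Lipschitz continuity of $P$ in $t$ uniformly in $x\le\ln K$ (Theorem 8.1 in \cite{Ku}), exactly as at the end of the proof of Lemma \ref{lem 3.7}; so $y-s(t)\le C_4\sqrt h$ with $C_4$ independent of $n$, whence $g(t,\mu t+x)\le\frac{M}{2}C_4^2h=C/n$. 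For the finitely many $n$ for which the band may protrude above $\ln K$ (so that $\psi$ is no longer smooth there), the bound is trivial since $0\le g\le K+\delta$, and one only has to enlarge $C$.

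The main obstacle is obtaining the order $O(1/n)$ rather than $O(1/\sqrt n)$: that is precisely what forces the use of the smooth fit $g_x(t,s(t))=0$, which removes the linear term of the expansion, together with the $n$-independent $L^\infty$ bound on $P_{xx}$ up to the free boundary; without smooth fit one would only get $|g(t,y)|\le C\,(y-s(t))\le C\sqrt h$, too weak for the subsequent lower bound on $P^{(n)}_1(x)-P(0,x)$. Inside this argument the one delicate point is the uniform-in-$n$ lower bound for $P_t$ near $\{y=s(t)\}$ needed to control $P_{xx}$ through the PDE, which I would extract from Proposition \ref{prop4.4}.
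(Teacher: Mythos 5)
Your proof is correct and follows essentially the same route as the paper's: smooth fit at the holder's exercise boundary removes the linear term, and a second-order Taylor expansion over a distance $O(\sqrt h)$ from the boundary gives the $O(1/n)$ bound. The only (harmless) differences are that the paper first shifts time to $t+h$ via Lipschitz continuity of $P$ in $t$ and expands around $s(t+h)$, whereas you expand at time $t$ around $s(t)$ and absorb $s(t+h)-s(t)=O(\sqrt h)$ via Proposition \ref{prop2.4}; moreover your explicit $L^\infty$ control of $P_{xx}$ up to the boundary through the PDE makes the remainder estimate uniform in $(t,x)$, which the paper only records as a term $\al(h)$ with $\al(h)/h\to 0$.
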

\begin{proof}
The function $P(t,x)$ is Lipschitz continuous when $t\leq \beta$ and
$0\leq x\leq\mu\beta+\ln K$ (see \cite{Ku}), and so
$$
|P(t,\mu t+x)-P(t+h,\mu t+x)|\leq \frac{C}{n}
$$
for some $C>0$ independent of $n$. If $\mu t+x\leq s(t+h)$ then 
$P(t+h,\mu t+x)=\psi(\mu t+x)$ and we are done.
 Now assume that $s(t+h)<\mu t+x<s(t+h)+ \frac{\lambda}{\sqrt{n}}$ where
$\lambda=\sqrt{n}(|\mu|h+\sigma \sqrt{h})$.

From Corollary \ref{cor12} it follows that for every $t<T$ and $a<\ln K$
the function $P_{xx}(t,x)$ is continuous in $x$ on the closed interval 
$[s(t),a]$, so we can write
$$
P(t+h,\mu t+x)=P(t+h,s(t+h))+P_x(t+h,s(t+h))\frac{\lambda}{\sqrt{n}}+P_{xx}(t+h,s(t+h))
\frac{\lambda^2}{2n}+\alpha
$$
where $\al=\al(h)$ satisfies $\lim_{h\to 0}(\frac{\alpha}{h})=0.$
From the property of smooth fit (see ~\cite{Ku}) it follows that
$P_x(t+h,s(t+h))=\psi_x(s(t+h))$,
and so for some $C>0$ independent of $n$,
$$
|P(t+h,\mu t+x)-\psi(\mu t+x)|\leq \frac{C}{n}.
$$
\end{proof}
Using (\ref{3.30_err_es}) and the above lemma we obtain
\begin{eqnarray}\label{3.31}
&|\bfE [\big\{u(\tau^{(n)}\wedge \beta^{(n)} ,X^{(n)}_{\tau^{(n)}\wedge
\beta^{(n)}})-e^{-r\tau^{(n)}\wedge\beta^{(n)}}\big(\psi(\mu\tau^{(n)}\wedge
\beta^{(n)}\\
&+X^{(n)}_{\tau^{(n)}\wedge\beta^{(n)}} ) \big)\big\}\bbI_{\{\tau_\alpha^{(n)}\leq 
\sigma^{(n)}\wedge\beta^{(n)}\}}]|\leq \frac{C}{n}.\nonumber
\end{eqnarray}
Hence, we are done with the first term in the right hand side of (\ref{3.29})
and it remains to estimate the second one.
Since $\sigma^{(n)}<\tau_\alpha^{(n)}\leq T$ the process $X^{(n)}$ is stopped
near the writer's boundary. Namely, we have
$$
\ln K-|\mu|h-\sigma\sqrt h<\mu \sigma^{(n)}+X^{(n)}_{\sigma^{(n)}}\leq \ln K.
$$
Since $P(t,\ln K)=\delta$ when $t\leq \beta$, $\beta^{(n)}-\beta<h$ and
$P$  is Lipschitz  continuous (see Theorem 8.1 of \cite{Ku}) we obtain that  
$$
|P(\sigma^{(n)},\mu\sigma^{(n)}+X^{(n)}_{\sigma^{(n)}})-\delta|\leq 
\frac{C}{\sqrt{n}}
$$
for some $C>0$ independent of $n$. Hence,
\begin{equation}\label{3.32}
\bfE [\big(u(\sigma^{(n)},X^{(n)}_{\sigma^{(n)}})-e^{-r\sigma^{(n)}}\delta
\big)\bbI_{\{\sigma^{(n)}<\tau_\alpha^{(n)}\}}]\leq \frac{C}{\sqrt{n}}.
\end{equation}
It follows that there exists $C>0$ independent of $n$ such that for every 
$x\leq \ln K$,
\begin{equation}\label{P1_low_x<}
-\frac{C}{\sqrt{n}}<P_1^{(n)}(x)-P(0,x).
\end{equation}

Next, we will derive a lower bound for the second approximation function 
$P_2^{(n)}(x)$ defined by (\ref{P^n2}), still assuming that $x\leq \ln K$.
According to (\ref{3.24}) in order to obtain
\begin{equation}\label{3.33}
P_2^{(n)}(x)-P(0,x)\geq -\frac{C}{n^{2/3}}.
\end{equation}
it suffices to show that
\begin{equation}\label{3.34}
\bfE [u(\tau^{(n)}\wedge\sigma^{(n)},X^{(n)}_{\tau^{(n)}\wedge\sigma^{(n)}})]
-P_2^{(n)}(x)\leq \frac{C}{n^{2/3}}.
\end{equation}
We have
\begin{eqnarray}\label{3.35}
&\bfE [u(\tau^{(n)}\wedge\sigma^{(n)},X^{(n)}_{\tau^{(n)}\wedge\sigma^{(n)}})]
-P_2^{(n)}(x)\leq\\
&\bfE [u(\tau^{(n)}\wedge\sigma^{(n)},X^{(n)}_{\tau^{(n)}\wedge\sigma^{(n)}})
-e^{-r\tau_\alpha^{(n)}\wedge\sigma^{(n)}}\Big(\psi(\mu\tau_\alpha^{(n)}+
X^{(n)}_{\tau_\alpha^{(n)}} )\bbI_{\{\tau_\alpha^{(n)}\leq \sigma^{(n)}\}}\nonumber\\
&+\big(\psi(\mu\sig^{(n)}+X^{(n)}_{\sig^{(n)}} )+\delta\big)\bbI_{
\{\sigma^{(n)}<\tau_\alpha^{(n)}\}} \Big)]
=\bfE [\big\{u(\tau^{(n)}\wedge \beta^{(n)} ,X^{(n)}_{\tau^{(n)}\wedge\beta^{(n)}})
-e^{-r\tau^{(n)}\wedge\beta^{(n)}}\big(\psi(\mu\tau^{(n)}\wedge\beta^{(n)}
\nonumber\\
&+X^{(n)}_{\tau^{(n)}\wedge\beta^{(n)}} ) \big)\big\}\bbI_{\{\tau_{
\alpha}^{(n)}\leq \sigma^{(n)}\wedge\beta^{(n)}\}}]
+ \bfE [\big\{u(\sigma^{(n)},X^{(n)}_{\sigma^{(n)}})-e^{-r\sigma^{(n)}}(\psi(
\mu\sig^{(n)}+X^{(n)}_{\sig^{(n)}} )+\delta)\big\}\bbI_{\{
\sigma^{(n)}<\tau_\alpha^{(n)}\}}]\nonumber\\
&+\bfE [\big\{u(\tau^{(n)} ,X^{(n)}_{\tau^{(n)}})-e^{-r\tau_\alpha^{(n)}}\big(\psi(
\mu\tau_\alpha^{(n)}+X^{(n)}_{\tau_\alpha^{(n)}} ) \big)\big\}\bbI_{\{\beta^{(n)}
<\tau_\alpha^{(n)}\leq \sigma^{(n)}\}}]\nonumber
\end{eqnarray}
Indeed, the first inequality is true since $P^{(n)}_2(x)$ is defined as the sup
on  $\tau\in\cT^{(n)}$ and we chose a specific one, i.e.  $\tau_\alpha^{(n)}$. 
The equality is true due to the same reason that (\ref{3.29}) holds true. 
We see that the first term in the right hand side of (\ref{3.35}) is the same 
as the first term in (\ref{3.29}) and by (\ref{3.31}) it is less then 
$\frac{C}{n}$ for some constant $C$. The second term is nonpositive because
 for every $(t,x)$ we have $P(t,x)\leq \psi(x)+\delta$ and $u(t,x)=e^{-rt}P(t,\mu t+x)$ so we can just remove it from the equation. The last term is the same as the last term of (\ref{3.29}) and from Lemma (\ref{lem 3.8}) we obtain that this term is less or equal than $\frac{C}{n^{2/3}}$ for an appropriate $C$. 
These arguments yield (\ref{3.34}) and hence  (\ref{3.33}), as well.
For the upper bound we already know that $P^{(n)}_1(x)-P(0,x)\leq 
\frac{C}{n^{3/4}}$ and from the definition of $P^{(n)}_1$ and $P^{(n)}_2$ it 
is not hard to see that $|P^{(n)}_2-P^{(n)}_1|\leq \frac{C}{\sqrt{n}}$. 
It follows from above that there exist $C>0$ such that for every $x\leq \ln K$,
\begin{equation}\label{P2 up_low_x<}
-\frac{C}{n^{3/2}}\leq P_2^{(n)}(x)-P(0,x)\leq \frac{C}{\sqrt{n}}.
\end{equation}

\subsection{Case $x > \ln K$}
We begin with the upper bound on $P^{(n)}_1$. We will show first that
\begin{equation}\label{3.36}
P^{(n)}_1(x)-P(0,x)\leq \sup _{\tau\in\cT^{(n)}}\bfE [\sum_{j=1}^{h^{-1}
(\tau\wedge\sigma^{(n)})}\cD u((j-1)h,X^{(n)}_{(j-1)h})].
\end{equation}
The proof is similar to the proof of (\ref{3.7}), we just have to show that for
 every $\tau\in\cT^{(n)}$,
\begin{equation}\label{3.37}
P(\tau\wedge \sigma^{(n)},\mu \tau\wedge \sigma^{(n)}+X^{(n)}_{\tau\wedge 
\sigma^{(n)}})
\end{equation}
$$
\geq \psi(\mu \tau+X^{(n)}_{\tau})\bbI_{\{\tau\leq\sigma^{(n)}\}}+
\big(\delta-Ke(|\mu|h+2\kappa\sqrt{ h})\big)\bbI_{\{\sigma^{(n)}<\tau\}}.
$$
On the set $\tau\leq \sigma^{(n)}$ this inequality is clear since $P(t,x)
\geq \psi(x).$
For the case $\sigma^{(n)}<\tau$ observe that because $x>\ln K$ we must have 
$$
\ln K<\mu\sigma^{(n)}+X^{(n)}_{\sigma^{(n)}}<\ln K
+|\mu|h+2\kappa \sqrt{h}.
$$
By Theorem 8.1 in ~\cite{Ku} the right derivative $F_x(t,K+)$ at $K$ satisfies
$0>F_x(t,K+)>-1$ for any $t$, and so $0\leq F(t,K)-F(t,K+C\lambda)\leq  
C\lambda$ for each $C>0$ provided $0\leq\la\leq\la(C)$ is small enough.
Assume $0<\lambda<1$, then $e^{\lambda}-1 \leq \lambda e^{\lambda} \leq 
\lambda e$. Hence, taking $C=Ke$ we have
\begin{equation*}
P(t,\ln K)-P(t,\ln K+\lambda)=F(t,K)-F(t,Ke^{\lambda})\leq
 F(t,K)-F(t,K+Ke\lambda)\leq Ke\lambda.
\end{equation*}
Put $\lambda=|\mu|h+\kappa\sqrt{h}$ then for $\sigma^{(n)}<\tau$ and 
sufficiently large $n$, 
$$
P(\sigma^{(n)},\mu\sigma^{(n)}+X^{(n)}_{\sigma^{(n)}})\geq 
P(\sigma^{(n)},\ln K+\lambda)\geq\delta-Ke\lambda.
$$
 Hence, we obtain (\ref{3.37}) which yields also (\ref{3.35}).
To bound the right hand side of (\ref{3.36}) we split it similarly to the case  
$x\leq\ln K$ (see (\ref{3.8})) according to the three different regions 
$\textbf{C, B}$ and $\textbf{S}$. Since our process starts at $x>\ln K$, if 
$((j-1)h,X^{(n)}_{(j-1)h})\in \textbf{B}$ for some $j$ then 
this must happen after the time $\beta$, and so we can use (\ref{3.10}). 
The part that belongs to the region $\textbf{S}$ is non positive so we can 
ignore it, and so we will be left only with the region $\textbf{C}$.
\begin{lem}\label{lem 3.11}
For the discrete process $X^{(n)}_t$ such that $X^{(n)}_0=x>\ln K$ we have
$$
\bfE [\sum_{j=1}^{h^{-1}(\tau\wedge\sig^{(n)})}\cD |u((j-1)h,X_{(j-1)h}^{(n)})|
\bbI_{\{((j-1)h,X_{(j-1)h})\in\textbf{C}\}}]\leq Cn^{-3/4}.
$$
\end{lem}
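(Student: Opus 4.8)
The plan is to mirror the proof of Proposition~\ref{prop 3.3}, the only change being that the analysis is organised around the writer's boundary $\{x=\ln K\}$ instead of the holder's boundary $s(t)$. First I would split the sum at $k_\beta=\min\{k:\,kh\geq\beta\}$. For $j\geq k_\beta$ we are past time $\beta$, where by Proposition~\ref{firstProp}(ii) the game and American prices coincide, so $u(t,x)=e^{-rt}P_A(t,x+\mu t)$ there and the part of $\textbf{C}$ with $t\geq\beta$ is exactly the region treated in \cite{L}; hence the estimate (\ref{3.9}) of Proposition~\ref{prop 3.1} bounds the corresponding part of the sum by $C(\sqrt{\ln n}/n)^{4/5}\leq Cn^{-3/4}$, uniformly in the starting point. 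It remains to estimate $\bfE[\sum_{j=1}^{k_\beta-1}|\cD u((j-1)h,X^{(n)}_{(j-1)h})|\bbI_{\{((j-1)h,X^{(n)}_{(j-1)h})\in\textbf{C}\}}]$.

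For these indices the key observation is that, having started at $x>\ln K$, the process cannot cross the band $(\ln K-|\mu|h-2\kappa\sqrt{h},\,\ln K+|\mu|h+2\kappa\sqrt{h})$ in a single step, because its width $2|\mu|h+4\kappa\sqrt{h}$ exceeds the maximal one-step change $|\mu|h+\kappa\sqrt{h}$ of $\mu h[t/h]+X^{(n)}_t$. Hence, once $n$ is large enough that $x>\ln K+|\mu|h+2\kappa\sqrt{h}$, an easy induction gives $\mu(j-1)h+X^{(n)}_{(j-1)h}\geq\ln K+|\mu|h+2\kappa\sqrt{h}$ for every $j<k_\beta$ with $(j-1)h<\sigma^{(n)}$. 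Consequently the cell $\{(s,z+\mu s):(j-1)h\leq s\leq jh,\ |z-X^{(n)}_{(j-1)h}|\leq\kappa\sqrt{h}\}$ lies strictly inside the continuation region of the game option --- there $P$ is evaluated only at points $\geq\ln K+\kappa\sqrt{h}$, where by Proposition~\ref{firstProp}(iii) it is smooth and solves $(\partial_t+\textbf{A})P=0$, so $\delta(u)\equiv u_t+\frac{\kappa^2}{2}u_{xx}=0$. Therefore, exactly as in the derivation of (\ref{3.14}) via Proposition~\ref{dis_operator_prop 2.2}, $|\cD u(t,x)|\leq\frac{\sqrt{h}}{\kappa}\int_t^{t+h}ds\int_{x-\kappa\sqrt{h}}^{x+\kappa\sqrt{h}}|u_{tt}(s,z)|\,dz$. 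Note that since $s(t)\leq s(\beta)<\ln K$ for $t<\beta$, these cells stay bounded away from $s(t)$, so no holder's-boundary analysis enters. Applying the Berry--Esseen bound (\ref{Berry-Esseen estimate}), summing over $j$ and interchanging the order of integration just as in Proposition~\ref{prop 3.3}, the sum over $1\leq j<k_\beta$ is bounded by $\frac{C}{n}\int_h^{\beta}\frac{ds}{\sqrt{s}}\int_{\ln K}^{\infty}|u_{tt}(s,z)|\,dz$ up to a harmless term of order $1/n$ from $\cD u(0,x)$.

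To estimate this integral I would fix $k'>\ln K$ and split the $z$-range into $(\ln K,k']$ and $(k',\infty)$ and the $s$-integral at $\sqrt{h}$. The part over $(k',\infty)$ is negligible because the estimates at the end of Section~\ref{sec4} (Corollary~\ref{cor_infty} and the preceding decay bounds) give $u_{tt}(s,z)\in L^2$ there with rapid polynomial decay in $z$, uniformly in $s$; the part over $(\ln K,k']\times[\sqrt{h},\beta]$ uses $s^{-1/2}\leq h^{-1/4}$ together with $u_{tt}\in L^1$ of that domain --- which follows from the $D_0'$-version of Corollary~\ref{cor2.6}, available because by Proposition~\ref{thm2.5} the decomposition $P_t=v+w$ with all its properties holds in $D_0'=(0,\beta)\times(\ln K,k')$ --- and contributes $\leq Cn^{1/4}$, hence $Cn^{-3/4}$ after multiplication by $C/n$; and over $(\ln K,k']\times[h,\sqrt{h}]$ the contribution of the $v$-part of $u_{tt}$ is handled by the Cauchy--Schwarz inequality as in (\ref{3.18}) (since $v_t,v_x\in L^p$ for every $p$), giving a term of order $\sqrt{\ln n}$. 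The only remaining piece --- and the main obstacle --- is the $w$-part of $u_{tt}$ over $(\ln K,k']\times[h,\sqrt{h}]$, which is the exact counterpart of the hardest step of Proposition~\ref{prop 3.3}: setting $\tilde z(t,x)=e^{-rt}w(t,x+\mu t)$ and $z(t,x)=\tilde z(T-t,\frac{\kappa}{\sqrt{2}}x)$ one obtains a bounded solution of the heat equation with the boundary data $w(t,\ln K)=P_{A,t}(\beta,\ln K)$, $w(t,k')=0$, $w(\beta,\cdot)=0$ of the $D_0'$-decomposition, represents it through the fundamental solution $H$ as in (\ref{z-function}), uses that $w_t$ and $w_x$ do not change sign in $D_0'$, and then tracks where the arguments of $H$ and of its derivatives can vanish, verifying that --- apart from the unavoidable $\sqrt{h}$-separation at $z=\ln K$ --- these zeros are bounded away from the region of integration by a distance independent of $n$; the change of variables $\rho=\sqrt{h/(2(T-t-\tau))}$ then yields $\leq Cn^{1/4}$, exactly as in the computation in the proof of Proposition~\ref{prop 3.3} leading to $\textbf{I}\leq C_6n^{1/4}$. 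Combining the two ranges of $j$ gives the asserted bound $Cn^{-3/4}$.
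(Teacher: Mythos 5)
Your proposal is correct and follows essentially the same route as the paper: the paper's own proof of Lemma \ref{lem 3.11} is explicitly only a sketch that reduces to the region $t<\beta$, derives the analogue (\ref{3.38}) of (\ref{3.15}) with lower limit $c(s)=\ln K-\mu s+\kappa\sqrt h$, splits at $k'$, handles the far range by Corollary \ref{cor_infty} and Cauchy--Schwarz, and repeats the procedure (\ref{3.15})--(\ref{lamstat3.5}) on the near range using the $D_0'$-version of Proposition \ref{thm2.5} --- exactly the steps you spell out. Your expansion of the sketch (the one-step band-crossing argument keeping the process in the continuation region, the $L^1$/Cauchy--Schwarz/heat-kernel trichotomy for the near range) fills in the details the paper delegates to Proposition \ref{prop 3.3} and is consistent with them.
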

\begin{proof}
It suffices to show that
\begin{equation}\label{inequality}
\bfE [\sum_{j=1}^{k_{\beta}\wedge(\sigma^{(n)}/h)}\cD |u((j-1)h,X_{(j-1)h}^{(n)})|
\bbI_{\{((j-1)h,X_{(j-1)h})\in\textbf{C}\}}]\leq Cn^{-3/4}
\end{equation} 
for some $C>0$ independent of $n$ 
since after time $\beta^{(n)}$ we come back to the American option case.
This is done in the same way as in Proposition \ref{prop 3.3}, and so we 
provide only a sketch of the proof. Let $c(s)=\ln K-\mu s+ \kappa\sqrt{h}$ 
 then similarly to the proof of Proposition \ref{prop 3.3} we obtain
\begin{equation}\label{3.38}
\sum_{j=0}^{k_{\beta}-1}E(|\cD u(jh,X^{(n)}_{jh})|
\bbI_{\{(jh,X^{(n)}_{jh})\in C \}\cap\{jh < \sigma^{(n)}\}})
\leq \frac{C_2}{n}+\frac{C_1}{n}\int_{h}^{\beta}\frac{ds}{\sqrt{s}}
\int_{ c(s)}^{\infty }dz|\frac{\partial^2 u }{\partial t^2}(s,z)|.
\end{equation}
Let $\frac{\sqrt{2}}{\kappa}(\ln K+|\mu|T)<k'$ and split the integral in 
(\ref{3.38}) into two parts
\begin{eqnarray}\label{3.39}
&\int_{h}^{\beta}\frac{ds}{\sqrt{s}}\int_{ c(s)}^{\infty }dz|\frac{\partial^2 
u }{\partial t^2}(s,z)|=\\
&\int_{h}^{\beta}\frac{ds}{\sqrt{s}}\int_{ c(s)}^{k'-\mu s }dz|\frac{\partial^2
 u }{\partial t^2}(s,z)|+\int_{h}^{\beta}\frac{ds}{\sqrt{s}}\int_{k'-
 \mu s}^{\infty }dz|\frac{\partial^2 u }{\partial t^2}(s,z)|.\nonumber
 \end{eqnarray}
Let $E=\{(s,z): 0<s<\beta,k'-\mu s <z<\infty\}$ then by Corollary 
\ref{cor_infty} we see that $\frac{\partial^2 u }{\partial t^2}(s,z)\in L^2[E]$, 
and so we obtain similarly to (\ref{3.18}) that for some constant $C>0$,
\begin{equation}\label{3.40}
\int_{h}^{\beta}\frac{ds}{\sqrt{s}}\int_{k'-\mu s}^{\infty }dz|
\frac{\partial^2 u }{\partial t^2}(s,z)|<C\ln n.
\end{equation}
In the first integral in the right hand side of (\ref{3.39}) we do the same 
procedure as in (\ref{3.15})-(\ref{lamstat3.5})  relying on Proposition 
\ref{thm2.5} and deriving that for some constant $C>0$,
\begin{equation}\label{3.41}
\int_{h}^{\beta}\frac{ds}{\sqrt{s}}\int_{ c(s)}^{k'-\mu s }dz|\frac{
\partial^2 u }{\partial t^2}(s,z)|<Cn^{1/4}.
\end{equation}
Combining (\ref{3.38})--(\ref{3.41}) we obtain (\ref{inequality}) and
 complete the proof of the lemma.
\end{proof}

An estimate for the lower bound of  $P^{(n)}_1(x)-P(0,x)$ when 
$x>\ln K$ is done similarly to the case $x\leq\ln K$. As in that case 
we use the stopping time $\tau^{(n)}$ from (\ref{3.23}) and from the
 above we see that (\ref{3.24}) is true also for the case under consideration.
 We consider again $\tau_\alpha^{(n)}$ defined before Lemma \ref{lem 3.8} and
  similarly to (\ref{3.25}) obtain that
\begin{eqnarray}\label{3.42}
&\big |\bfE [u(\tau_\alpha^{(n)}\wedge\sigma^{(n)},
X_{\tau_\alpha^{(n)}\wedge
\sigma^{(n)}})]\\
&-\bfE [e^{-r\tau_\alpha^{(n)}\wedge\sigma^{(n)}}\Big(\psi(\mu \tau_\alpha^{(n)}
+X^{(n)}_{\tau_\alpha^{(n)}})\bbI_{\{\tau_\alpha^{(n)}\leq\sigma^{(n)}\}}+
\big(\delta-Ke(|\mu|\sqrt{h}+\sigma{h})\big) \bbI_{\{\sigma^{(n)}
<\tau_\alpha^{(n)}\}} \big) \Big)] \big |.\nonumber
 \end{eqnarray}
In order to estimate (\ref{3.42}) for $x>\ln K$ we only need to split it 
into two parts, one for $\tau_\alpha^{(n)}\leq \sigma^{(n)}$ and the other one 
for $\sigma^{(n)}<\tau_\alpha^{(n)}$.
This it true in view of the fact that if we begin with $x>\ln K$ and 
$\tau_\alpha^{(n)}\leq \sigma^{(n)}$ then we must have $\beta^{(n)}\leq
\tau_\alpha^{(n)}$, and so we are back to the American option case and can 
use Lemma \ref{lem 3.8} for this case. If $\sigma^{(n)}<\tau_\alpha^{(n)}$ 
then the process $X^{(n)}$ is stopped near the seller's
 boundary and similarly to (\ref{3.32}) we can use the Lipschitz property 
 of $P$ to obtain,
$$
\bfE [\big(u(\sigma^{(n)},X^{(n)}_{\sigma^{(n)}})-e^{-r\sigma^{(n)}}(\delta
-Ke(|\mu|h+2\kappa\sqrt{h})) \bbI_{\{\sigma^{(n)}<\tau_\alpha^{(n)}\}}\big)]
\leq\frac{C}{\sqrt{n}}.
 $$
From here we can proceed similarly to the case of $x\leq \ln K$ and obtain the 
lower bound for $P^{(n)}_1$ proving (\ref{P1_err}) for $P_1^{(n)}$. \qed

Next, we turn to the second approximation function $P^{(n)}_2$, still in 
the case of $x>\ln K$. For the upper bound we use Lemma \ref{lem 3.10}
as in the case $x\leq \ln K$ and proceed similarly to the proof of the upper
bound for the first approximation function $P^{(n)}_1$.
The proof of the lower bound is similar to the case  $x\leq \ln K$ and 
 we obtain the result observing that if $x>\ln K$ then 
 $P(t,x)<\psi(x)+\delta=\delta$ for any $t\in[0,T]$. \qed


\section{Computations}\label{sec6}
In this section we exhibit computations of price functions and free 
boundaries of game and American put options. All graphs of functions
related to game put options were plotted using the approximation function
   $P^{(2000)}_2$ (see (\ref{P^n2})). The graphs for the American put 
   options were computed using the approximation function $P_A^{(2000)}$
    from \cite{L}.
    
     Figure \ref{figure1}  shows both free boundaries of the holder and of the 
writer of a game put option and also the free boundary of the holder of an
American put option corresponding to the option parameters
$K =20,\ r = 0.02,\ \kappa =0.15,\ T=0.5,\ \delta = 0.15$.
Here $K$ is the strike of the option, $r$ is the interest rate, $\kappa$
 is the volatility, $T$ is the time to maturity and $\delta$ is the writer's
  cancelation penalty in the case of game option. 
  
  In Figure \ref{figure2} we plot the graphs of an American put option price
 function and of a game put option price functions with 
 $\delta = 1.0$ and $\delta = 1.5$ while other parameters are
$K =20,\ r = 0.02,\ \kappa =0.15,\ T=10$. To see what is what here we recall
that prices of game options do not exceed prices of corresponding American
options and higher penalties increase prices.

Figure \ref{figure3} shows the holder's free boundary for American and game
 put options where we use the same parameters as in Figure \ref{figure1} adding
 also plots of free boundaries for the game put options with penalty values
 $\delta = 0.3$ and $\delta = 0.5$.

\begin{figure}
    \centering
    \includegraphics[width=120mm,height=80mm]{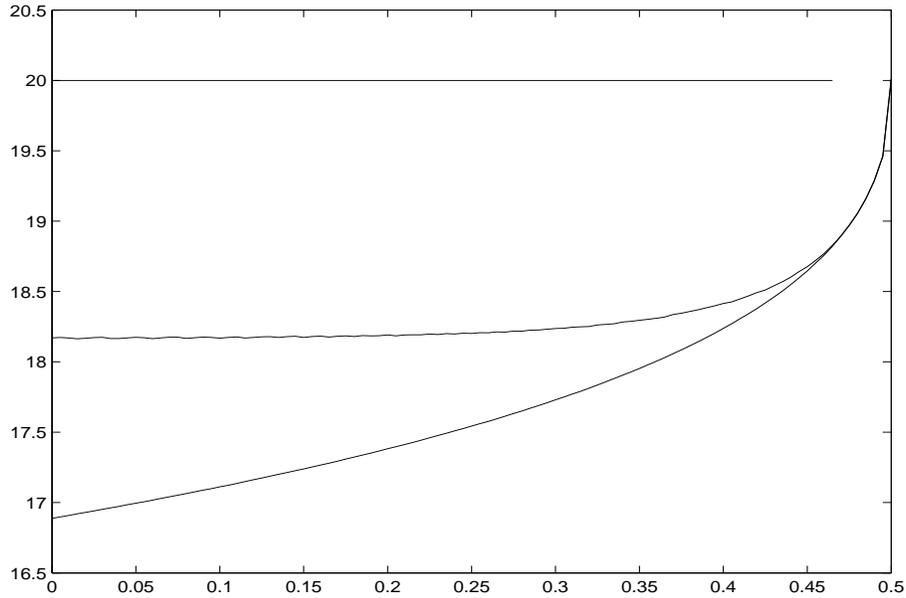}
    \caption{Free boundaries of American and game put options.}\label{figure1}
\end{figure}

\begin{figure}
    \includegraphics[width=120mm,height=80mm]{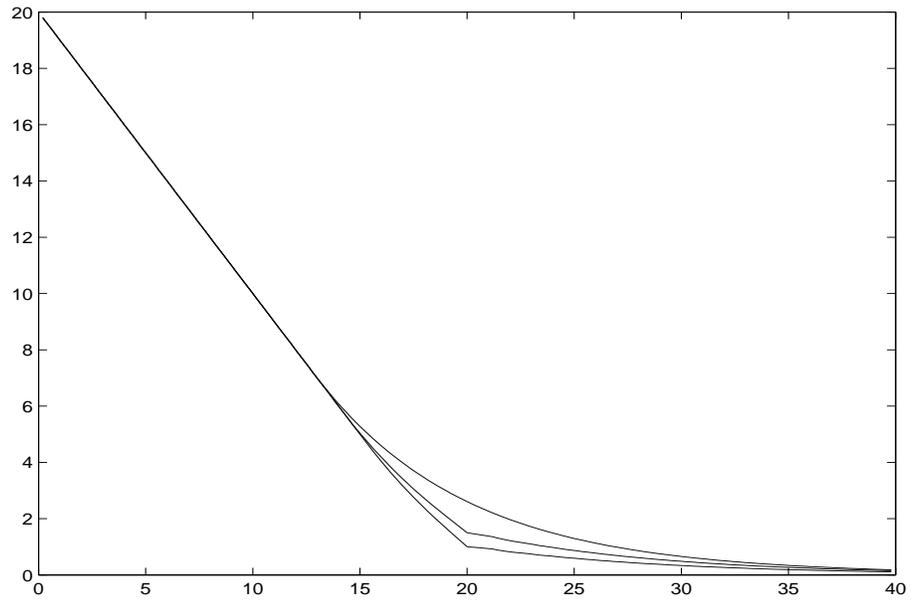}
    \caption{The price functions of American and game
     put options.}\label{figure2}
\end{figure}
\clearpage
\begin{figure}
    \includegraphics[width=120mm,height=80mm]{}
    \caption{Holder free boundaries of American and game put 
    options.}\label{figure3}
\end{figure}


\begin{thebibliography}{999}

\bibitem{BF} \textsc{Bensoussan, G.} and  \textsc{Friedman, V.} (1977): 
{\em Nonzero-sum stochastic differential games with stopping times
and free boundary problems}, Trans. AMS  231,  275-327.

\bibitem{BL} \textsc{Bensoussan, G.} and  \textsc{Lions, J. L.} (1982): 
{\em Application of
Variational Inequalities in Stochastic Control}, North-Holland, Amsterdam.

\bibitem{C} \textsc{Cannon, J. R.} (1984): {\em The One-Dimensional Heat
 Equation}, Addison-Wesley.

\bibitem{CJ} \textsc{Chassagneux, J. F.} (2009): {\em A discrete-time 
approximation for doubly reflected BSDEs}, Adv. Appl. Probab. 41, 101-130.


\bibitem{F} \textsc{Friedman, A.} (1964): {\em  Partial Differential
 Equation of Parabolic Type} Englewood Cliffs, N.J.:Prentice-Hall.

\bibitem{Fr} \textsc{Friedman, A.} (1982) {\em Variational Principles
 and Free-Boundary Problems}, Wiley, New York.

\bibitem{IW} \textsc{Ikeda, N.} and \textsc{Watanabe, S.} (1989):  
{\em Stochastic Differential Equations and Diffusion Processes,} 2nd. ed.
 North--Holland/Kodansha.

\bibitem{IK} \textsc{Iron, Yo.} and \textsc{Kifer, Yu.} (2011): 
{\em Hedging of swing game options in contninuous time,} Stochastics. 
 83, 365-404.

\bibitem{Ki} \textsc{Kifer, Y.}  (2000): {\em Game options,}
 Finance and Stoch. 4, 443--463.

\bibitem{K2} \textsc{Kifer, Y.} (2006):  {\em Error estimate for binomial
 approximation of game options}, Annals of Appl. Probab. 16, 984-1033.


\bibitem{KSt} \textsc{Kinderlehrer, D.} and \textsc{Stampacchia, G.} 
(1980): {\em An Introduction to Variational Inequalities and Their 
Applications}, Academic Press, New York.

\bibitem{KSh} \textsc{Karatzas,  I.} and \textsc{Shreve, S.} (1991):
 {\em Brownian Motion and Stochastic Calculus,} 2nd ed., Springer--Verlag,
  New York.
  
\bibitem{KS}   \textsc{Karatzas,  I.} and \textsc{Shreve, S.} (1998):
 {\em Methods of Mathematical Finance,}  Springer--Verlag, New York.

\bibitem{Ku} \textsc{Kunita, H.} and  \textsc{Seko, S.} {\em  Game call
 options and their exercise regions,} Tech. Report, NANZAN-TR-2004-06.

\bibitem{Ky} \textsc{Kyprianou, A. E.} (2004): {\em  Some calculations 
for Israeli options}, Finance and Stoch. 8, 73-86.

\bibitem{KK} \textsc{K\" uhn, C. and Kyprianou, A. E.} (2007): {\em 
Callable puts as composite exotic options,} Math. Finance 17, 487--502.


\bibitem{L} \textsc{Lamberton, D.} (1998): {\em Error estimate for the
 binomial approximation of American put option}, Annals of Appl. Probab. 8,
  206-233.
  
\bibitem{LM} \textsc{Lepeltier, J. P.} and  \textsc{Maingueneau, J. P.} 
(1984): {\em Le jeu de Dynkin en theorie generale sans l'hypothese de
 Mokobodski,} Stochastics 13, 24--44. 
 
 \bibitem{Sh} \textsc{Shiryaev, A} (1984): {\em Probability}, Springer,
 New York. 

\end{thebibliography}
\end{document}